\pgfplotsset{compat=newest}
\pgfplotsset{plot coordinates/math parser=false}
\newcommand{\xmark}{\ding{55}}%
\definecolor{darkgreen}{rgb}{0.1,0.5,0.1}
\definecolor{darkblue}{rgb}{0.1,0.1,0.9}
\newcommand{\sign}{\mathsf{sign}}
\newcommand{\E}{\mathbb{E}}
\newcommand{\trace}{\mathsf{Tr}}
\newcommand{\range}{\mathcal{R}}
\newcommand{\Span}{\mathsf{span}}
\newcommand{\MMSE}{\mathsf{MMSE}}
\newcommand{\ginvset}{\mathcal{G}}
\newcommand{\veps}{\bm{\varepsilon}}
\newcommand{\vphi}{\bm{\phi}}
\newcommand{\vpsi}{\bm{\psi}}
\newcommand{\mPhi}{\mat{\Phi}}
\newcommand{\mPsi}{\mat{\Psi}}
\newcommand{\T}{\top}
\renewcommand{\H}{{*}}
\newcommand{\vectorize}{\mathrm{vec}}
\newcommand{\lplq}[2]{\ensuremath{\ell^{#1} \rightarrow \ell^{#2}}}
\newcommand{\colpq}[2]{\ensuremath{|#1, #2|}}
\newcommand{\rowpq}[2]{\ensuremath{\overline{\underline{#1, #2}}}}
\newcommand{\ginv}[2]{\ensuremath{\mathsf{ginv}_{#1}(#2)}}
\newcommand{\pginv}[2]{\ensuremath{\mathsf{pginv}_{#1}(#2)}}
\newcommand{\ind}[1]{\mathbb{I}_{#1}}
\newcommand{\prox}{\ensuremath{\mathsf{prox}}}
\newcommand{\proj}{\ensuremath{\mathsf{proj}}}
\newcommand{\normlplq}[3]{\ensuremath{\norm{#3}_{\lplq{#1}{#2}}}}
\newcommand{\normrowpq}[3]{\ensuremath{\norm{#3}_{\rowpq{#1}{#2}}}}
\newcommand{\normcolpq}[3]{\ensuremath{\norm{#3}_{\colpq{#1}{#2}}}}
\newcommand{\singvals}[1]{\mathrm{spec}(#1)}
\definecolor{shade}{rgb}{0.93,0.93,0.93}
\newmdtheoremenv [backgroundcolor=shade, %
innertopmargin = -4pt , %
innerbottommargin =2pt , %
innerleftmargin = 1pt , %
innerrightmargin = 1pt, %
splittopskip = \topskip, %
skipbelow= 6pt, %
skipabove=6pt, %
topline=false,bottomline=false,leftline=false,rightline=false,]{theorem}{Theorem}[section]
\newmdtheoremenv[backgroundcolor=shade,%
innertopmargin = -4pt,%
innerbottommargin =2pt,%
innerleftmargin = 1pt,%
innerrightmargin = 1pt,%
splittopskip = \topskip,%
skipbelow= 6pt,%
skipabove=6pt,%
topline=false,bottomline=false,leftline=false,rightline=false,]{lemma}{Lemma}[section]
\newmdtheoremenv [backgroundcolor=shade, %
innertopmargin = -4pt , %
innerbottommargin =2pt , %
innerleftmargin = 1pt , %
innerrightmargin = 1pt, %
splittopskip = \topskip, %
skipbelow= 6pt, %
skipabove=6pt, %
topline=false,bottomline=false,leftline=false,rightline=false,]{corollary}{Corollary}[section]
\newmdtheoremenv [backgroundcolor=shade, %
innertopmargin = -4pt , %
innerbottommargin =2pt , %
innerleftmargin = 1pt , %
innerrightmargin = 1pt, %
splittopskip = \topskip, %
skipbelow= 6pt, %
skipabove=6pt, %
topline=false,bottomline=false,leftline=false,rightline=false,]{example}{Example}[section]
\newmdtheoremenv [backgroundcolor=shade, %
innertopmargin = -4pt , %
innerbottommargin =2pt , %
innerleftmargin = 1pt , %
innerrightmargin = 1pt, %
splittopskip = \topskip, %
skipbelow= 6pt, %
skipabove=6pt, %
topline=false,bottomline=false,leftline=false,rightline=false,]{definition}{Definition}[section]
\newmdtheoremenv [backgroundcolor=shade, %
innertopmargin = -4pt , %
innerbottommargin =2pt , %
innerleftmargin = 1pt , %
innerrightmargin = 1pt, %
splittopskip = \topskip, %
skipbelow= 6pt, %
skipabove=6pt, %
topline=false,bottomline=false,leftline=false,rightline=false,]{remark}{Remark}[section]
\newmdtheoremenv [backgroundcolor=shade, %
innertopmargin = -4pt , %
innerbottommargin =2pt , %
innerleftmargin = 1pt , %
innerrightmargin = 1pt, %
splittopskip = \topskip, %
skipbelow= 6pt, %
skipabove=6pt, %
topline=false,bottomline=false,leftline=false,rightline=false,]{proposition}{Proposition}[section]
\providecommand{\cref}[1]{Chapter~\ref{chap:#1}}
\providecommand{\R}{\ensuremath{\mathbb{R}}}
\providecommand{\C}{\ensuremath{\mathbb{C}}}
\providecommand{\abs}[1]{\left|#1\right|}
\providecommand{\norm}[1]{\left\lVert#1\right\rVert}
\providecommand{\inprod}[1]{\langle#1\rangle}
\providecommand{\set}[1]{\left\lbrace#1\right\rbrace}
\providecommand{\bydef}{\overset{\mathrm{def}}{=}}
\providecommand{\diag}{\mathop{\mathrm{diag}}}
\providecommand{\rank}{\mathop{\mathrm{rank}}}
\renewcommand{\vec}[1]{\ensuremath{\mathbf{#1}}}
\providecommand{\mat}[1]{\ensuremath{\mathbf{#1}}}
\providecommand{\wh}[1]{\ensuremath{\widehat{#1}}}
\providecommand{\wt}[1]{\ensuremath{\widetilde{#1}}}
\providecommand{\mA}{\mat{A}} \providecommand{\mB}{\mat{B}}
\providecommand{\mC}{\mat{C}} \providecommand{\mD}{\mat{D}}
\providecommand{\mE}{\mat{E}} \providecommand{\mH}{\mat{H}}
\providecommand{\mF}{\mat{F}}
\providecommand{\mI}{\mat{I}} 
\providecommand{\mK}{\mat{K}}  
\providecommand{\mN}{\mat{N}}
\providecommand{\mM}{\mat{M}} \providecommand{\mP}{\mat{P}} 
\providecommand{\mQ}{\mat{Q}} \providecommand{\mR}{\mat{R}}
\providecommand{\mS}{\mat{S}} \providecommand{\mU}{\mat{U}} 
\providecommand{\mV}{\mat{V}} \providecommand{\mT}{\mat{T}}
\providecommand{\mW}{\mat{W}}
\providecommand{\mSigma}{\mat{\Sigma}}
 \providecommand{\mG}{\mat{G}}
\providecommand{\mPi}{\mat{\Pi}}
\providecommand{\mX}{\mat{X}}\providecommand{\mY}{\mat{Y}}
\providecommand{\mZ}{\mat{Z}}
\providecommand{\va}{\vec{a}} \providecommand{\vb}{\vec{b}}
\providecommand{\vc}{\vec{c}} 
\providecommand{\ve}{\vec{e}} 
\providecommand{\vg}{\vec{g}}
\providecommand{\vm}{\vec{m}} \providecommand{\vn}{\vec{n}}
\providecommand{\vu}{\vec{u}} \providecommand{\vw}{\vec{w}}
\providecommand{\vU}{\vec{U}}
\providecommand{\vx}{\vec{x}} \providecommand{\vy}{\vec{y}}
\providecommand{\vz}{\vec{z}} 
 \providecommand{\vv}{\vec{v}}
\providecommand{\vsigma}{\vec{\sigma}}
\DeclareMathOperator*{\argmin}{arg\,min}
\DeclareMathOperator*{\minimize}{minimize}
\renewcommand{\diag}{\mathsf{diag}}
\newcommand{\blockdiag}{\mathsf{blockdiag}}
\title{Beyond Moore-Penrose\\Part I: Generalized Inverses that Minimize Matrix Norms}
\author{Ivan Dokmani\'c and R\'emi Gribonval}
\date{}
\begin{document}

\maketitle
\normalem


\begin{abstract}
This is the first paper of a two-long series in which we study linear generalized inverses that minimize matrix norms. Such generalized inverses are famously represented by the Moore-Penrose pseudoinverse (MPP) which happens to minimize the Frobenius norm. Freeing up the degrees of freedom associated with Frobenius optimality enables us to promote other interesting properties. In this Part I, we look at the basic properties of norm-minimizing generalized inverses, especially in terms of uniqueness and relation to the MPP.

We first show that the MPP minimizes many norms beyond those unitarily invariant, thus further bolstering its role as a robust choice in many situations. We then concentrate on some norms which are generally {\em not} minimized by the MPP, but whose minimization is relevant for linear inverse problems and sparse representations. In particular, we look at mixed norms and the induced $\ell^p \rightarrow \ell^q$ norms. An interesting representative is the \emph{sparse pseudoinverse} which we study in much more detail in Part II. 

Next, we shift attention from norms to matrices with interesting behaviors. We exhibit a class whose generalized inverse is always the MPP---even for norms that normally result in different inverses---and a class for which many generalized inverses coincide, but not with the MPP. Finally, we discuss efficient computation of norm-minimizing generalized inverses.
\end{abstract}

\section{Introduction}
\label{sec:introduction}

Generalized inverses arise in applications ranging from over- and underdetermined linear inverse problems to sparse representations with redundant signal dictionaries. The most famous generalized matrix inverse is the Moore-Penrose pseudoinverse, which happens to minimize a slew of matrix norms. In this paper we study various alternatives. We start by introducing some of the main applications and discussing motivations.

\paragraph{Linear inverse problems.} In discrete linear inverse problems, we seek to estimate a signal $\vx$ from measurements $\vy$, when they are related by a linear system, $\vy = \mA \vx + \vn$, $ \mA \in \C^{m \times n}$ up to a noise term $\vn$. Such problems come in two rather different flavors: overdetermined ($m>n$) and underdetermined ($m<n$). Both cases may occur in the same application, depending on how we tune the modeling parameters. For example, in computed tomography the entries of the system matrix quantify how the $i$th ray affects the $j$th voxel. If we target a coarse resolution (fewer voxels than rays), $\mA$ is tall and we deal with an overdetermined system. In this case, we may estimate $\vx$ from $\vy$ by applying a generalized (left) inverse, very often the Moore-Penrose pseudoinverse (MPP).\footnote{The Moore-Penrose pseudoinverse was discovered by Moore in 1920 \cite{Moore:1920}, and later independently by Penrose in 1955 \cite{Penrose:2008foa}.}
When the system is underdetermined ($m<n$), we need a suitable signal model to get a meaningful solution. As we discuss in Section \ref{sub:MPP}, for most common models (e.g. sparsity), the reconstruction of $\vx$ from $\vy$ in the underdetermined case is no longer achievable by a linear operator in the style of the MPP.

\paragraph{Redundant representations.} In redundant representations, we represent lower-dimensional vectors through higher-dimensional frame and dictionary expansions. The frame expansion coefficients are computed as $\bm{\alpha} = \mA^\H \vx$, where the columns of a fat $\mA$ represent the frame vectors, and $\mA^{\H}$ denotes its conjugate transpose. The original signal is then reconstructed as $\vx = \mD \bm{\alpha}$, where $\mD$ is a \emph{dual frame} of $\mA$, such that $\mD\mA^\H = \mI$. There is a unique correspondence between dual frames and generalized inverses of full rank matrices. Different duals lead to different reconstruction properties in terms of resilience to noise, resilience to erasures, computational complexity, and other figures of merit \cite{Kovacevic:2008tw}. It is therefore interesting to study various duals, in particular those optimal  according to various criteria; equivalently, it is interesting to study various generalized inverses.

\paragraph{Generalized inverses beyond the MPP.} In general, for a matrix $\mA \in \C^{m \times n}$, there are many different generalized inverses. If $\mA$ is invertible, they all match. The MPP $\mA^{\dagger}$ is special because it optimizes a number of interesting properties. Much of this optimality comes from geometry: for $m<n$, $\mA^{\dagger} \mA$ is an orthogonal projection onto the range of $\mA^\H$, and this fact turns out to play a key role over and over again. Nevertheless, the MPP is only one of infinitely many generalized inverses, and it is interesting to investigate the properties of others. As the MPP minimizes a particular matrix norm---the Frobenius norm\footnote{We will see later that it actually minimizes many norms.}---it seems natural to study alternative generalized inverses that minimize different matrix norms, leading to different optimality properties. Our initial motivation for studying alternative generalized inverses is twofold:

\begin{enumerate}[label=(\roman*)]
    \item \emph{Efficient computation:} Applying a sparse pseudoinverse
    requires less computation than applying a full one
    \cite{Dokmanic:2013bo,Li:2013cx,Casazza:ev,Krahmer:2012cy}.
    We could take advantage of this fact if we knew how to compute a sparse pseudoinverse that is in some sense stable to noise. This sparsest pseudoinverse may be formulated as
    \begin{equation}
        \label{eq:intro_l0}
        \ginv{0}{\mA} \bydef \argmin_{\mX}~\norm{\vectorize(\mX)}_{0}\ \text{subject to} \ \mA\mX = \mI
    \end{equation}
    where $\norm{\, \cdot \,}_{0}$ counts the total number of non-zero entries
    in a vector and $\vectorize(\cdot)$ transforms a matrix into a vector by stacking its columns. The non-zero count gives the naive complexity of applying
    $\mX$ or its adjoint to a vector. Solving the optimization problem \eqref{eq:intro_l0} is in general NP-hard \cite{natarajan95:_spars,DMA97}, although we will see that for most matrices $\mA$ finding \emph{a} solution is trivial and not very useful: just invert any full-rank $m \times m$ submatrix and zero the rest. This strategy is not useful in the sense that the resulting matrix is poorly conditioned. On the other hand, the vast literature establishing equivalence between
    $\ell^{0}$ and $\ell^{1}$ minimization suggests to replace \eqref{eq:intro_l0} by the
    minimization of the entrywise $\ell^1$ norm
    \newcommand{\vkl}{\mathbf{a}}
    \begin{equation}
      \label{eq:l1_instead_l0}
      \ginv{1}{\mA} \bydef \argmin_{\mX}~\norm{\vectorize(\mX)}_1\ \text{subject to} \ \mA \mX = \mI.
    \end{equation}
    Not only is \eqref{eq:l1_instead_l0} computationally tractable, but we will show in Part II that unlike inverting a submatrix, it also leads to well-conditioned matrices which are indeed sparse.

    \item \emph{Poor man's $\ell^{p}$ minimization:}
    Further motivation for alternative generalized inverses comes from an idea to construct a linear poor man's version of the $\ell^p$-minimal solution to an underdetermined set of linear equations $\vy = \mA \vx$. For a general $p$, the solution to
    \begin{equation}
      \wh{\vx} \bydef \argmin \norm{\vx}_p \ \text{subject to} \ \mA \vx = \vy,
    \end{equation}
    cannot be obtained by any linear operator $\mB$ (see Section \ref{sub:MPP}). That is, there is no $\mB$ such that $\vz \bydef \mB \vy$  satisfies $\wh{\vx} = \vz$ for every choice of $\vy$. The exception is $p=2$ for which the MPP does provide the minimum $\ell^{2}$ norm representation $\mA^{\dagger} \vy$ of $\vy$; Proposition \ref{prop:only_MPP_is_linear} and comments thereafter show that this is indeed the only exception. On the other hand, we can obtain the following bound, valid for any $\vx$ such that $\mA\vx = \vy$, and in particular for $\wh{\vx}$: 
    \begin{equation}
    \norm{\vz}_{p} =  \norm{\mB \mA \vx}_p \leq \normlplq{p}{p}{\mB \mA} \norm{\vx}_p,
    \end{equation}
    where $\normlplq{p}{p}{\, \cdot \,}$ is the operator norm on matrices induced by the $\ell^{p}$ norm on vectors.
    If $\mA\mB = \mI$, then $\vz = \mB \vy$ provides an admissible representation $\mA \vz = \vy$, and
    \begin{equation}
      \norm{\vz}_p \leq \normlplq{p}{p}{\mB \mA} \norm{\wh{\vx}}_p.
    \end{equation}
    This expression suggests that the best linear generalized inverse $\mB$ in the sense of
    minimal worst case $\ell^p$ norm blow-up is the one that minimizes $\normlplq{p}{p}{\mB
      \mA}$, motivating the definition of
      \begin{equation}
      \pginv{\lplq{p}{p}}{\mA} \bydef \argmin_{\mX}~\norm{\mX\mA}_{\lplq{p}{p}}\ \text{subject to} \ \mA \mX = \mI.
    \end{equation}
\end{enumerate}

\paragraph{Objectives.}
Both (i) and (ii) above are achieved by minimization of some matrix norm. The purpose of this paper is to investigate the properties of generalized inverses $\ginv{}{\cdot}$ and $\pginv{}{\cdot}$ defined using various norms by addressing the following questions:
\begin{enumerate}
\item Are there norm families that all lead to the same generalized inverse,
  thus facilitating computation?
\item Are there specific classes of matrices for which different norms lead
  to the same generalized inverse, potentially different from the MPP?
\item Can we quantify the stability of matrices that result from these optimizations? In particular, can we control the Frobenius norm of the sparse pseudoinverse $\ginv{1}{\mA}$, and more generally of any $\ginv{p}{\mA}$, $p \geq 1$, for some random class of $\mA$? This is the topic of Part II.
\end{enumerate}

\subsection{Prior Art}

Several recent papers in frame theory study alternative dual frames, or equivalently, generalized inverses.\footnote{Generalized inverses of full-rank matrices.} These works concentrate on existence results and explicit constructions of sparse frames and sparse dual frames with prescribed spectra \cite{Casazza:ev, Krahmer:2012cy}. Krahmer, Kutyniok, and Lemvig \cite{Krahmer:2012cy} establish sharp bounds on the sparsity of dual frames, showing that generically, for $\mA \in \C^{m \times n}$, the sparsest dual has $mn - m^2$ zeros. Li, Liu, and Mi \cite{Li:2013cx} provide bounds on the sparsity of duals of Gabor frames which are better than generic bounds. They also introduce the idea of using $\ell^p$ minimization to compute these dual frames, and they show that under certain circumstances, the $\ell^p$ minimization yields the sparsest possible dual Gabor frame. Further examples of non-canonical dual Gabor frames are given by Perraudin \emph{et al.}, who use convex optimization to derive dual Gabor frames with more favorable properties than the canonical one \cite{Perraudin:2014we}, particularly in terms of time-frequency localization.

Another use of generalized inverses other than the MPP is when we have some idea about the subspace we want the solution of the original inverse problem to live in. We can then apply the restricted inverse of Bott and Duffin \cite{Bott:1953ur}, or its generalizations \cite{Minamide:1970ug}. The authors in \cite{Suarez:2010jo} show how to compute \emph{approximate} MPP-like inverses with an additional constraint that the minimizer lives in a particular matrix subspace, and how to use such matrices to precondition linear systems.

An important use of frames is in channel coding where the need for robust reconstruction in presence of channel errors leads to the design of optimal dual frames \cite{JinsongLeng:ck, Lopez:2010ky}. Similarly, one can try to compute the best generalized inverse for reconstruction from quantized measurements \cite{Lammers:2008dk}. This is related to the concept of Sobolev dual frames which minimize matrix versions of $\ell^2$-type Sobolev norms \cite{Blum:2009cm} and which admit closed-form solutions.\footnote{Our \emph{poor man's $\ell^p$ minimization} in Section \ref{sec:normmin} has a similar formulation but without a closed-form solution.} The authors in \cite{Blum:2009cm} show that these alternative duals give a linear reconstruction scheme for $\Sigma \Delta$ quantization with a lower asymptotic reconstruction error than the cannonical dual frame (the MPP). Sobolev dual frames have also been used in compressed sensing with quantized measurements \cite{Gunturk:2012go}. Some related ideas go back to the Wexler-Raz identity and its role in norm-minimizing dual functions \cite{Daubechies:1994ev}.

A major role in the theory of generalized inverses and matrix norms is played by unitarily invariant norms, studied in depth by Mirsky \cite{Mirsky:1960fi}. Many results on the connection between these norms and the MPP are given by Zi\c{e}tak \cite{Zietak:1997ut}; we comment on these connections in detail in Section \ref{sec:normsmpp}. In their detailed account of generalized inverses \cite{BenIsrael:2003fi}, Ben-Israel and Greville use the expression \emph{minimal properties of generalized inverses}, but they primarily concentrate on variations of the square-norm minimality. Additionally, they define a class of non-linear generalized inverses corresponding to various metric projections. We are primarily concerned with generalized inverses that are themselves matrices, but one can imagine various \emph{decoding rules} that search for a vector satisfying a model, and being consistent with the measurements \cite{Bourrier:2014dg}. In general, such decoding rules are not linear.

Finally, sparse pseudoinverse was previously studied in \cite{Dokmanic:2013bo}, where it was shown empirically that the minimizer is indeed a sparse matrix, and that it can be used to speed up the resolution of certain inverse problems.

\subsection{Our Contributions and Paper Outline}

We study the properties of generalized inverses corresponding to norms\footnote{Strictly speaking, we also consider {\em quasi}-norms (typically for $0< p,q <1$)} listed in Table~\ref{tab:norm-list}, placed either on the candidate inverse $\mX$ itself ($\ginv{}{\cdot}$) or on the projection $\mX \mA$ ($\pginv{}{\cdot}$). 

A number of relevant definitions and theoretical results are laid out in Section~\ref{sec:theory}. In Section \ref{sec:normmin} we put forward some preliminary results on norm equivalences with respect to norm-minimizing generalized inverses.
We also talk about poor man's $\ell^p$ minimization, by discussing generalized inverses that minimize the worst case and the average case $\ell^p$ blowup. These inverses generally do not coincide with the MPP. We then consider a property of some random matrix ensembles with respect to norms that do not lead to the MPP, and show that they satisfy what we call the {\em unbiasedness property}.

Section \ref{sec:normsmpp} discusses classes of norms that lead to the MPP. We extend the results of Zi\c{e}tak on unitarily invariant norms to left-unitarily invariant norms which is relevant when minimizing the norm of the projection operator $\mX \mA$, which is in turn relevant for {\em poor man's $\ell^p$ minimization} (Section \ref{sub:poor_mans_lp_minimization}). We conclude Section \ref{sec:normsmpp} by a discussion of norms that almost never yield the MPP. We prove that most mixed norms (column-wise and row-wise) almost never lead to the MPP.  A particular representative of these norms is the entrywise $\ell^1$ norm giving the \emph{sparse pseudoinverse}. Two fundamental questions about sparse pseudoinverses---those of uniqueness and stability---are discussed in Part II. 

While Section \ref{sec:normsmpp} discusses norms, in Section \ref{sec:same_inverse_for_many_norms} we concentrate on matrices. We have seen that many norms yield the MPP for all possible input matrices and that some norms generically do not yield the MPP. In Section \ref{sec:same_inverse_for_many_norms} we first discuss a class of matrices for which some of those latter norms in fact \emph{do} yield the MPP. It turns out that for certain matrices whose MPP has ``flat'' columns (\emph{cf.} Theorem \ref{th:TFUC}), $\ginv{\nu}{\mA}$ contains the MPP for a large class of mixed norms $\nu$, also those that normally do not yield the MPP. This holds in particular for partial Fourier and Hadamard matrices.  Next, we exhibit a class of matrices for which many generalized inverses coincide, but \emph{not} with the MPP.

Finally, in Section \ref{sec:computation} we discuss how to efficiently compute many of the mentioned pseudoinverses. We observe that in some cases the computation simplifies to a vector problem, while in other cases it is indeed a full matrix problem. We use the alternating-direction method of multipliers (ADMM) \cite{Parikh:2013} to compute the generalized inverse, as it can conveniently address both the norms on $\mX$ and on $\mX \mA$.

\subsection{Summary and Visualization of Matrix Norms}

We conclude the introduction by summarizing some of the results and norms in Table \ref{tab:norm-list} and using the \emph{matrix norm cube} in Figure \ref{fig:norm-cube}---a visualization gadget we came up with for this paper. We place special emphasis on MPP-related results.

The norm cube is an effort to capture the various equivalences between matrix norms for particular choices of parameters. Each point on the displayed planes corresponds to a matrix norm. More details on these equivalences are given in Section \ref{sec:normmin}. For example, using notations in Table \ref{tab:norm-list}, the Schatten $2$-norm equals the Frobenius norm as well as the entrywise $2$-norm. The induced $\lplq{2}{2}$ norm equals the Schatten $\infty$-norm, while the induced $\lplq{1}{p}$ norm equals the largest column $p$-norm, that is to say the $\colpq{p}{\infty}$ columnwise mixed norm. 

For many matrix norms $\nu$, we prove (\emph{cf.} Corollary \ref{cor:classicalMPP}) that $\ginv{\nu}{\mA}$ and $\pginv{\nu}{\mA}$ always contain the MPP. For other norms (\emph{cf.} Theorem~\ref{thm:not_always_mpp}) we prove the existence of matrices $\mA$ such that $\ginv{\nu}{\mA}$ (resp. $\pginv{\nu}{\mA}$) does not contain the MPP. This is the case for poor man's $\ell^{1}$ minimization, both in its worst case flavor $\pginv{\lplq{1}{1}}{\mA}=\pginv{\colpq{1}{\infty}}{\mA}$ and in an average case version $\pginv{\rowpq{2}{1}}{\mA}$, \emph{cf.} Section~\ref{sub:poor_mans_lp_minimization} and Proposition~\ref{prop:avg_case_minimization}.

A number of questions remain open. Perhaps the main group is to induced norms for $q_{\text{o}} \neq 2$ and their relation to the MPP.

\begin{figure}[h!]
  \centering
  \includegraphics[width=\textwidth]{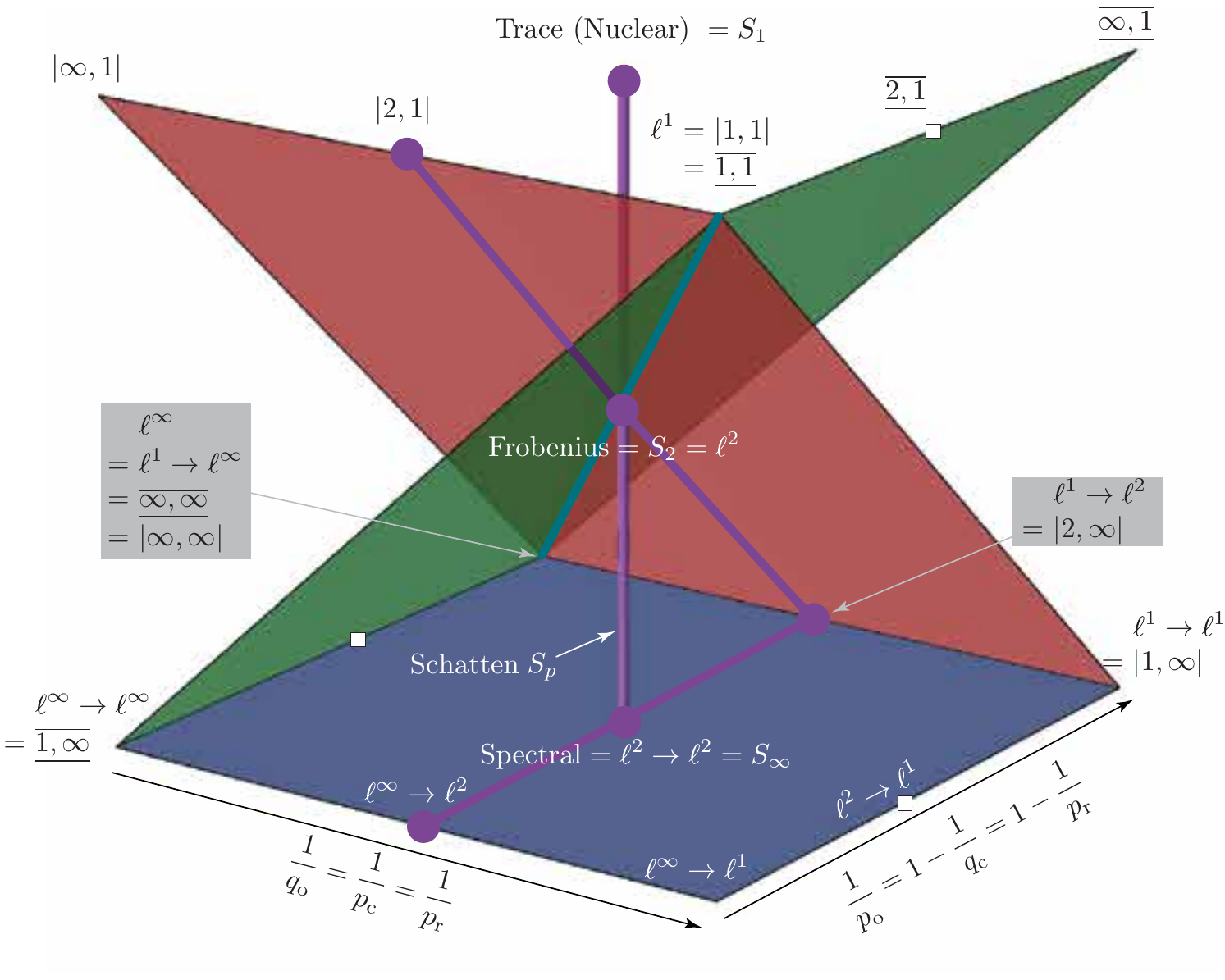}
  \caption{
  The blue plane is that of operator norms $\lplq{p_{\textnormal{o}}}{q_{\textnormal{o}}}$, the red one of columnwise mixed norms $\colpq{p_{\textnormal{c}}}{q_{\textnormal{c}}}$, and the green one of rowwise mixed norms $\rowpq{p_{\textnormal{r}}}{q_{\textnormal{r}}}$. The intersection of rowwise and columnwise mixed norms is shown by the thick blue line---these are entrywise $\ell^{p}$ norms. The vertical gray line is that of Schatten norms, with the nuclear norm $S_{1}$ at the top, the Frobenius norm $S_{2}$ in the middle (intersecting entrywise norms), and the spectral norm $S_\infty$ at the bottom (intersecting operator norms). Among columnwise mixed norms $\colpq{p_{\textnormal{c}}}{q_{\textnormal{c}}}$, all norms with a fixed value of $p_{\textnormal{c}}$ lead to the same minimizer. Purple circles and lines indicate norms $\nu$ for which $\ginv{\nu}{\mA}$ and $\pginv{\nu}{\mA}$ contain the MPP.}
  \label{fig:norm-cube}
\end{figure}

\begin{table}[h!]
  \centering
  \scalebox{0.8}{
  \begin{tabular}{@{}lll|l|l@{}}
    \toprule
    {\bf Norm name} & {\bf Symbol} & {\bf Definition} & $\mA^{\dagger} \in \ginv{\nu}{\mA}$ & $\mA^{\dagger} \in \pginv{\nu}{\mA}$ \hfill [Proof] \\
    \midrule
    Schatten
  & $\norm{\mM}_{S_p}$
  & $\norm{\singvals{\mM}}_{p}$ 
  & \checkmark $1 \leq p \leq \infty$ 
  & \checkmark $1 \leq p \leq \infty$ \hfill [Cor~\ref{cor:classicalMPP}-1]  \\
\hline
  Columnwise
  & $\normcolpq{p}{q}{\mM}$
  & $\norm{\{\norm{\vm_j}_p\}_{j=1}^n}_{q}$ 
  & \checkmark $p=2$, $1\leq q \leq \infty$ 
  & \checkmark $p=2$, $1\leq q \leq \infty$ \hfill [Cor~\ref{cor:classicalMPP}-2]\\ 
  mixed norm&&
   & \xmark \ $p \neq 2$, $0<p,q<\infty$ 
   & \xmark \ $p \neq 2$, $1\leq p<\infty$, $0<q <\infty$ \hfill [Thm~\ref{thm:not_always_mpp}-1\&3] \\
  \hline
  Entrywise
  & $\norm{\mM}_p$
  & $\norm{\vectorize(\mM)}_p$ 
  & \checkmark $p=2$ 
  & \checkmark $p=2$ \hfill [Cor~\ref{cor:classicalMPP}-2]\\ 
   &&
   & \xmark \ $p \neq 2$, $0<p<\infty$ 
   & \xmark \ $p \neq 2$, $0<p<\infty$ \hfill [Thm~\ref{thm:not_always_mpp}-1\&3] \\
  \hline
  Rowwise
  & $\normrowpq{p}{q}{\mM}$
  & $\norm{\{\norm{\vm^i}_p\}_{i=1}^m}_{q}$  
  & \checkmark $(p,q)=(2,2)$ 
  & \checkmark $(p, q)= (2, 2)$\hfill [Cor~\ref{cor:classicalMPP}-2] \\
 mixed norm &&
 &  \xmark\ $(p,q) \neq (2,2)$, $0<p,q<\infty$ 
  & \xmark\ $(p,q) \neq (2,2)$, $1 \leq p < \infty$, $0<q<\infty$  \hfill [Thm~\ref{thm:not_always_mpp}-2\&4]\\
\hline
  Induced   & $\normlplq{p}{q}{\mM}$
  & $\sup_{\vx \neq 0} \frac{\norm{\mM \vx}_q}{\norm{\vx}_p}$ 
  & \checkmark $1 \leq p \leq \infty$, $q=2$   
  & \checkmark $1 \leq p \leq \infty$, $q=2$ \hfill [Cor~\ref{cor:classicalMPP}-3]   \\
    \bottomrule
     \end{tabular}}

  \caption{Summary of matrix norms considered in this paper. 
  Notations are mostly introduced in Section~\ref{sec:theory}. 
  Checkmarks (\checkmark) indicate parameter values for which the property is true for any full-rank \emph{real or complex} matrix $\mA$ 
  Crosses (\xmark) indicate parameter values for which for any $m<n$, $m \geq 3$, there exists a full-rank $\mA \in \R^{m\times n}$ such that $\mA^{\dag} \notin \ginv{\nu}{\mA}$ (resp. $\mA^{\dag} \notin \pginv{\nu}{\mA}$). 
  }

  \label{tab:norm-list}
\end{table}

\section{Definitions and Known Results}
\label{sec:theory}

Throughout the paper we assume that all vectors and matrices are over $\C$ and point out when a result is valid only over the reals. Vectors are all column vectors, and they are denoted by bold lowercase letters, like $\vx$. Matrices are denoted by bold uppercase letters, such as $\mM$. By $\mM \in \C^{m \times n}$ we mean that the matrix $\mM$ has $m$ rows and $n$ columns of complex entries. The notation $\mI_{m}$ stands for the identity matrix in $\C^{m \times m}$; the subscript $m$ will often be omitted. We write $\vm_{j}$ for the $j$th column of $\mM$, and $\vm^{i}$ for its $i$th row. The conjugate transpose of $\mM$ is denoted $\mM^\H$ and the transpose is denoted $\mM^{\T}$. The notation $\ve_{i}$ denotes the $i$th canonical basis vector. Inner products are denoted by $\inprod{\cdot, \cdot}$. All inner product are over complex spaces, unless otherwise is indicated in the subscript. For example, an inner product over real $m \times n$ matrices will be written $\inprod{\cdot,\, \cdot}_{\R^{m\times n}}$

\subsection{Generalized Inverses}

A generalized inverse of a rectangular matrix is a matrix that has some, but not all properties of the standard inverse of an invertible square matrix. It can be defined for non-square matrices that are not necessarily of full rank.

\begin{definition}[Generalized inverse]
  \label{def:generalized_inverse}
  $\mX \in \C^{n \times m}$ is a generalized inverse of a matrix $\mA \in \C^{m
    \times n}$ if it satisfies $\mA \mX \mA = \mA$.
\end{definition}
We denote by $\ginvset(\mA)$ the set of all generalized inverses of a matrix $\mA$ .

For the sake of clarity we will primarily concentrate on inverses of underdetermined matrices ($m<n$). As we show in Section \ref{sub:fat_and_skinny}, this choice does not incur a loss of generality. Furthermore, we will assume that the matrix has full rank: $\rank(\mA) = m$. In this case, $\mX$ is the generalized (right) inverse of $\mA$ if and only if $\mA \mX = \mI_{m}$. 

\subsection{Correspondence Between Generalized Inverses and Dual Frames}

\begin{definition}
A collection of vectors $(\vphi_i)_{i=1}^n$ is called a (finite) \emph{frame} for $\C^m$
if there exist constants $A$ and $B$, $0 < A \leq B <
\infty$, such that

\begin{equation}
  A \norm{\vx}_2^2 \leq \sum_{i=1}^n \abs{\inprod{\vx, \vphi_i}}^{2} \leq B \norm{\vx}_2^2,
\end{equation}
for all $\vx \in \C^m$.
\end{definition}

\begin{definition}
A frame $(\vpsi_i)_{i=1}^n$ is a \emph{dual frame} to $(\vphi_i)_{i=1}^n$ if
the following holds for any $\vx \in \C^m$,

\begin{equation}
  \vx = \sum_{i = 1}^n \inprod{\vx, \vphi_i} \vpsi_i.
\end{equation}
\end{definition}
This can be rewritten in matrix form as 

\begin{equation}
  \vx = \mPsi \mPhi^\H \vx.
\end{equation}
As this must hold for all $\vx$, we can conclude that

\begin{equation}
  \mPsi \mPhi^\H = \mI_m,
\end{equation}
and so any dual frame $\mPsi$ of $\mPhi$ is a generalized left inverse of $\mPhi^\H$. Thus there is a one-to-one correspondence between dual frames and generalized inverses of full rank matrices.

\subsection{Characterization with the Singular Value Decomposition (SVD)}

A particularly useful characterization of generalized inverses is through the singular value decomposition (SVD). This characterization has been used extensively to prove theorems in \cite{Krahmer:2012cy,Zietak:1997ut} and elsewhere. Consider the SVD of the matrix $\mA$
\begin{equation}
 \label{eq:DefSVD}
  \mA = \mU \mSigma \mV^\H,
\end{equation}
where $\mU \in \C^{m \times m}$ and $\mV \in \C^{n \times n}$ are unitary and $\mSigma = \left[\diag(\sigma_1(\mA), \ldots, \sigma_m(\mA)),\mathbf{0}_{m \times (n-m)}\right]$  contains the singular values of $\mA$ in a non-increasing order. For a matrix $\mX$, let $\mM \bydef \mV^\H \mX \mU$. Then it follows from Definition \ref{def:generalized_inverse} that $\mX$ is a generalized inverse of $\mA$ if and only if
\begin{equation}
  \mSigma \mM \mSigma = \mSigma.
\end{equation}
Denoting by $r$ the rank of $\mA$ and setting
\begin{equation}
\label{eq:DefSingValues}
\mSigma_{\square} = \diag(\sigma_1(\mA), \ldots, \sigma_r(\mA)),
\end{equation}
we deduce that $\mM$ must be of the form
\begin{equation}
    \label{eq:Mform_rankdeff}
    \mM = 
    \begin{bmatrix}
        \mSigma_{\square}^{-1} & \mR \\
        \mS & \mT
    \end{bmatrix}
\end{equation}
where  $\mR \in \C^{r \times (m - r)}$, $\mS \in
\C^{(n-r) \times r}$, and $\mT \in \C^{(n-r) \times (m-r)}$ are arbitrary
matrices.

For a full-rank $\mA$, \eqref{eq:Mform_rankdeff} simplifies to 
\begin{equation}
  \label{eq:Mform}
  \mM = \left[ \begin{array}{c}\mSigma_{\square}^{-1}\\  \mS\end{array} \right],
\end{equation}
In the rest of the paper we restrict ourselves to full-rank matrices, and use the following characterization of the set of all generalized inverses of a matrix $\mA = \mU \mSigma \mV^\H$:
\begin{equation}
\label{eq:GinvsetMform}
  \ginvset(\mA) = \set{\mX \ : \ \mX = \mV \mM \mU^\H \ \mbox{where}\ \mM\
  \mbox{has the form}\ \eqref{eq:Mform}}.
\end{equation}
For rank-deficient matrices, the same holds with~\eqref{eq:Mform_rankdeff} instead of~\eqref{eq:Mform}. Using this alternative characterization to extend the main results of this paper to rank-deficient matrices is left to future work.

\subsection{The Moore-Penrose Pseudoinverse (MPP)}
\label{sub:MPP}

The Moore-Penrose Pseudoinverse (MPP) has a special place among generalized inverses, thanks to its various optimality and symmetry properties.

\begin{definition}[MPP]
  \label{def:MPP}
  The Moore-Penrose pseudoinverse of the matrix $\mA$ is the unique matrix
  $\mA^{\dagger}$ such that
  \begin{align}
    &\mA \mA^{\dagger} \mA = \mA,
    &(\mA \mA^{\dagger})^\H = \mA \mA^{\dagger}, \nonumber \\
    &\mA^{\dagger} \mA \mA^{\dagger} = \mA^{\dagger}, &(\mA^{\dagger}
    \mA)^\H = \mA^{\dagger} \mA.
  \end{align}
\end{definition}

This definition is universal---it holds regardless of whether $\mA$ is underdetermined or overdetermined, and regardless of whether it is full rank.

Under the conditions primarily considered in this paper ($m<n$, $\operatorname{rank}(\mA) = m$), we can express the MPP as $\mA^{\dagger} = \mA^\H (\mA \mA^\H)^{-1}$, which corresponds to the particular choice $\mS = \mathbf{0}_{(n-m)\times m}$ in~\eqref{eq:Mform}. The canonical dual frame $\mPhi$ of a frame $\mPsi$ is the adjoint $\mPhi = [\mPhi^{\dagger}]^\H$ of its MPP.

There are several alternative definitions of MPP. One that pertains to our
work is:
\begin{definition}
  MPP is the unique generalized inverse of $\mA$ with minimal Frobenius norm.
\end{definition}
That this definition makes sense will be clear from the next section. As
we will see in Section \ref{sec:normsmpp}, the MPP can also be characterized
as the generalized inverse minimizing many other matrix norms.

MPP has a number of interesting properties. If $\mA \in \C^{m \times n}$, with
$m > n$, and
\begin{equation}
  \label{eq:overdetermined_noisy}
  \vy = \mA \vx + \ve,
\end{equation}
we can compute
\begin{equation}
  \wh{\vx} = \mA^{\dagger} \vy.
\end{equation}
This vector $\wh{\vx}$ is what would in the noiseless case generate $\wh{\vy} = \mA \mA^{\dagger} \vy$---the orthogonal projection of $\vy$ onto the range of $\mA$, $\range(\mA)$. This is also known as the least-squares solution to an inconsistent overdetermined system of linear equations, in the sense that it minimizes the sum of squared residuals over all equations. For uncorrelated, zero-mean errors of equal variance, this gives the best linear unbiased estimator (BLUE) of $\vx$.

Note that the optimal solution to \eqref{eq:overdetermined_noisy} in the sense
of the minimum mean-squared error (MMSE) (when $\ve$ is considered random)
is not given by the MPP, but rather as the Wiener filter \cite{Kay:1998wba},

\begin{equation}
  \label{eq:mmse_x}
  \mB_{\MMSE} = \mC_x \mA^\H (\mA \mC_x \mA^\H + \mC_n)^{-1} \, ,
\end{equation}
where $\mC_x$ and $\mC_n$ are signal and noise covariance matrices.\footnote{Assuming $\mA \mC_x \mA^\H + \mC_n$ is invertible.} MPP for fat matrices with full row rank is a special case of this formula for $\mC_n = \mat{0}$ and $\mC_x = \mI$.

In the underdetermined case, $\mA \in \C^{m \times n}, \ m < n$, applying the
MPP yields the solution with the smallest $\ell^2$ norm among all vectors
$\vx$ satisfying $\vy = \mA \vx$ (among all \emph{admissible} $\vx$). That is,
\begin{equation}
  \norm{\mA^\dagger \mA \vx}_2 \leq \norm{\vz}_2,  \quad \forall \, \vz \ s.t. \ \mA \vz = \mA \vx.
\end{equation}

To see this, we use the orthogonality of $\mA^\dagger \mA$. Note that any
vector $\vx$ can be decomposed as
\begin{equation}
  \mA^{\dagger} \mA \vx  +  (\mI - \mA^\dagger \mA) \vx,
\end{equation}
and that
\begin{equation}
\begin{aligned}
    \inprod{\mA^{\dagger} \mA \vx, (\mI - \mA^\dagger \mA) \vx}
    &= \inprod{\mA^\dagger \mA \vx, \vx} - \inprod{\mA^\dagger \mA \vx, \mA^\dagger \mA \vx} \\
    &= \inprod{\mA^\dagger \mA \vx, \vx} - \inprod{\mA^\dagger \mA \vx, (\mA^\dagger \mA)^\H \vx} \\
    &= \inprod{\mA^\dagger \mA \vx, \vx} - \inprod{\mA^\dagger \mA \mA^\dagger \mA \vx, \mA^\dagger \mA \vx} \\
    &= \inprod{\mA^\dagger \mA \vx, \vx} - \inprod{\mA^\dagger \mA \vx, \vx} \\
    &= 0,
\end{aligned}
\end{equation}
where we applied Definition \ref{def:MPP} twice. Thus $\mA^\dagger \mA \vx$ is orthogonal to $(\mI - \mA^\dagger \mA) \vx$, and we have
\begin{equation}
  \begin{aligned}
  \norm{\vz}_2^2 &= \norm{\mA^\dagger \mA \vz + (\mI - \mA^\dagger \mA)\vz}_2^2 \\
  &= \norm{\mA^\dagger \mA \vz}_2^2 + \norm{(\mI - \mA^\dagger \mA) \vz}_2^2 \\
  &\geq \norm{\mA^\dagger \mA \vz}^2_2 = \norm{\mA^\dagger \mA \vx}^2_2. 
  \end{aligned}
\end{equation}

A natural question to ask is if there are other MPP-like linear generalized
inverses for $\ell^p$ norms with $p \neq 2$. The answer is negative: 

\begin{proposition}[Corollary 5, \cite{Newman:1969eg}]
    \label{prop:only_MPP_is_linear}
    Let $m \geq 3$ and $n > m$. For $1 < p < \infty$ define $B_\mA : \C^m \to \C^n$ as
    \[
        B_\mA(\vy) \bydef \argmin_{\vz \in \C^n: \mA \vz = \vy} \norm{\vz}_p,
    \]
    where the minimizer is unique by the strict convexity of $\norm{\, \cdot \,}_p$. Then $B_\mA(\vy)$ is linear for all $\mA$ if and only if $p = q = 2$.
\end{proposition}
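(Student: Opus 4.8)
The ``if'' direction is immediate: when $p=2$ the unique minimizer is $\mA^{\dagger}\vy$, which is linear in $\vy$. So the plan is to prove the contrapositive of the ``only if'' direction: assuming $1<p<\infty$ with $p\neq 2$, I would construct a single full-rank $\mA\in\C^{m\times n}$ for which $B_\mA$ is not additive.

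First I would recast linearity geometrically. Because $\mA B_\mA(\vy)=\vy$ and $B_\mA$ is injective, $B_\mA$ is linear if and only if its image $S\bydef\bigcup_{\vy}\{B_\mA(\vy)\}$ is a linear subspace of $\C^n$: by uniqueness of the minimizer, $S$ meets every coset of $N\bydef\nullspace(\mA)$ in exactly one point, so if it is a subspace it is an $m$-dimensional complement of $N$ and $B_\mA=(\mA|_S)^{-1}$. Next I would identify $S$ explicitly. For $1<p<\infty$ the map $\vz\mapsto\norm{\vz}_p^p$ is strictly convex and $C^1$ on $\C^n$, with (Wirtinger) gradient $p\,\Phi_p(\vz)$, where $\Phi_p(\vz)\bydef(\abs{z_i}^{p-2}z_i)_{i=1}^n$ acts entrywise; it is coercive on every coset, so the minimizer exists, is unique, and is characterized by the first-order condition $\Phi_p(B_\mA(\vy))\perp N$ (real orthogonality, which coincides with complex orthogonality since $N$ is a complex subspace). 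Hence
\[
  S=\Phi_p^{-1}\!\left(N^{\perp}\right).
\]
Note $\Phi_p$ is an odd, positively homogeneous bijection of $\C^n$ (its inverse being $\Phi_{p'}$, with $p'$ the conjugate exponent), so it carries lines through the origin to lines through the origin --- which is exactly why a counterexample must involve at least three coordinates.

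With this in hand I would pick the counterexample. Since $n\geq m+1$, let
\[
  W\bydef\{\vw\in\C^n:\ w_1+w_2+w_3=0\ \text{and}\ w_j=0\ \text{for all}\ j\geq m+2\},
\]
an $m$-dimensional subspace, and let $\mA$ be any matrix with $\nullspace(\mA)=W^{\perp}$, which forces $\rank(\mA)=m$. Then $S=\Phi_p^{-1}(W)=\{\vu:\ \Phi_p(u_1)+\Phi_p(u_2)+\Phi_p(u_3)=0,\ u_j=0\ (j\geq m+2)\}$. Since $\Phi_p(0)=0$ and $\Phi_p(\pm1)=\pm1$, this set contains $\ve_1-\ve_2$ and $\ve_1-\ve_3$; but it does not contain their sum $2\ve_1-\ve_2-\ve_3$, because $\Phi_p(2)+2\Phi_p(-1)=2^{p-1}-2$, which is nonzero precisely when $p\neq2$. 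Thus $S$ is not closed under addition, so $B_\mA$ is not linear, and the contrapositive is proved. (The same $W$ works for every admissible pair $m\geq 3$, $n>m$.)

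The main obstacle is the conceptual step, not the arithmetic: realizing that linearity of $B_\mA$ is equivalent to $\Phi_p^{-1}(\nullspace(\mA)^{\perp})$ being a subspace, and carefully justifying the first-order optimality characterization --- existence of the minimizer on each coset, strict convexity for uniqueness, and the correct form of the gradient of $\norm{\cdot}_p^p$ over $\C$. Once $S=\Phi_p^{-1}(N^{\perp})$ is in place, the rest is the one-line computation above, together with the trivial checks that the chosen $\mA$ has the stated null space and full rank.
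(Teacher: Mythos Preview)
The paper does not prove this proposition; it is quoted from Newman (1969) with a citation and no argument. Your self-contained proof is correct. The geometric reformulation---$B_\mA$ is linear iff $S=\Phi_p^{-1}(\nullspace(\mA)^{\perp})$ is a linear subspace---together with the explicit three-coordinate obstruction $2^{p-1}-2\neq 0$ is exactly the classical mechanism (the same one that shows $\ell^p$ fails the parallelogram law for $p\neq 2$), and the dimension count for $W$ and the existence/uniqueness of the minimizer on each coset are handled properly.

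Two minor remarks. First, what you call the ``Wirtinger gradient'' is really the real gradient of $\norm{\cdot}_p^p$ on $\R^{2n}$ re-identified with a complex vector; the Wirtinger $\partial/\partial z$ derivative itself would give $\tfrac{p}{2}\abs{z}^{p-2}\bar z$. Your formula $\Phi_p(z)=\abs{z}^{p-2}z$ is nonetheless the correct object for the first-order condition, and your justification that real orthogonality to $N$ upgrades to complex orthogonality (because $N$ is a complex subspace) is right. Second, the statement in the paper has a stray ``$q$'' (it should read $p=2$); you have interpreted it correctly.
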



\section{Generalized Inverses Minimizing Matrix Norms}
\label{sec:normmin}

Despite the negative result in Proposition \ref{prop:only_MPP_is_linear} saying that the MPP is in some sense an exception, an interesting way of generating different generalized inverses is by norm\footnote{For brevity, we loosely call ``norm''  any quasi-norm such as $\ell^{p}$, $p<1$, as well as the ``pseudo-norm'' $\ell^{0}$.} minimization. Two central definitions of such generalized inverses will be used in this paper. The generalized inverse of $\mA \in \C^{m \times n}$, $m < n$, with
minimal $\nu$-norm is defined as ($\norm{\,\cdot\,}_\nu$ is an arbitrary
matrix norm or quasi-norm)
\begin{align}
  \ginv{\nu}{\mA} &\bydef \argmin_{\mX} \ \norm{\mX}_{\nu} \ \ \text{subject to} \ \ \mX \in \ginvset(\mA). \nonumber
\end{align}
The generalized inverse minimizing the $\mu$-norm of the product $\mX \mA$ is
defined as
\begin{align}
  \pginv{\mu}{\mA} &\bydef \argmin_{\mX} \ \norm{\mX \mA}_{\mu} \ \ \text{subject to} \ \ \mX \in \ginvset(\mA). \nonumber
\end{align}
This definition, which is a particular case of the first one with $\norm{\cdot}_{\nu} = \norm{\cdot \mA}_{\mu}$, 
will serve when considering $\mX$ as a poor man's linear replacement for $\ell^{p}$ minimization.
Strictly speaking, the above-defined pseudoinverses are sets, since the corresponding programs may have more than one solution. We will point out the cases when special care must be taken. Another important point is that both definitions involve convex programs. So, at least in principle, we can find the optimizer in the sense that any first-order scheme will lead to the global optimum.


We will treat several families of matrix norms. A matrix norm is any norm on
$\C^{m \times n}$.

\subsection{Entrywise norms} 

The simplest matrix norm is the entrywise $\ell^p$ norm. It is defined through
an isomorphism between $\C^{m \times n}$ and $\C^{mn}$, that is, it is
simply the $\ell^{p}$ norm of the vector of concatenated columns.

\begin{definition}
The $p$-entrywise norm of $\mM \in \C^{m \times n}$, where $0 \leq p \leq \infty$, is given as
\begin{equation}
  \norm{\mM}_p \bydef \norm{\mathrm{vec}(\mM)}_p.
\end{equation}
\end{definition}
A particular entrywise norm is the Frobenius norm associated to $p=2$. 

\subsection{Induced norms---\emph{poor man's $\ell^{p}$ minimization}} 

An important class is that of induced norms. To define these norms, we
consider $\mM \in \C^{m \times n}$ as an operator mapping vectors from $\C^n$
(equipped with an $\ell^{p}$ norm) to $\C^m$ (equipped with an $\ell^{q}$
norm).

\begin{definition}
The $\lplq{p}{q}$ induced norm of $\mM \in \C^{m \times n}$, where $0 < p,q \leq \infty$ is 
\begin{equation}
  \normlplq{p}{q}{\mM} \bydef \sup_{\vx \neq 0} \frac{\norm{\mM \vx}_q}{\norm{\vx}_p}.
\end{equation}
\end{definition}
It is straightforward to show that this definition is equivalent to
$\normlplq{p}{q}{\mM} = \sup_{\norm{\vx}_p=1} \norm{\mM \vx}_q$. Note that
while this is usually defined only for proper norms (i.e., with $1 \leq p,q
\leq \infty$) the definition remains valid when $0 < p < 1$ and/or $0<q < 1$.

\subsection{Mixed norms (columnwise and rowwise)} 

An interesting case mentioned in the introduction is the $\ell^1 \rightarrow
\ell^1$ induced norm of $\mX \mA$, as it leads to a sort of optimal poor man's
$\ell^1$ minimization. The $\ell^{1} \rightarrow \ell^{1}$ induced norm is a
special case of the family of $\ell^1 \rightarrow
\ell^q$ induced norms, which can be shown to have a simple expression as columnwise mixed norm
\begin{equation}
\label{eq:l1lq_characterization}
\normlplq{1}{q}{\mM} = \max_{1 \leq j \leq n} \norm{\vm_{j}}_{q} \bydef \normcolpq{q}{\infty}{\mM}. 
\end{equation}

More generally, one can consider columnwise mixed norms for any $p$ and $q$:

\begin{definition}
  The columnwise mixed norm $\normcolpq{p}{q}{\mM}$ is defined as
  \begin{equation}
  \normcolpq{p}{q}{\mM} \bydef \bigg(\sum_{j} \norm{\vm_{j}}_{p}^q\bigg)^{1/q}
  \end{equation}
  with the usual modification for $q=\infty$.
\end{definition}

We deal both with column- and row-wise norms, so we introduce a mnemonic
notation to easily tell them apart. Thus $\normcolpq{p}{q}{\ \cdot \ }$
denotes columnwise mixed norms, and $\normrowpq{p}{q}{\ \cdot \ }$ denotes
rowwise mixed norms, defined as follows.

\begin{definition}
  The rowwise mixed norm $\normrowpq{p}{q}{\mM}$ is defined as
  \begin{equation}
    \normrowpq{p}{q}{\mM} \bydef \bigg(\sum_{i} \norm{\vm^{j}}_{p}^q\bigg)^{1/q}
  \end{equation}
  with the usual modification for $q=\infty$.
\end{definition}

\subsection{Schatten norms} 

Another classical norm is the spectral norm, which is the $\ell^{2}
\rightarrow \ell^{2}$ induced norm. It equals the maximum singular value of
$\mM$, so it is also a special case of Schatten norm, just as the Frobenius norm
which is the $\ell^{2}$ norm of the vector of singular values of $\mM$. We can
also define a general Schatten norm $\norm{\singvals{\mM}}_{p}$ where
$\singvals{\mM}$ is the vector of singular values.

\begin{definition}
  The Schatten norm $\norm{\mM}_{S^p}$ is defined as
  \begin{equation}
    \norm{\mM}_{S^p} \bydef \norm{\singvals{\mM}}_p,
  \end{equation}
  where $\singvals{\mM}$ is the vector of singular values.
\end{definition}

As we will see further on, these are special cases of the larger class of
unitarily invariant matrix norms.

\subsection{Poor man's $\ell^p$ minimization revisited}
\label{sub:poor_mans_lp_minimization}

%
%

We conclude the overview of matrix norms by introducing certain norms based on
a probabilistic signal model. Similarly to induced norms on the projection
operator $\mX \mA$, these norms lead to optimal $\ell^p$-norm blow-up that can
be achieved by a linear operator. Given that they are computed on the
projection operator, they are primarily useful when considering $\pginv{}{\cdot}$, not
$\ginv{}{\cdot}$.

We already pointed out in the introduction that the generalized inverse
$\pginv{\lplq{p}{p}}{\mA}$ is the one which minimizes the worst case blowup of
the $\ell^p$ norm between $\vz$, the minimum $\ell^p$ norm vector such that
$\vy = \mA \vz$, and the linear estimate $\mX \vy$ where $\mX \in
\ginvset(\mA)$. In this sense, $\pginv{\lplq{p}{p}}{\mA}$ provides the best
worst-case poor man's (linear) $\ell^{p}$ minimization, and solves
\begin{equation}
\inf_{\mX \in \ginvset(\mA)} \sup_{\vy \neq 0} \frac{\norm{\mX\vy}_{p}}{\inf_{\vz: \mA\vz=\vy} \norm{\vz}_{p}}. 
\end{equation}
In this expression we can see explicitly the \emph{true} $\ell^p$ optimal
solution in the denominator of the argument of the supremum.

Instead of minimizing the worst-case blowup, we may want to minimize
average-case $\ell^p$ blowup over a given class of input vectors. Let $\vu$ be
a random vector with probability distribution given by $\mathbb{P}_{\vu}$.
Given $\mA$, our goal is to minimize $\E_{\vu \sim \mathbb{P}_{\vu}}[\norm{\mX
\mA
\vu}_p]$. We replace this minimization by a simpler proxy: we minimize
$(\E_{\vu \sim \mathbb{P}_{\vu}}\, \norm{\mX \mA \vu}_p^p)^{\frac{1}{p}}$. It is not difficult to verify
that this expectation defines a norm (or a semi-norm, depending on $\mathbb{P}_{\vu}$).

Interestingly, for certain distributions $\mathbb{P}_{\vu}$ this leads back to
minimization of standard matrix norms:

\begin{proposition}\label{prop:avg_case_minimization}
  Assume that $\vu \sim \mathcal{N}(\vec{0}, \mI_n)$. Then we have 
  \begin{equation}
    \label{eq:avg_case_minimization}
    \argmin_{\mX \in \ginvset(\mA)} \left(\E_{\vu}[\norm{\mX \mA \vu}_p^{p}]\right)^{1/p} = \pginv{\rowpq{2}{p}}{\mA}
  \end{equation}
\end{proposition}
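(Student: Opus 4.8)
The plan is to reduce the matrix-valued expectation to a sum of scalar Gaussian moments, one per row of $\mX\mA$, and recognize the result as a rowwise mixed norm raised to the $p$th power. Write $\mB \bydef \mX\mA \in \C^{n \times n}$, with rows $\vb^i$, $i = 1,\dots,n$. Since $\vu \sim \mathcal{N}(\vec 0, \mI_n)$, each coordinate $(\mB\vu)_i = \inprod{\vb^i, \vu}$ (or, more carefully over $\C$, $\vu$ should be taken real or the real/complex convention fixed) is a scalar Gaussian with mean zero and variance $\norm{\vb^i}_2^2$. Hence $(\mB\vu)_i$ is distributed as $\norm{\vb^i}_2 \cdot g_i$ where $g_i \sim \mathcal{N}(0,1)$, though the $g_i$ are \emph{not} independent. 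By linearity of expectation,
\begin{equation}
  \E_{\vu}\bigl[\norm{\mB\vu}_p^p\bigr] = \E_{\vu}\Bigl[\sum_{i=1}^n \abs{(\mB\vu)_i}^p\Bigr] = \sum_{i=1}^n \norm{\vb^i}_2^p \, \E[\abs{g_i}^p] = c_p \sum_{i=1}^n \norm{\vb^i}_2^p,
\end{equation}
where $c_p \bydef \E_{g \sim \mathcal{N}(0,1)}[\abs{g}^p]$ is a finite positive constant (the $p$th absolute moment of a standard Gaussian) depending only on $p$. Crucially, the lack of independence across $i$ is irrelevant here because we only ever take the expectation of a \emph{sum}, never of a product; linearity of expectation does all the work.

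The right-hand side is exactly $c_p \normrowpq{2}{p}{\mB}^p$ by the definition of the rowwise mixed norm, so $\bigl(\E_{\vu}[\norm{\mX\mA\vu}_p^p]\bigr)^{1/p} = c_p^{1/p}\, \normrowpq{2}{p}{\mX\mA}$. Since $c_p^{1/p}$ is a positive constant independent of $\mX$, minimizing the left-hand side over $\mX \in \ginvset(\mA)$ is equivalent to minimizing $\normrowpq{2}{p}{\mX\mA}$, and the argmin sets coincide. The latter is by definition $\pginv{\rowpq{2}{p}}{\mA}$, which is the claimed identity.

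I do not anticipate a serious obstacle; the only points requiring care are bookkeeping rather than ideas. First, one must state the convention for $\vu$ in the complex setting — if $\vu$ is a real Gaussian (as $\mathcal{N}(\vec 0, \mI_n)$ with no explicit complex qualifier suggests), then $\inprod{\vb^i,\vu} = \sum_k \overline{b^i_k} u_k$ has variance $\norm{\vb^i}_2^2$ only when $\vb^i$ is real, so either the statement is implicitly about real $\mA$ and real $\mX$, or one works with $\E[\abs{\inprod{\vb^i,\vu}}^p]$ and notes it still equals $c_p \norm{\vb^i}_2^p$ after splitting into real and imaginary parts of $\vb^i$ — this needs a one-line check. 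Second, one should note that $c_p < \infty$ for all $p > 0$ (so the proxy is well-defined), and $c_p > 0$ (so division is legitimate); both are standard. Third, it is worth remarking, as the surrounding text already does, that $\E_{\vu}[\norm{\mX\mA\vu}_p^p]^{1/p}$ is a (semi-)norm in $\mX\mA$, which makes the optimization a convex program and the notation $\pginv{\rowpq{2}{p}}{\cdot}$ meaningful; but this is not needed for the equality of argmin sets.
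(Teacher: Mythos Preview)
Your proof is correct and follows essentially the same approach as the paper: expand $\norm{\mX\mA\vu}_p^p$ as a sum over rows, use linearity of expectation together with the fact that each coordinate $(\mX\mA\vu)_i$ is a centered Gaussian with variance $\norm{\vx^i\mA}_2^2$, and identify the result as a constant times $\normrowpq{2}{p}{\mX\mA}^p$. The paper additionally spells out the constant as $c_p = 2^{p/2}\Gamma(\tfrac{1+p}{2})/\sqrt{\pi}$, but otherwise the arguments coincide.
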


\begin{remark}
  This result is intuitively satisfying. It is known \cite{Jenatton:2011vd}
  that the $\rowpq{2}{1}$ mixed norm promotes row sparsity, thus the resulting
  $\mX$ will have rows set to zero. Therefore, even if the result may have
  been predicted, it is interesting to see how a generic requirement to have a
  small $\ell^1$ norm of the output leads to a known group sparsity penalty on the product matrix.
\end{remark}

\begin{remark}
    For a general $p$ the minimization \eqref{eq:avg_case_minimization} only
    minimizes an expected \emph{proxy} of the output $\ell^p$ norm, but for
    $p=1$ we get exactly the expected $\ell^1$ norm.
\end{remark}

\begin{proof}
\begin{equation}
\begin{aligned}
 \E_{\vu} \big[\norm{\mX \mA \vu}_p^p \big] 
  &= 
 \sum_{i=1}^n \E_{\vu} \big[\abs{(\mX \mA \vu)_i}^p \big]
\end{aligned}
\end{equation}
Because $\vu$ is centered normal with covariance $\mI_n$, the covariance
matrix of $\mX \mA \vu$ is $\mK = (\mX \mA)(\mX \mA)^\H$. Individual
components are distributed according to $(\mX \mA \vu)_i \sim \mathcal{N}(0,
\mK_{ii}) = \mathcal{N}(0, \norm{\vx^i\mA}_2^2)$. A straightforward
computation shows that
\begin{equation}
  \E \big[ \abs{(\mX\mA\vu)_i}^p \big] = \frac{2^{p/2} \, \Gamma\left(\frac{1+p}{2}\right)}{\sqrt{\pi}} \norm{\vx^i \mA}_2^p.
\end{equation}
We can then continue writing
\begin{equation}
\begin{aligned}
  \E \big[\norm{\mX \mA \vu}_p^{p} \big] 
  &= 
  \sum_{i=1}^n \norm{\vx^i \mA}_2^p
  = \normrowpq{2}{p}{\mX \mA},
\end{aligned}
\end{equation}
and the claim follows.
\end{proof}


\subsection{Fat and skinny matrices}
\label{sub:fat_and_skinny}

In this paper, we concentrate on generalized inverses of fat
matrices---matrices with more columns than rows. We first want to show that
there is no loss of generality in making this choice. This is clear for
minimizing mixed norms and Schatten norms, as for mixed norms we have that
\begin{equation}
  \normcolpq{p}{q}{\mM} = \normrowpq{p}{q}{\mM^\H},
\end{equation}
and for Schatten norm we have
\begin{equation}
  \norm{\mM}_{S_p} = \norm{\mM^\H}_{S_p}.
\end{equation}
It only remains to be shown for induced norms. We can state the following lemma:
\begin{lemma}
  Let $1 \leq p, q, p^{\H},q^{\H}  \leq \infty$ with $\tfrac{1}{p}+\tfrac{1}{p^{\H}}=1$ and
$\tfrac{1}{q}+\tfrac{1}{q^{\H}}=1$. Then we have the relation
  \normlplq{p}{q}{\mM} = \normlplq{q^*}{p^*}{\mM^\H}.
\end{lemma}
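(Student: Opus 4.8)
The statement is the standard duality between induced operator norms, so the plan is to verify it directly from the definitions together with the classical vector-norm duality $\norm{\vx}_{p} = \sup_{\norm{\vy}_{p^{\H}} \leq 1} |\inprod{\vx, \vy}|$, valid for $1 \leq p \leq \infty$. First I would write out $\normlplq{p}{q}{\mM} = \sup_{\norm{\vx}_{p} \leq 1} \norm{\mM\vx}_{q}$, then expand the inner $\ell^{q}$ norm by its dual characterization as $\norm{\mM\vx}_{q} = \sup_{\norm{\vy}_{q^{\H}} \leq 1} |\inprod{\mM\vx, \vy}|$. This turns the quantity into a double supremum $\sup_{\norm{\vx}_p \leq 1}\sup_{\norm{\vy}_{q^{\H}}\leq 1} |\inprod{\mM\vx,\vy}|$.

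The next step is to swap the roles of the two suprema (legitimate, since it is a supremum over a product set) and use the adjoint identity $\inprod{\mM\vx, \vy} = \inprod{\vx, \mM^{\H}\vy}$. This rewrites the expression as $\sup_{\norm{\vy}_{q^{\H}}\leq 1}\sup_{\norm{\vx}_{p}\leq 1} |\inprod{\vx, \mM^{\H}\vy}|$. Applying vector duality once more in the inner supremum, this time recognizing $\sup_{\norm{\vx}_p \leq 1} |\inprod{\vx, \vz}| = \norm{\vz}_{p^{\H}}$, collapses it to $\sup_{\norm{\vy}_{q^{\H}}\leq 1} \norm{\mM^{\H}\vy}_{p^{\H}}$, which is exactly $\normlplq{q^{\H}}{p^{\H}}{\mM^{\H}}$ by definition. (Here I would use the notation $q^{*}, p^{*}$ of the lemma statement for the conjugate exponents $q^{\H}, p^{\H}$.)

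The only genuine subtlety — and the step I would flag as the main obstacle — is the validity of the vector-norm duality $\norm{\,\cdot\,}_{p} = \norm{\,\cdot\,}_{p^{\H}}^{*}$ at the endpoints $p \in \{1, \infty\}$ and, relatedly, the attainment of the suprema so that the double-sup swap is unconditional. For $1 < p < \infty$ everything is clean (reflexivity, Hölder with equality on the normalized signed power of $|\vx|$); for $p = 1$ and $p = \infty$ one checks the identity by hand on finite-dimensional $\C^{m}$, where $\ell^{1}$ and $\ell^{\infty}$ are mutually dual and the extremizing vectors are explicit (a scaled sign pattern, resp. a coordinate indicator). Since all spaces here are finite-dimensional, there are no issues with suprema not being attained, so the interchange of the two suprema is just the elementary fact that $\sup_{a}\sup_{b} f(a,b) = \sup_{b}\sup_{a} f(a,b)$. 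Modulo recording this endpoint check, the proof is the three-line computation above.
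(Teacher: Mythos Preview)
Your argument is correct and is the standard duality proof of this classical identity. The paper itself states the lemma without proof (treating it as well known), so there is nothing to compare against; your three-line double-supremum computation with the vector duality $\norm{\cdot}_{p} = (\norm{\cdot}_{p^{\H}})^{*}$ and the adjoint identity $\inprod{\mM\vx,\vy} = \inprod{\vx,\mM^{\H}\vy}$ is exactly the expected justification, and your endpoint remarks for $p\in\{1,\infty\}$ in finite dimension are appropriate.
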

In other words, all norms we consider on tall matrices can be converted
to norms on their fat transposes, and our results apply accordingly.

As a corollary we have
\begin{equation}
  \label{eq:lqlinf_characterization}
  \normlplq{p}{\infty}{\mM} =
  \normlplq{1}{p^{\H}}{\mM^{\H}} = \normcolpq{p^{\H}}{\infty}{\mM^{\H}} \bydef \normrowpq{p^{\H}}{\infty}{\mM} = \max_{1 \leq i \leq m} \norm{\vm^{i}}_{p^{\H}}. 
\end{equation}

Next, we show that generalized inverses obtained by minimizing columnwise
mixed norms always match minimizing an entrywise norm or an induced norm.
\begin{lemma}
  \label{lem:ColumnEntrywise}
  Consider $0 < p \leq \infty$ and a full rank matrix $\mA$. 
  \begin{enumerate}
  \item For $0< q < \infty$, we have the set equality $\ginv{\colpq{p}{q}}{\mA} = \ginv{p}{\mA}$.
  \item For $q = \infty$ we have $\normcolpq{p}{\infty}{\,\cdot\,} = \normlplq{1}{p}{\,\cdot\,}$ and the set inclusions / equalities
  \begin{eqnarray*}
    \ginv{p}{\mA} \subset \ginv{\colpq{p}{\infty}}{\mA} &=& \ginv{\lplq{1}{p}}{\mA}\\
     \pginv{\colpq{p}{\infty}}{\mA} &=& \pginv{\lplq{1}{p}}{\mA}.
  \end{eqnarray*}
  \end{enumerate}
\end{lemma}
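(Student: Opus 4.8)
The plan is to work entirely at the level of the entries of the candidate inverse $\mX$, exploiting the fact that the columnwise mixed norm $\normcolpq{p}{q}{\mX}$ is a strictly increasing function of the vector of column $p$-norms $(\norm{\vx_1}_p,\dots,\norm{\vx_m}_p)$ together with the separable structure of the constraint $\mX\mA=\mI$ across... actually, the constraint does not separate across columns of $\mX$ — it separates across \emph{columns of the identity}, i.e.\ across columns of $\mX$ viewed as $\mA\vx_j = \ve_j$. This is exactly the point: the feasible set $\ginvset(\mA) = \{\mX : \mA\vx_j = \ve_j,\ j=1,\dots,m\}$ is a \emph{product} of affine sets, one per column of $\mX$. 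For part~1, I would argue as follows. First, $\normcolpq{p}{q}{\mX}^q = \sum_{j=1}^m \norm{\vx_j}_p^q$ is minimized over the product feasible set by minimizing each $\norm{\vx_j}_p$ independently (a sum of nonnegative terms over a product set is minimized termwise, and $t\mapsto t^q$ is increasing on $[0,\infty)$ for $q>0$). Second, $\norm{\vectorize(\mX)}_p^p = \sum_{j=1}^m \norm{\vx_j}_p^p$ has exactly the same structure, so the entrywise $\ell^p$ minimization also reduces to minimizing each $\norm{\vx_j}_p$ independently over the same affine slice. Hence both problems have the \emph{same} set of optimal columns, column by column, and therefore $\ginv{\colpq{p}{q}}{\mA} = \ginv{p}{\mA}$ as sets. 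I should be a little careful when $0<p<1$ or $0<q<1$ (quasi-norms): the termwise-minimization argument only uses monotonicity of $t\mapsto t^q$ and nonnegativity, which still hold, so the argument goes through verbatim.

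For part~2, the identity $\normcolpq{p}{\infty}{\,\cdot\,}=\normlplq{1}{p}{\,\cdot\,}$ is just \eqref{eq:l1lq_characterization} applied with $q$ there equal to $p$ here, so that equality is free and gives immediately $\ginv{\colpq{p}{\infty}}{\mA}=\ginv{\lplq{1}{p}}{\mA}$ and $\pginv{\colpq{p}{\infty}}{\mA}=\pginv{\lplq{1}{p}}{\mA}$ (the $\pginv{}{}$ statement follows by applying the same norm identity to the matrix $\mX\mA$ in place of $\mX$). The only remaining claim is the inclusion $\ginv{p}{\mA}\subset\ginv{\colpq{p}{\infty}}{\mA}$. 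Here I would use the termwise characterization from part~1: an $\ell^p$-optimal $\mX$ has each column $\vx_j$ of minimal possible $p$-norm, say $\norm{\vx_j}_p = c_j$ where $c_j = \min\{\norm{\vz}_p : \mA\vz=\ve_j\}$. The $\colpq{p}{\infty}$ norm of any feasible $\mX$ is $\max_j \norm{\vx_j}_p \geq \max_j c_j$, and this lower bound is attained by the $\ell^p$-optimal $\mX$; hence every $\ell^p$-minimizer is also a $\colpq{p}{\infty}$-minimizer, giving the inclusion. The inclusion is generally strict because a $\colpq{p}{\infty}$-minimizer only needs its \emph{largest} column $p$-norm to equal $\max_j c_j$; the other columns are free to be any feasible vectors with $p$-norm $\le \max_j c_j$, not necessarily minimal.

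The main obstacle — really the only subtle point — is making sure the "separates over a product set" observation is airtight, i.e.\ that $\ginvset(\mA)$ is genuinely a Cartesian product $\prod_{j=1}^m \{\vz\in\C^n : \mA\vz=\ve_j\}$ when $\mA$ is full rank with $m<n$; this is immediate from the characterization $\mX\in\ginvset(\mA)\iff\mA\mX=\mI$ recalled in Section~\ref{sec:theory}, since $\mA\mX=\mI$ reads off column by column as $\mA\vx_j=\ve_j$ with no coupling between distinct $j$. Once that is stated, everything else is the elementary monotonicity/termwise-minimization bookkeeping above, plus one invocation of \eqref{eq:l1lq_characterization}. I would also remark that the same product structure is what makes these column-norm minimizations reduce to $m$ independent vector problems, which is the computational payoff flagged in Section~\ref{sec:computation}.
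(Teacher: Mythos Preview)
Your proposal is correct and follows essentially the same approach as the paper's proof: exploit that the constraint $\mA\mX=\mI$ decouples into $\mA\vx_j=\ve_j$ column by column, so minimizing $\sum_j \norm{\vx_j}_p^q$ reduces to minimizing each $\norm{\vx_j}_p$ separately, independently of $q$. Your treatment is in fact more thorough than the paper's, which gives a two-sentence argument for part~1 only and leaves part~2 (the norm identity, the $\pginv{}{}$ equality, and the inclusion $\ginv{p}{\mA}\subset\ginv{\colpq{p}{\infty}}{\mA}$) entirely implicit; your explicit handling of the $q=\infty$ inclusion via the lower bound $\max_j c_j$ is a welcome addition.
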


\begin{proof}
  For $q<\infty$, minimizing $\normcolpq{p}{q}{\mX}$ under the constraint $\mX
  \in \ginvset(\mA)$ amounts to minimizing $\sum_{j} \norm{\vx_{j}}_{p}^{q}$
  under the constraints $\mA \vx_{j} = \ve_{j}$, where $\vx_{j}$ is the $j$th
  column of $\mX$ and $\ve_{j}$ the $j$th canonical vector. Equivalently, one
  can separately minimize $\norm{\vx_{j}}_{p}$ such that $\mA \vx_{j} =
  \ve_{j}$.
\end{proof}


\subsection{Unbiasedness of Generalized Inverses}\label{sub:unbiasedness}

Most of the discussion so far involved deterministic matrices and deterministic properties. In the previous subsections certain results were stated for matrices in general positions---a property which matrices from the various random ensembles verify with probability one. In this and the next section we discuss properties generalized inverses of some random matrices. We start by demonstrating a nice property of the MPP of random matrices replicated by other norm-minimizing generalized inverses, which is unbiasedness in a certain sense. 

For a random Gaussian matrix $\mA$ it holds that
\begin{equation}
  \frac{n}{m}\E[\mA^\dagger \mA] = \mI.
\end{equation}
In other words, for this random matrix ensemble, applying $\mA$ to a vector, and then the MPP to the measurements will on average retrieve the scaled version of the input vector. This aesthetically pleasing property is used to make statements about various iterative algorithms such as iterative hard thresholding. To motivate it we consider the following generic procedure: let $\vy = \mA \vx$, where $\vx$ is the object we are interested in (e.g. an image), $\mA$ a dimensionality-reducing measurement system, and $\vy$ the resulting measurements. One admissible estimate of $\vx$ is given by $\mX \mA \vx$, where $\mX \in \ginvset(\mA)$. If the dimensionality-reducing system is random, we can compute the expectation of the reconstructed vector as
\begin{equation}
    \E[\mX \vy] = \E[\mX \mA \vx] = \E[\mX \mA] \vx.
\end{equation}
Provided that $\E[\mX \mA] = \frac{m}{n}\mI$, we will obtain, on average, a scaled version of the object we wish to reconstruct.%
\footnote{This linear step is usually part of a more complicated algorithm which also includes a nonlinear \emph{denoising} step (\emph{e.g., thresholding, non-local means}). If this denoising step is a contraction in some sense (\emph{i.e.} it brings us closer to the object we are reconstructing), the following scheme will converge: $\vx^{(k+1)} \bydef \eta(\vx^{(k)} + \frac{n}{m} \mX(\vy - \mA \vx^{(k)}))$.}

Clearly, this property will not hold for generalized inverses obtained by inverting a particular minor of the input matrix. As we show next, it does hold for a large class of norm-minimizing generalized inverses.

\begin{theorem}
  \label{thm:unbiased}
  Let $\mA \in \R^{m \times n},~m<n$ be a random matrix with iid columns such
  that $a_{ij} \sim (-a_{ij})$. Let further $\norm{\,
  \cdot \, }_{\nu}$ be any matrix norm such that $\norm{\mPi \, \cdot}_{\nu} =
  \norm{\, \cdot \, }_{\nu}$ and $\norm{\mSigma \, \cdot}_{\nu} = \norm{\, \cdot
  \, }_{\nu}$, for any permutation matrix $\mPi$ and modulation matrix
  $\mSigma = \diag(\vsigma)$, where $\vsigma \in \set{-1, 1}^n$. Then, if $\ginv{\nu}{\mA}$ and $\pginv{\nu}{\mA}$ are singletons for all $\mA$, we have
  \begin{equation}
    \E [ \ginv{\nu}{\mA} \mA ] = \E [ \pginv{\nu}{\mA} \mA ] = \tfrac{m}{n}\mI_n
  \end{equation}     
More generally, consider a function $f: C \mapsto f(C) \in C \subset \R^{n
\times m}$ that selects a particular representative for any bounded convex set
$C$, and assume that $f(\mU C) = \mU f(C)$ for any unitary matrix $\mU$ and
any $C$. Examples of such functions $f$ include selecting the centroid of the
convex set, or selecting its element with minimum Frobenius norm.
 We have
  \begin{equation}
    \E [ f(\ginv{\nu}{\mA}) \mA ] = \E [ f(\pginv{\nu}{\mA})   \mA ] = \tfrac{m}{n}\mI_n.
  \end{equation}
\end{theorem}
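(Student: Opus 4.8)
The plan is to exploit the symmetry of the distribution of $\mA$ together with the invariances of the norm $\norm{\,\cdot\,}_\nu$. The key observation is that if $\mPi$ is a permutation matrix and $\mSigma=\diag(\vsigma)$ a modulation matrix with $\vsigma\in\set{-1,1}^n$, then $\mA$ and $\mA\mPi\mSigma$ have the same distribution (permuting the iid columns and flipping signs, using $a_{ij}\sim(-a_{ij})$). I would first establish the \emph{equivariance} of the map $\mA\mapsto\ginv{\nu}{\mA}$ (and likewise $\pginv{\nu}{\mA}$) under this action: because $\norm{\mSigma'\,\cdot\,}_\nu=\norm{\,\cdot\,}_\nu$ and $\norm{\mPi'\,\cdot\,}_\nu=\norm{\,\cdot\,}_\nu$ for \emph{all} permutation/modulation matrices, and because $\mX\mapsto(\mSigma\mPi)^{-1}\mX$ is a bijection of $\ginvset(\mA)$ onto $\ginvset(\mA\mPi\mSigma)$, the minimizer transforms as $\ginv{\nu}{\mA\mPi\mSigma}=(\mPi\mSigma)^{-1}\ginv{\nu}{\mA}=\mSigma\mPi^{\T}\ginv{\nu}{\mA}$ (here I use that $\mPi,\mSigma$ are orthogonal and $\mSigma^{-1}=\mSigma$). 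One must check the norm invariance applies on the correct side: $\norm{\mX(\mPi\mSigma)^{-1}}$ should be handled by noting $\norm{\mX\mSigma'}_\nu=\norm{\,(\mSigma'\mX^{\T})^{\T}}_\nu$; if the stated invariance is only left-sided, I would instead observe that the columnwise norms, induced norms, etc.\ in Table~\ref{tab:norm-list} satisfy \emph{two}-sided permutation/modulation invariance, or simply note that $\ginvset(\mA)\mPi\mSigma=\ginvset(\mA\,\mathrm{something})$ and track the transpose carefully. For $\pginv{\nu}{\mA}$ one uses $(\mX\mSigma\mPi)(\mA\mPi\mSigma)=\mX\mA$, so the product $\mX\mA$ is left unchanged by the column action, which makes that case even cleaner.

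Next, I would average. Let $\mG(\mA)\bydef\ginv{\nu}{\mA}\mA$, which under the singleton hypothesis is a well-defined matrix-valued function. From the equivariance above, $\mG(\mA\mPi\mSigma)=\mSigma\mPi^{\T}\mG(\mA)\mPi\mSigma$. Since $\mA\overset{d}{=}\mA\mPi\mSigma$, we get $\E[\mG(\mA)]=\E[\mSigma\mPi^{\T}\mG(\mA)\mPi\mSigma]=\mSigma\mPi^{\T}\E[\mG(\mA)]\mPi\mSigma$ for every fixed $\mPi,\mSigma$. A matrix commuting (in this conjugated sense) with all permutations and all sign patterns must be a multiple of the identity: permutation invariance forces constant diagonal and constant off-diagonal entries, and modulation invariance kills the off-diagonal part (an entry $M_{ij}$, $i\neq j$, satisfies $M_{ij}=\sigma_i\sigma_j M_{ij}$ which fails for some choice of signs). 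Hence $\E[\mG(\mA)]=c\,\mI_n$. To pin down $c$, take the trace: $\trace(\ginv{\nu}{\mA}\mA)$. Here I use that $\ginv{\nu}{\mA}\in\ginvset(\mA)$ means $\ginv{\nu}{\mA}\mA$ is an idempotent (a projection) of rank $m$, since $\mA\ginv{\nu}{\mA}=\mI_m$ implies $(\ginv{\nu}{\mA}\mA)^2=\ginv{\nu}{\mA}(\mA\ginv{\nu}{\mA})\mA=\ginv{\nu}{\mA}\mA$; an idempotent of rank $m$ has trace $m$. Therefore $nc=m$, i.e.\ $c=m/n$, giving $\E[\ginv{\nu}{\mA}\mA]=\tfrac{m}{n}\mI_n$; the argument for $\pginv{\nu}{\mA}$ is identical.

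For the ``more generally'' part, the only change is that $f(\ginv{\nu}{\mA})$ replaces the singleton. The set $\ginv{\nu}{\mA}$ is a bounded convex set (bounded because any generalized inverse has $\mA\mX=\mI_m$ plus the norm is coercive on the relevant affine slice; convex as the argmin of a convex program over a convex feasible set). Under the column action the set transforms as $\ginv{\nu}{\mA\mPi\mSigma}=\mSigma\mPi^{\T}\ginv{\nu}{\mA}$, and I need $f$ to commute with this transformation. The hypothesis gives $f(\mU C)=\mU f(C)$ for unitary $\mU$; taking $\mU=\mSigma\mPi^{\T}$ (which is real orthogonal, hence unitary) handles the left multiplication, so $f(\ginv{\nu}{\mA\mPi\mSigma})=\mSigma\mPi^{\T}f(\ginv{\nu}{\mA})$, and the same averaging argument goes through verbatim, with $f(\ginv{\nu}{\mA})\mA$ again being an idempotent of rank $m$ so that the normalizing constant is still $m/n$. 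The main obstacle I anticipate is bookkeeping the sidedness of the invariances: making sure the column action on $\mA$ induces exactly a \emph{left} action by $\mSigma\mPi^{\T}$ on $\ginv{\nu}{\mA}$ that is absorbed by the stated left-invariance of $\norm{\,\cdot\,}_\nu$ and by the equivariance hypothesis on $f$, rather than a right action that the hypotheses do not control; if needed one restricts to the norms actually used in the paper, all of which enjoy the requisite two-sided symmetry.
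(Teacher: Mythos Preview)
Your approach is essentially the same as the paper's. The paper also isolates the equivariance step as a separate lemma (if $\norm{\mU\,\cdot\,}_\nu=\norm{\,\cdot\,}_\nu$ for an invertible $\mU$, then $\ginv{\nu}{\mA\mU}=\mU^{-1}\ginv{\nu}{\mA}$, and likewise for $\pginv{\nu}{\cdot}$), then applies it with $\mU=\mPi$ and $\mU=\mSigma$ separately to conclude $\mM=\mPi^{\H}\mM\mPi$ and $\mM=\mSigma\mM\mSigma$, forcing $\mM$ to be a scalar multiple of the identity; the constant is computed via $\trace(\mX\mA)=\trace(\mA\mX)=\trace\mI_m=m$. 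Your combined use of $\mPi\mSigma$ and your idempotent/rank computation of the trace are minor cosmetic variations on the same argument.

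One clarification: your worry about sidedness is unnecessary and the detour through $\norm{\mX\mSigma'}_\nu=\norm{(\mSigma'\mX^{\T})^{\T}}_\nu$ is a red herring. The hypothesis is \emph{left} invariance, $\norm{\mPi\,\cdot\,}_\nu=\norm{\mSigma\,\cdot\,}_\nu=\norm{\,\cdot\,}_\nu$, and that is exactly what the equivariance lemma needs: if $\mX\in\ginvset(\mA\mU)$ then $\mU\mX\in\ginvset(\mA)$, and by left invariance $\norm{\mX}_\nu=\norm{\mU\mX}_\nu\geq\norm{\ginv{\nu}{\mA}}_\nu=\norm{\mU^{-1}\ginv{\nu}{\mA}}_\nu$ (the last equality because $\mU^{-1}=\mSigma\mPi^{\T}$ is again a product of a modulation and a permutation matrix, so the left-invariance hypothesis applies to it as well). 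No right-invariance is ever invoked, and there is no need to restrict to specific norms from the table.
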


\begin{remark}
  This includes all classical norms (invariance to row permutations and sign
  changes) as well as any left-unitarily invariant norm (permutations and sign
  changes are unitary).
\end{remark}

To prove the theorem we use the following lemma,

\begin{lemma}\label{lem:ginvU}
  Let $\mU \in \C^{n \times n}$ be an invertible matrix, and $\norm{\, \cdot\, }_{\nu}$ a norm such that $\norm{\mU\, \cdot\, }_{\nu} =  \norm{\, \cdot\, }_{\nu}$. Then the following claims hold,
  \begin{align}
    &\ginv{\nu}{\mA \mU} = \mU^{-1} \ginv{\nu}{\mA},  \\
    &\pginv{\nu}{\mA \mU} = \mU^{-1} \, \pginv{\nu}{\mA}
  \end{align}
  for any $\mA$.
\end{lemma}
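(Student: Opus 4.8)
The plan is to work through the SVD characterization \eqref{eq:GinvsetMform} of the set of generalized inverses. First I would observe that $\ginvset(\mA \mU) = \mU^{-1} \ginvset(\mA)$ as sets: indeed, $\mX' \in \ginvset(\mA\mU)$ means $(\mA\mU)\mX'(\mA\mU) = \mA\mU$, which after right-multiplication by $\mU^{-1}$ reads $\mA(\mU\mX'\mU)\mU^{-1}\mA = \mA$... let me instead argue more directly via the full-rank reduction: since we have restricted to full-rank $\mA$ with $m<n$ and $\mU$ invertible, $\mA\mU$ is again full rank, and $\mX' \in \ginvset(\mA\mU) \iff (\mA\mU)\mX' = \mI_m \iff \mA(\mU\mX') = \mI_m \iff \mU\mX' \in \ginvset(\mA) \iff \mX' \in \mU^{-1}\ginvset(\mA)$. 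This gives the bijection $\mX \mapsto \mU^{-1}\mX$ between $\ginvset(\mA)$ and $\ginvset(\mA\mU)$.

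Next I would push this bijection through the objective functionals. For the $\ginv{\nu}{\cdot}$ claim: minimizing $\norm{\mX'}_\nu$ over $\mX' \in \ginvset(\mA\mU)$ is the same as minimizing $\norm{\mU^{-1}\mX}_\nu$ over $\mX \in \ginvset(\mA)$, and by the hypothesis $\norm{\mU\,\cdot\,}_\nu = \norm{\,\cdot\,}_\nu$ applied with argument $\mU^{-1}\mX$, we get $\norm{\mU^{-1}\mX}_\nu = \norm{\mU(\mU^{-1}\mX)}_\nu = \norm{\mX}_\nu$. Hence the two optimization problems have the same objective values along corresponding points, so the argmin sets correspond under the bijection: $\ginv{\nu}{\mA\mU} = \mU^{-1}\,\ginv{\nu}{\mA}$ (as sets, whether or not the minimizer is unique). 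For the $\pginv{\nu}{\cdot}$ claim, note $\pginv{\nu}{\cdot}$ is the special case of $\ginv{\cdot}{\cdot}$ with the norm $\norm{\,\cdot\,}_{\nu'} \bydef \norm{\,\cdot\,\mA}_\nu$ — but this substitution changes the matrix too, so I would instead repeat the argument: minimizing $\norm{\mX'(\mA\mU)}_\nu = \norm{\mU^{-1}\mX\mA\mU}_\nu$ over $\mX \in \ginvset(\mA)$. Here I would need the invariance to absorb the trailing $\mU$; with the hypothesis as literally stated ($\norm{\mU\,\cdot\,}_\nu = \norm{\,\cdot\,}_\nu$, invariance under left multiplication only), one gets $\norm{\mU^{-1}\mX\mA\mU}_\nu = \norm{\mX\mA\mU}_\nu$, which is not obviously $\norm{\mX\mA}_\nu$. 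This is the one delicate point.

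The main obstacle, then, is the $\pginv{}{\cdot}$ statement: the trailing $\mU$ factor does not cancel under mere left-invariance of $\norm{\,\cdot\,}_\nu$. I expect the resolution is that in the intended application (Theorem \ref{thm:unbiased}), $\mU$ is unitary (a signed permutation), so right-multiplication by $\mU$ is also an invariance for every norm listed in Table~\ref{tab:norm-list} that satisfies the hypothesis — or, more simply, that for the norms in question the relevant right-invariance holds automatically. Concretely, I would either (a) strengthen the hypothesis to also require $\norm{\,\cdot\,\mU}_\nu = \norm{\,\cdot\,}_\nu$ when proving the second identity, or (b) restrict to $\mU$ unitary and invoke that all the pertinent norms used downstream are at least right-orthogonally invariant on the relevant factor. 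Modulo that bookkeeping, the proof is the two-line bijection-plus-invariance argument above, carried out separately for the two functionals.
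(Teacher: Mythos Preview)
Your argument for the first identity is correct and is essentially the paper's own proof: establish the bijection $\ginvset(\mA\mU)=\mU^{-1}\ginvset(\mA)$ via $\mX'\leftrightarrow \mU\mX'$, then use the left-invariance $\norm{\mU\,\cdot\,}_\nu=\norm{\,\cdot\,}_\nu$ to transport the objective. (The SVD characterization~\eqref{eq:GinvsetMform} is not needed; the full-rank reduction $\mA\mX=\mI_m$ you used is exactly what the paper does.)

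For the second identity you have correctly put your finger on a real gap---and it is the same gap present in the paper's own argument. The paper reduces the $\pginv{}$ claim to the $\ginv{}$ claim by introducing $\norm{\,\cdot\,}_\mu \bydef \norm{\,\cdot\,\mA}_\nu$, observing that $\mu$ inherits the left-$\mU$-invariance, and then citing the first claim. But applying the first claim with $\mu$ yields $\ginv{\mu}{\mA\mU}=\mU^{-1}\ginv{\mu}{\mA}=\mU^{-1}\pginv{\nu}{\mA}$, and $\ginv{\mu}{\mA\mU}$ minimizes $\norm{\mX\mA}_\nu$ over $\ginvset(\mA\mU)$, \emph{not} $\norm{\mX\mA\mU}_\nu$; that is, $\ginv{\mu}{\mA\mU}\neq \pginv{\nu}{\mA\mU}$ in general. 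Exactly your ``trailing $\mU$'' obstruction. With only $\norm{\mU\,\cdot\,}_\nu=\norm{\,\cdot\,}_\nu$ the statement $\pginv{\nu}{\mA\mU}=\mU^{-1}\pginv{\nu}{\mA}$ does not follow.

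Your proposed fixes are the right ones. Option (a)---adding $\norm{\,\cdot\,\mU}_\nu=\norm{\,\cdot\,}_\nu$ to the hypotheses for the second identity---makes the argument go through verbatim, since then $\norm{\mX'\mA\mU}_\nu=\norm{(\mU\mX')\mA}_\nu$ and one is back to minimizing $\norm{\mY\mA}_\nu$ over $\mY\in\ginvset(\mA)$. Option (b) is what actually rescues the downstream use in Theorem~\ref{thm:unbiased}: there $\mU$ is a signed permutation, and every norm in Table~\ref{tab:norm-list} satisfying the left-invariance hypothesis is also invariant under right multiplication by signed permutations, so the extra assumption holds automatically in that context.
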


\begin{proof}[Proof of the lemma]
  We only prove the first claim; the remaining parts follow analogously using that $\pginv{\nu}{\mA} = \ginv{\mu}{\mA}$ where $\norm{\,\cdot\,}_{\mu} \bydef \norm{\,\cdot\, \mA}_{\nu} = \norm{\mU \,\cdot\, \mA}_{\nu} = \norm{\mU \,\cdot\,}_{\mu}$.

  \noindent {\em Feasibility:} $(\mA \mU) (\mU^{-1} \ginv{\nu}{\mA}) = \mA
    \ginv{\nu}{\mA} = \mI_m$.

  \noindent {\em Optimality:} Consider any $\mX \in \ginvset(\mA\mU)$.
  Since
  $
     (\mA  \mU) \mX = \mI_m = \mA (\mU \mX),
  $
  the matrix  $\mU \mX$ belongs to $\ginvset(\mA)$ hence
   \begin{align}
     \norm{ \mX}_{\nu}
      = \norm{\mU \mX}_{\nu}
      \geq \norm{\ginv{\nu}{\mA }}_{\nu} 
      = \norm{\mU \mU^{-1} \ginv{\nu}{\mA }}_{\nu}
      = \norm{\mU^{-1} \ginv{\nu}{\mA }}_{\nu}.
  \end{align}
\end{proof}

\begin{proof}[Proof of the theorem]
  Since the matrix columns are iid, $\mA$ is distributed identically to $\mA
  \mPi$ for any permutation matrix $\mPi$. This implies that functions of $\mA$ and
  $\mA \mPi$ have the same distribution. Thus the sets $f(\ginv{\nu}{\mA}) \mA$ and $f(\ginv{\nu}{\mA\mPi}) \mA\mPi$ are identically distributed. 
  Using Lemma~\ref{lem:ginvU} with $\mU=\mPi$, we have that
  \begin{align}
    \mM \bydef \E [f(\ginv{\nu}{\mA}) \mA]
    &= \E [f(\ginv{\nu}{\mA\mPi}) \mA\mPi] \\
    & = \E [f(\mPi^{\H}\ginv{\nu}{\mA}) \mA\mPi] \\
    & = \E [\mPi^{^\H} f(\ginv{\nu}{\mA}) \mA \mPi] = \mPi^{\H} \mM \mPi.
  \end{align}

This is more explicitly written $m_{ij} = m_{\pi(i)\pi(j)}$ for all $i,j$ and $\pi$ the permutation associated to the permutation matrix $\mPi$. Since this holds for any permutation matrix, 
   we can write

  \begin{equation} 
  \mM =
    \begin{bmatrix}
      c & b &  \cdots & b \\
      b & c &  \cdots & b \\
      \vdots & & \ddots & \vdots \\
      b & b & \cdots & c
    \end{bmatrix},
  \end{equation}
  We compute the value of $c = \tfrac{m}{n}$ as follows:

   \begin{align}
    nc = \trace \ \E[f(\ginv{\nu}{\mA}) \mA]
    &= \E[ \trace \ f(\ginv{\nu}{\mA}) \mA] 
    = \E[ \trace \ \mA \, f(\ginv{\nu}{\mA})]
    = \trace \ \mI_m = m.
  \end{align}

To show that $b=0$, we observe that since $a_{ij} \sim (-a_{ij})$, the matrices $\mA$ and $\mA \mSigma$ have the same distribution. As above, using Lemma~\ref{lem:ginvU} with $\mU = \mSigma$, this implies $\mM = \mSigma \mM \mSigma$ for any modulation matrix $\mSigma$, that is to say $m_{ij} = \sigma_{i}m_{ij}\sigma_{j}$ for any $i,j$ and $\vsigma \in \{-1,1\}^{n}$. It follows that $m_{ij}=0$ for $i \neq j$.
  Since we already established that $c_i \equiv \frac{m}{n}$ we conclude that $\E
  [f(\ginv{\nu}{\mA})\mA] = \frac{m}{n} \mI_n$.

\end{proof}

The conditions of Theorem \ref{thm:unbiased} are satisfied by various random matrix ensembles including the common iid Gaussian ensemble.

One possible interpretation of this result is as follows: For the
Moore-Penrose pseudoinverse $\mA^\dagger$ of a fat matrix $\mA$, we have that
$\mA^\dagger \mA$ is an orthogonal projection. For a general pseudoinverse
$\mX$, $\mX \mA$ is an oblique projection, along an angle different than
$\frac{\pi}{2}$. Nevertheless, for many norm minimizing generalized inverses
this angle is on average $\frac{\pi}{2}$, if the average is taken over common
classes of random matrices.


\section{
Norms Yielding the Moore-Penrose Pseudoinverse}
\label{sec:normsmpp}
A particularly interesting property of the MPP is that it minimizes many of
the norms in Table~\ref{tab:norm-list}. This is related to their
unitary invariance, and to geometric interpretation of the MPP \cite{Vetterli:2014vm}.

\subsection{Unitarily invariant norms}

\begin{definition}[Unitarily invariant matrix norm]
  A matrix norm $\norm{\cdot}$ is called unitarily invariant if and only if
  $\norm{\mU \mM \mV} = \norm{\mM}$ for any $\mM$ and any unitary matrices $\mU$ and
  $\mV$.
\end{definition}
Unitarily invariant matrix norms are intimately related to symmetric gauge
functions \cite{Mirsky:1960fi}, defined as vector norms invariant to sign
changes and permutations of the vector entries. A theorem by Von Neumann
\cite{vonNeumann:1937uj, Horn:2012tf} states that any unitarily invariant norm
$\norm{\cdot}$ is a symmetric gauge function $\phi$ of the singular values, i.e.,
$\norm{\cdot}=\phi(\sigma(\cdot))
\bydef \norm{\cdot}_{\phi}$. To be a symmetric gauge function, $\phi$ has to
satisfy the following properties \cite{Mirsky:1960fi}: 
\begin{enumerate}[label=(\roman*)]
  \item $\phi(\vx) \ge 0$ for $\vx \neq 0$,
  \item $\phi(\alpha \vx) = \abs{\alpha} \phi(\vx)$,
  \item $\phi(\vx + \vy) \leq \phi(\vx) + \phi(\vy)$,
  \item $\phi(\mPi \vx) = \phi(\vx)$,
  \item $\phi(\mSigma \vx) = \phi(\vx)$,
\end{enumerate}
where $\alpha \in \R$, $\mPi$ is a permutation matrix, and $\mSigma$ is a
diagonal matrix with diagonal entries in $\{-1,+1\}$. Zi\c{e}tak \cite{Zietak:1997ut} shows that the MPP minimizes any unitarily
invariant norm.
\begin{theorem}[Zi\c{e}tak, 1997]\label{th:Zietak}
  Let $\norm{\cdot}_{\phi}$ be a
  unitarily invariant norm corresponding to a symmetric gauge function
  $\phi$. Then, for any $\mA \in \C^{m \times n}$, $\norm{\mA^{\dagger}}_{\phi} = \min \,
  \set{\norm{\mB}_{\phi}: \mB \in \ginvset(\mA)}$. If additionally
  $\phi$ is strictly monotonic, then the set of minimizers contains a
  single element $\mA^{\dagger}$.
\end{theorem}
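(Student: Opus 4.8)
The plan is to use the SVD characterization of generalized inverses from \eqref{eq:Mform}--\eqref{eq:GinvsetMform} together with Von Neumann's theorem identifying unitarily invariant norms with symmetric gauge functions of the singular values. Write $\mA = \mU \mSigma \mV^\H$ and, for any $\mB \in \ginvset(\mA)$, set $\mM \bydef \mV^\H \mB \mU$, so that $\mM = \left[\begin{smallmatrix}\mSigma_\square^{-1}\\ \mS\end{smallmatrix}\right]$ for some $\mS$, with $\mS = \mathbf{0}$ corresponding exactly to $\mB = \mA^\dagger$. Since $\norm{\cdot}_\phi$ is unitarily invariant, $\norm{\mB}_\phi = \norm{\mM}_\phi = \phi(\singvals{\mM})$, so the problem reduces to showing that $\phi(\singvals{\mM})$ is minimized (over the choice of $\mS$) when $\mS = \mathbf{0}$, i.e. that the singular values of $\left[\begin{smallmatrix}\mSigma_\square^{-1}\\ \mathbf{0}\end{smallmatrix}\right]$ are ``dominated'' by those of $\left[\begin{smallmatrix}\mSigma_\square^{-1}\\ \mS\end{smallmatrix}\right]$ in the sense that respects every symmetric gauge function.

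The key tool for the last step is the standard fact that if $\phi$ is a symmetric gauge function and $\vx, \vy$ are nonnegative vectors with $x_i \le y_i$ for all $i$ (after sorting), then $\phi(\vx) \le \phi(\vy)$; this is immediate from monotonicity, which itself follows from properties (ii), (iii) and (v) of a gauge function (a symmetric gauge function is automatically monotone in the absolute values of the coordinates). So I need the singular values of $\mM$ to majorize coordinatewise those of $\mM_0 \bydef \left[\begin{smallmatrix}\mSigma_\square^{-1}\\ \mathbf{0}\end{smallmatrix}\right]$. The cleanest way to see this: $\mM^\H \mM = \mSigma_\square^{-2} + \mS^\H \mS \succeq \mSigma_\square^{-2} = \mM_0^\H \mM_0$, and both are $m \times m$ Hermitian positive definite; by the Weyl monotonicity principle for eigenvalues of Hermitian matrices, $\lambda_k(\mM^\H\mM) \ge \lambda_k(\mM_0^\H\mM_0)$ for every $k$, hence $\sigma_k(\mM) \ge \sigma_k(\mM_0) = 1/\sigma_k(\mSigma_\square)$ for every $k$. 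Therefore $\phi(\singvals{\mM}) \ge \phi(\singvals{\mM_0}) = \norm{\mA^\dagger}_\phi$, which is the first claim.

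For the uniqueness claim under strict monotonicity of $\phi$, I would argue by contraposition: if equality $\phi(\singvals{\mM}) = \phi(\singvals{\mM_0})$ holds while $\mS \neq \mathbf{0}$, then $\mS^\H\mS \neq \mathbf{0}$ is positive semidefinite and nonzero, so at least one eigenvalue strictly increases, giving $\sigma_k(\mM) > \sigma_k(\mM_0)$ for some $k$ while $\sigma_j(\mM) \ge \sigma_j(\mM_0)$ for all $j$; strict monotonicity of $\phi$ then forces $\phi(\singvals{\mM}) > \phi(\singvals{\mM_0})$, a contradiction. Hence the minimizer is unique and equals $\mA^\dagger$.

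The main obstacle is not conceptual but bookkeeping: one must be careful that Von Neumann's theorem is being applied to possibly non-square matrices (pad the singular value vectors with zeros consistently), and that ``strictly monotonic'' is used in the precise sense that $\phi(\vx) < \phi(\vy)$ whenever $0 \le x_i \le y_i$ for all $i$ with at least one strict inequality --- this is exactly what is needed and is the standard meaning. One should also note where the Weyl monotonicity inequality $\lambda_k(\mP + \mQ) \ge \lambda_k(\mP)$ for $\mQ \succeq 0$ comes from (min-max characterization of eigenvalues), but this is classical and may simply be cited.
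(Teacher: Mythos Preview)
Your argument is correct. Note, though, that the paper does not actually prove this theorem: it simply attributes it to Zi\c{e}tak \cite{Zietak:1997ut}. The closest the paper comes to a proof is Lemma~\ref{lem:uinv-block} in the appendix (also taken from Zi\c{e}tak), which asserts that for block matrices
\[
\mX = \begin{bmatrix}\mK_1 & \mat{0}\\ \mat{0} & \mat{0}\end{bmatrix},\qquad
\mY = \begin{bmatrix}\mK_1 & \mK_2\\ \mK_3 & \mK_4\end{bmatrix},
\]
one has $\sigma_j(\mX) \le \sigma_j(\mY)$ for every $j$, hence $\norm{\mX}_\phi \le \norm{\mY}_\phi$ for every unitarily invariant norm, with equality only when the extra blocks vanish if $\phi$ is strictly monotone. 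Combined with the SVD parametrization~\eqref{eq:Mform_rankdeff}, this is exactly Zi\c{e}tak's proof.

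Your route is genuinely more elementary in the full-rank case: rather than invoking the general block-matrix singular-value comparison, you observe that with $\mM = \left[\begin{smallmatrix}\mSigma_\square^{-1}\\ \mS\end{smallmatrix}\right]$ one has $\mM^\H\mM = \mSigma_\square^{-2} + \mS^\H\mS \succeq \mSigma_\square^{-2}$ in the Loewner order, and then Weyl monotonicity does the rest. (For the uniqueness step you might tighten ``at least one eigenvalue strictly increases'' by noting that $\mS \neq \mat{0}$ forces $\trace(\mS^\H\mS)>0$, so the trace of $\mM^\H\mM$ strictly exceeds that of $\mSigma_\square^{-2}$, which combined with the coordinatewise inequality gives a strict inequality in at least one index.)

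The tradeoff is scope: the theorem as stated covers arbitrary $\mA$, and in the rank-deficient case the parametrization is~\eqref{eq:Mform_rankdeff} with blocks $\mR,\mS,\mT$. There your Loewner-order trick no longer applies directly, since $\mM^\H\mM - \mM_0^\H\mM_0$ need not be positive semidefinite when $\mR \neq \mat{0}$. Lemma~\ref{lem:uinv-block} handles that general case (and the extra singular values of $\mM$ beyond index $r$ are trivially $\ge 0 = \sigma_j(\mM_0)$), which is what the Zi\c{e}tak approach buys. Since the paper restricts to full-rank matrices immediately after~\eqref{eq:GinvsetMform}, your proof is fully adequate for the paper's purposes; just flag the rank-deficient case as needing the more general lemma.
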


It is interesting to note that in the case of the operator norm, which is
associated to the symmetric gauge function $\phi(\cdot) = \norm{\cdot}_{\infty}$,
the minimizer is not unique. Zi\c{e}tak mentions a simple example for
rank-deficient matrices, but multiple minimizers are present in the full-rank
case too, as is illustrated by the following example.

\begin{example}\label{ex:MPPnonUniqueOpNorm}
  Let the matrix $\mA$ be
  \begin{equation}
    \mA = 
    \begin{bmatrix}
      1 & 1 & 0 \\
      1 & 0 & 1
    \end{bmatrix}.
  \end{equation}
  Singular values of $\mA$ are $\sigma_1=\sqrt{3}$ and $\sigma_2=1$, and its MPP is
  \begin{equation}
    \mA^{\dagger} =
    \mV
    \begin{bmatrix}
      \frac{\sqrt{3}}{3} & 0 \\
      0 & 1 \\
      0 & 0
    \end{bmatrix}
    \mU^\H
    = \frac{1}{3}
    \begin{bmatrix}
      1 & 1 \\
      2 & -1 \\
      -1 & 2
    \end{bmatrix}.
  \end{equation}
  Consider now matrices of the form
  \begin{equation}
    \mA^{\ddagger} =
    \mV
    \begin{bmatrix}
      \frac{\sqrt{3}}{3} & 0 \\
      0 & 1 \\
      \alpha & 0
    \end{bmatrix}
    \mU^\H.
  \end{equation}
  It is readily verified that $\mA \mA^{\ddagger} = \mI$ and $\sigma(\mA^{\ddagger}) = \set{\sqrt{\alpha^2
    + \frac{1}{3}},\ 1}$. Hence, whenever $0 < \abs{\alpha} \leq \sqrt{\frac{2}{3}}$, we have that
  $\norm{\mA^{\ddagger}}_{S_{\infty}} = \norm{\sigma(\mA^{\ddagger})}_{\infty} = 1 = \norm{\mA^{\dagger}}_{S_{\infty}}$, and yet $\mA^{\ddagger} \neq
  \mA^{\dagger}$.
\end{example}

\subsection{Left unitarily invariant norms}
Another case of particular interest is when the norm is not fully unitarily
invariant, but it is still unitarily invariant on one side. As we have
restricted our attention to fat matrices, we will examine left unitarily
invariant norms, because these will conveniently again lead to the MPP. 
%
%
%

\begin{lemma}
  \label{lem:luinv-block}
  Let $\mX, \mY \in \C^{n \times m}$ be defined as
  \begin{equation}
    \mX =
    \begin{bmatrix}
       \mK_1 \\ \mat{0}
    \end{bmatrix},
    \quad
    \mY = 
    \begin{bmatrix}
         \mK_1 \\ \mK_2
    \end{bmatrix}.
  \end{equation}
  Then $\norm{\mX} \leq \norm{\mY}$ for any left unitarily invariant norm
  $\norm{\cdot}$.
\end{lemma}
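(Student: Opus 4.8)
The plan is to reduce the statement to a property of symmetric gauge functions via the Von Neumann characterization of left unitarily invariant norms, together with a comparison of singular values of $\mX$ and $\mY$. The key observation is that $\mX$ and $\mY$ share the block $\mK_1$ in their top rows, so $\mY^{\H}\mY = \mK_1^{\H}\mK_1 + \mK_2^{\H}\mK_2 \succeq \mK_1^{\H}\mK_1 = \mX^{\H}\mX$. Consequently, by the Weyl monotonicity principle for eigenvalues of Hermitian matrices under positive semidefinite perturbations, $\sigma_i(\mX) \leq \sigma_i(\mY)$ for every $i$.

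Next I would invoke the analogue of Von Neumann's theorem for one-sided unitarily invariant norms: a norm $\norm{\cdot}$ on $\C^{n\times m}$ that is invariant under left multiplication by $n\times n$ unitaries is a function of $\mM^{\H}\mM$ (equivalently, of the singular values of $\mM$) alone, and moreover is of the form $\norm{\mM} = \phi(\sigma(\mM))$ for some symmetric gauge function $\phi$. (This follows by applying a left-unitary rotation to bring $\mM$ to a form with orthogonal columns, or directly from the polar decomposition $\mM = \mW\mP$ with $\mW$ having orthonormal columns and $\mP = (\mM^{\H}\mM)^{1/2}$, extending $\mW$ to a unitary.) Since symmetric gauge functions are monotone in the absolute values of their arguments — property (i)–(v) above imply monotonicity in each $|x_i|$ — and since the singular values are nonnegative, $\sigma_i(\mX)\leq\sigma_i(\mY)$ for all $i$ yields $\phi(\sigma(\mX)) \leq \phi(\sigma(\mY))$, i.e. $\norm{\mX}\leq\norm{\mY}$.

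The main obstacle is making the one-sided Von Neumann characterization precise and justifying the monotonicity of $\phi$ on the nonnegative orthant. For monotonicity: given $0 \leq x_i \leq y_i$ coordinatewise, one writes $\vx$ as an average of sign-flipped copies of vectors dominated by $\vy$ (the standard trick: $\vx = \sum_{\vsigma} \lambda_{\vsigma}\,\mSigma_{\vsigma}\vy'$ for a suitable $\vy'$ with $|\vy'_i| = y_i$), and then applies properties (ii), (iii), (v) to conclude $\phi(\vx)\leq\phi(\vy)$. For the characterization itself, the cleanest route is: write $\mX = \mW_{\mX}\,\mP_{\mX}$ and $\mY = \mW_{\mY}\,\mP_{\mY}$ via (thin) polar decompositions with $\mW$'s having orthonormal columns and $\mP = (\cdot^{\H}\cdot)^{1/2} \succeq 0$; extend each $\mW$ to a full unitary and use left invariance to get $\norm{\mX} = \norm{\widetilde{\mP_{\mX}}}$, where $\widetilde{\mP_{\mX}}$ is $\mP_{\mX}$ padded with zero rows, and likewise for $\mY$. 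Then reduce to comparing $\phi$ evaluated at $\sigma(\mX)$ versus $\sigma(\mY)$. Alternatively — and this is probably the slickest — one notes it suffices to prove the inequality in the special case $\mK_2 = \alpha \mK_2'$ and use convexity, but the singular-value interlacing argument above already does everything directly, so I would present that and relegate the two lemmas (one-sided Von Neumann, gauge monotonicity) to citations of \cite{Mirsky:1960fi, Horn:2012tf} or short self-contained arguments.
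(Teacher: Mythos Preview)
There is a genuine gap in your argument. The claim that a left unitarily invariant norm on $\C^{n\times m}$ must be a symmetric gauge function of the singular values is false. Left invariance does imply that $\norm{\mM}$ depends only on the left-unitary orbit of $\mM$, which is parameterized by $\mM^{\H}\mM$; but $\mM^{\H}\mM$ carries strictly more information than its eigenvalues (it also fixes the right singular vectors), so ``function of $\mM^{\H}\mM$'' is \emph{not} equivalent to ``function of $\sigma(\mM)$''. A concrete counterexample is $\normcolpq{2}{1}{\cdot}$, the sum of column $\ell^{2}$ norms: it is left unitarily invariant, yet for
\[
\mM = \begin{bmatrix}1&0\\0&0\end{bmatrix},\qquad
\mM' = \begin{bmatrix}1/\sqrt{2}&1/\sqrt{2}\\0&0\end{bmatrix},
\]
both with singular values $(1,0)$, one has $\normcolpq{2}{1}{\mM}=1\neq\sqrt{2}=\normcolpq{2}{1}{\mM'}$. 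Your polar-decomposition reduction $\norm{\mM}=\norm{\widetilde{\mP}}$ is fine, but to pass from $\widetilde{\mP}$ to a diagonal matrix of singular values you would need right unitary invariance as well. Consequently, even though $\sigma_i(\mX)\leq\sigma_i(\mY)$ is correct, it does not by itself give $\norm{\mX}\leq\norm{\mY}$ for merely left-invariant norms.

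The paper's proof avoids all of this with a two-line convexity trick: write
\[
\mX = \mT\mY,\qquad \mT=\begin{bmatrix}\mI&\mat{0}\\\mat{0}&\mat{0}\end{bmatrix}
= \tfrac{1}{2}\begin{bmatrix}\mI&\mat{0}\\\mat{0}&-\mI\end{bmatrix}
+ \tfrac{1}{2}\begin{bmatrix}\mI&\mat{0}\\\mat{0}&\mI\end{bmatrix},
\]
where both summands are unitary, and apply the triangle inequality together with left unitary invariance to get $\norm{\mX}\leq\tfrac{1}{2}\norm{\mY}+\tfrac{1}{2}\norm{\mY}=\norm{\mY}$. This uses only the norm axioms and left invariance, nothing about singular values.
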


\begin{proof}
Observe that $\mX = \mT \mY$, where
  \begin{equation}
\mT = 
\begin{bmatrix}
\mI & \mat{0} \\ 
\mat{0} & \mat{0}
\end{bmatrix}
    = \frac{1}{2}
    \begin{bmatrix}
      \mI & \mat{0} \\
      \mat{0} & -\mI
    \end{bmatrix}
    + \frac{1}{2}
    \begin{bmatrix}
      \mI & \mat{0} \\
      \mat{0} & \mI
    \end{bmatrix},
  \end{equation}
  with the two matrices on the right-hand side being unitary. The claim follows by applying the triangle inequality.
\end{proof}

With this lemma in hand, we can prove the main result for left unitarily invariant norms.

\begin{theorem}\label{th:LUINorms}
  Let $\norm{\cdot}$ be a left unitarily invariant norm, and let $\mA \in \C^{m \times n}$ be full rank with $m < n$. Then
  $\norm{\mA^{\dagger}} = \min \ \set{\norm{\mX}: \mX \in \ginvset(\mA)}$. If $\norm{\cdot}$ satisfies a strict inequality in Lemma \ref{lem:luinv-block} whenever $\mK_2 \neq \mat{0}$, then the set of minimizers is a singleton $\set{\mA^{\dagger}}$.
\end{theorem}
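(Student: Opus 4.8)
The plan is to use the SVD characterization of generalized inverses from~\eqref{eq:GinvsetMform} together with Lemma~\ref{lem:luinv-block} and the left unitary invariance. First I would recall that any $\mX \in \ginvset(\mA)$ can be written as $\mX = \mV \mM \mU^\H$ where $\mM = \left[\begin{smallmatrix}\mSigma_\square^{-1}\\ \mS\end{smallmatrix}\right]$ as in~\eqref{eq:Mform}, with $\mSigma_\square = \diag(\sigma_1(\mA),\dots,\sigma_m(\mA))$ and $\mS \in \C^{(n-m)\times m}$ arbitrary; the MPP $\mA^\dagger$ corresponds precisely to $\mS = \mat{0}$.

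Next I would exploit the left unitary invariance: since $\mV$ is unitary and $\norm{\cdot}$ is left unitarily invariant, $\norm{\mX} = \norm{\mV\mM\mU^\H} = \norm{\mM\mU^\H}$ (writing $\mV^\H\mV = \mI$, or rather $\norm{\mV(\cdot)}=\norm{\cdot}$). So minimizing $\norm{\mX}$ over $\ginvset(\mA)$ is the same as minimizing $\norm{\mM\mU^\H}$ over all $\mM$ of the form~\eqref{eq:Mform}. Now I would apply Lemma~\ref{lem:luinv-block} to the matrices $\mM\mU^\H = \left[\begin{smallmatrix}\mSigma_\square^{-1}\mU^\H\\ \mS\mU^\H\end{smallmatrix}\right]$ and $\left[\begin{smallmatrix}\mSigma_\square^{-1}\mU^\H\\ \mat{0}\end{smallmatrix}\right]$: with $\mK_1 = \mSigma_\square^{-1}\mU^\H$ and $\mK_2 = \mS\mU^\H$, the lemma gives $\norm{\left[\begin{smallmatrix}\mSigma_\square^{-1}\mU^\H\\ \mat{0}\end{smallmatrix}\right]} \leq \norm{\left[\begin{smallmatrix}\mSigma_\square^{-1}\mU^\H\\ \mS\mU^\H\end{smallmatrix}\right]} = \norm{\mX}$. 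But the left-hand side is exactly $\norm{\mA^\dagger}$, since $\mA^\dagger = \mV\left[\begin{smallmatrix}\mSigma_\square^{-1}\\ \mat{0}\end{smallmatrix}\right]\mU^\H$ and left unitary invariance removes the $\mV$. Hence $\norm{\mA^\dagger} \leq \norm{\mX}$ for every $\mX \in \ginvset(\mA)$, and since $\mA^\dagger \in \ginvset(\mA)$ the minimum is attained at the MPP.

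For the uniqueness claim, suppose $\norm{\cdot}$ satisfies the strict version of Lemma~\ref{lem:luinv-block} whenever $\mK_2 \neq \mat{0}$. If $\mX \in \ginvset(\mA)$ with $\mX \neq \mA^\dagger$, then its $\mS \neq \mat{0}$, hence $\mK_2 = \mS\mU^\H \neq \mat{0}$ (as $\mU$ is invertible), so the inequality above is strict: $\norm{\mA^\dagger} < \norm{\mX}$. Thus $\mA^\dagger$ is the unique minimizer.

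I do not anticipate a serious obstacle here; the only point requiring a touch of care is bookkeeping the placement of $\mU^\H$ on the right—it is harmless because Lemma~\ref{lem:luinv-block} is stated for arbitrary blocks $\mK_1, \mK_2$ and does not care that they happen to be right-multiplied by a unitary, and because $\mS \mapsto \mS\mU^\H$ is a bijection on $\C^{(n-m)\times m}$ so ``$\mS \neq \mat{0}$'' is equivalent to ``$\mS\mU^\H \neq \mat{0}$.'' One should also note explicitly that the minimum is actually attained (not merely an infimum), which is immediate since $\mA^\dagger$ itself lies in $\ginvset(\mA)$.
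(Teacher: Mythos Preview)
Your proof is correct and follows essentially the same approach as the paper: use the SVD characterization~\eqref{eq:GinvsetMform}, strip off $\mV$ by left unitary invariance, and apply Lemma~\ref{lem:luinv-block} with $\mK_1 = \mSigma_\square^{-1}\mU^\H$ and $\mK_2 = \mS\mU^\H$. Your write-up is more explicit about the uniqueness step and the bijection $\mS \mapsto \mS\mU^\H$, but the argument is the same.
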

\begin{proof}
Write $\mY \in \ginvset(\mA)$ in the form~\eqref{eq:GinvsetMform}. By the left unitary invariance and Lemma~\ref{lem:luinv-block}
\begin{eqnarray*}
\norm{\mY} = \norm{\mM \mU^{\H}} = \norm{\left[ \begin{array}{c}\mSigma_{\square}^{-1} \mU^{\H}\\  \mS \mU^{\H}\end{array} \right]} \geq \norm{\left[ \begin{array}{c}\mSigma_{\square}^{-1} \mU^{\H}\\  \mat{0} \end{array} \right]} = \norm{\mA^{\dagger}}.
\end{eqnarray*}
\end{proof}

\subsection{Left unitarily invariant norms on the product operator}
An immediate consequence of Theorem \ref{th:LUINorms} is that the same phenomenon occurs when minimizing a left unitarily invariant norm of the product $\mX\mA$. This simply comes from the observation that if $\norm{\cdot}_{\mu}$ is left unitarily invariant, then so is $\norm{\cdot}_{\nu} = \norm{\cdot \mA}_{\mu}$.

\begin{corollary}\label{cor:PLUINorms}
Let $\norm{\cdot}$ be a left unitarily invariant norm, and let $\mA \in \C^{m \times n}$ be full rank with $m < n$. Then $\norm{\mA^{\dagger}} = \min \ \set{\norm{\mX\mA}: \mX \in \ginvset(\mA)}$. If $\norm{\cdot}$ satisfies a strict inequality in Lemma \ref{lem:luinv-block} whenever $\mK_2 \neq \mat{0}$, then the set of minimizers is a singleton $\set{\mA^{\dagger}}$.
\end{corollary}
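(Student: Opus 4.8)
The plan is to obtain the corollary as a one-line change-of-norm consequence of Theorem~\ref{th:LUINorms}, following the remark that precedes it. Write $\norm{\cdot}_{\mu}$ for the given left unitarily invariant norm on $\C^{n\times m}$ and introduce the \emph{product norm} $\norm{\mX}_{\nu}\bydef\norm{\mX\mA}_{\mu}$. My first task is to verify that $\norm{\cdot}_{\nu}$ is itself a left unitarily invariant norm: absolute homogeneity and the triangle inequality are inherited from $\mu$ through the linear map $\mX\mapsto\mX\mA$; definiteness uses that $\mA$ has full row rank, so that $\norm{\mX}_{\nu}=0$ gives $\mX\mA=\mat{0}$ and hence $\mX=\mat{0}$ (a full-row-rank $\mA$ has trivial left kernel); and left unitary invariance follows from $\norm{\mU\mX}_{\nu}=\norm{\mU\mX\mA}_{\mu}=\norm{\mX\mA}_{\mu}=\norm{\mX}_{\nu}$ for every unitary $\mU$, using the left unitary invariance of $\mu$.

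The second, decisive step is to apply Theorem~\ref{th:LUINorms} verbatim, with $\norm{\cdot}_{\nu}$ playing the role of the norm there. Since $\mA$ is full rank with $m<n$, the theorem yields $\norm{\mA^{\dagger}}_{\nu}=\min\set{\norm{\mX}_{\nu}:\mX\in\ginvset(\mA)}$. I then merely unfold the definition of $\nu$ on the right-hand side: $\norm{\mX}_{\nu}=\norm{\mX\mA}_{\mu}$, so the minimum equals $\min\set{\norm{\mX\mA}:\mX\in\ginvset(\mA)}$, which is exactly the quantity in the statement. The point I want to stress is that the left-hand side is produced by the theorem already in the form $\norm{\mA^{\dagger}}$ --- the norm of the pseudoinverse itself, evaluated in the product norm $\nu$ that is being minimized on the right --- so no rewriting of $\mA^{\dagger}$ through the oblique projection $\mA^{\dagger}\mA$ is performed or needed. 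This is the sense in which the equality $\norm{\mA^{\dagger}}=\min\set{\norm{\mX\mA}}$ holds.

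For the uniqueness clause I transfer the strict form of Lemma~\ref{lem:luinv-block} from $\mu$ to $\nu$. Suppose $\mu$ is strict whenever the lower block is nonzero, and take blocks with $\mK_2\neq\mat{0}$. Right-multiplication by $\mA$ carries $\big[\begin{smallmatrix}\mK_1\\\mK_2\end{smallmatrix}\big]$ to $\big[\begin{smallmatrix}\mK_1\mA\\\mK_2\mA\end{smallmatrix}\big]$, and because $\mA$ has full row rank, $\mK_2\mA=\mat{0}$ would force $\mK_2=\mat{0}$; hence the lower block $\mK_2\mA$ stays nonzero. Applying the strict $\mu$-inequality to the transformed blocks shows $\nu$ too is strict whenever $\mK_2\neq\mat{0}$, so the uniqueness half of Theorem~\ref{th:LUINorms} applied to $\nu$ forces the unique minimizer to be $\mA^{\dagger}$.

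I expect no real difficulty in the value statement --- once the product norm is set up, it is an immediate specialization of Theorem~\ref{th:LUINorms}. The only steps demanding care are (i) the uniqueness transfer, where full row rank of $\mA$ is exactly what guarantees that a nonzero lower block is not annihilated by right-multiplication by $\mA$, and (ii) keeping straight that the symbol $\norm{\mA^{\dagger}}$ on the left denotes the value of the product norm $\nu=\norm{\cdot\mA}_{\mu}$ at $\mA^{\dagger}$, i.e.\ the same norm minimized on the right, so that both sides are measured consistently.
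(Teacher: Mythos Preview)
Your proposal is correct and follows essentially the same approach as the paper: the paper's entire argument is the one-line observation preceding the corollary that if $\norm{\cdot}_{\mu}$ is left unitarily invariant then so is $\norm{\cdot}_{\nu}=\norm{\cdot\,\mA}_{\mu}$, after which Theorem~\ref{th:LUINorms} applies verbatim. You simply fill in details the paper leaves implicit---verifying definiteness of $\nu$ via the trivial left kernel of a full-row-rank $\mA$, and transferring the strict-inequality hypothesis from $\mu$ to $\nu$ through the same observation that $\mK_2\neq\mat{0}$ forces $\mK_2\mA\neq\mat{0}$---and you correctly flag that the left-hand side $\norm{\mA^{\dagger}}$ must be read in the product norm $\nu$ (equivalently $\norm{\mA^{\dagger}\mA}_{\mu}$), consistent with the paper's later usage in~\eqref{eq:unique_schatten_pginv}.
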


\subsection{Classical norms leading to the MPP}

As a corollary, some large families of norms lead to MPP. In particular the following holds:
\begin{corollary} 
\label{cor:classicalMPP}
Let $\mA \in \C^{m \times n}$ be full rank with $m < n$. 
\begin{enumerate}
\item {\bf Schatten norms:}\label{it:schattenMPP} for $1 \leq p \leq \infty$ 
\[
\mA^{\dagger}  \in \ginv{S_p}{\mA}
\qquad
\mA^{\dagger}  \in \pginv{S_p}{\mA}.
\]
The considered sets are singletons for $p<\infty$.\\ The set $\pginv{S_\infty}{\mA}$ is a singleton, but
$\ginv{S_\infty}{\mA}$ is \emph{not necessarily} a singleton.
\item {\bf Columnwise mixed norms:} \label{it:colMPP} for $1 \leq q \leq \infty$
\[
\mA^{\dagger}  \in \ginv{\colpq{2}{q}}{\mA}
\qquad
\mA^{\dagger}  \in \pginv{\colpq{2}{q}}{\mA}
\]
The considered sets are singletons for $q<\infty$, but \emph{not always} for $q=\infty$.
\item {\bf Induced norms:} \label{it:inducedMPP} for $1 \leq p \leq \infty$
\[
\mA^{\dagger} \in \ginv{\lplq{p}{2}}{\mA}
\qquad
\mA^{\dagger} \in \pginv{\lplq{p}{2}}{\mA}
\]
The set $\ginv{\lplq{p}{2}}{\mA}$ is \emph{not always} a singleton for $p \leq 2$.
\end{enumerate} 
\end{corollary}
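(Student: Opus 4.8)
The plan is to treat Corollary~\ref{cor:classicalMPP} as three instances of one mechanism: each of the norm families listed is \emph{left unitarily invariant} --- Schatten norms because they are fully unitarily invariant; the columnwise norms $\normcolpq{2}{q}{\cdot}$ because left multiplication by a unitary matrix preserves each column's $\ell^2$ norm; the induced norms $\normlplq{p}{2}{\cdot}$ because $\norm{\mU\mM\vx}_2=\norm{\mM\vx}_2$ for unitary $\mU$. These are, in fact, exactly the rows of Table~\ref{tab:norm-list} for which left unitary invariance holds, which is why the value $2$ is singled out for the column exponent and the output exponent. Given this, for any such norm $\norm{\cdot}_{\nu}$, Theorem~\ref{th:LUINorms} yields $\mA^{\dagger}\in\ginv{\nu}{\mA}$ and Corollary~\ref{cor:PLUINorms} yields $\mA^{\dagger}\in\pginv{\nu}{\mA}$. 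What remains is the uniqueness/non-uniqueness bookkeeping, which I would carry out family by family.

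\emph{Schatten norms.} For $p<\infty$ the gauge function $\norm{\cdot}_p$ is strictly monotone, so either Theorem~\ref{th:Zietak} applies directly, or one verifies the strict form of Lemma~\ref{lem:luinv-block}: if $\mK_2\neq\mat{0}$ then $\trace(\mK_1^{\H}\mK_1+\mK_2^{\H}\mK_2)>\trace(\mK_1^{\H}\mK_1)$, so the sorted singular values increase strictly in at least one coordinate and hence so does $\sum_i\sigma_i(\cdot)^{p}$. This gives $\ginv{S_p}{\mA}=\pginv{S_p}{\mA}=\{\mA^{\dagger}\}$ for $p<\infty$. For $p=\infty$ the strict inequality can fail, and Example~\ref{ex:MPPnonUniqueOpNorm} is a matrix with $\ginv{S_\infty}{\mA}$ not a singleton. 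The point I would stress is why $\pginv{S_\infty}{\mA}$ still is: for any $\mX\in\ginvset(\mA)$ the matrix $\mP:=\mX\mA$ satisfies $\mP^2=\mX(\mA\mX\mA)=\mP$, has rank $m$, and has null space $\nullspace(\mA)$; a rank-$m$ idempotent in $\C^{n\times n}$ has $n-m$ zero singular values and $m$ singular values $\geq 1$, all equal to $1$ precisely when $\mP$ is the orthogonal projection onto $\range(\mA^{\H})$, i.e. $\mP=\mA^{\dagger}\mA$. Since $\norm{\mP}_{S_p}^p=\sum_{i=1}^m\sigma_i(\mP)^p\geq m$ (resp. $\norm{\mP}_{S_\infty}\geq 1$) with equality only for that projection, and since $\mX\mapsto\mX\mA$ is injective on $\ginvset(\mA)$ because $\mA$ has full row rank, we conclude $\pginv{S_p}{\mA}=\{\mA^{\dagger}\}$ for every $p\in[1,\infty]$.

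\emph{Columnwise mixed norms and induced norms.} For $q<\infty$, the first part of Lemma~\ref{lem:ColumnEntrywise} collapses $\ginv{\colpq{2}{q}}{\mA}$ to $\ginv{2}{\mA}$, the Frobenius minimizer, which is $\{\mA^{\dagger}\}$; and $\pginv{\colpq{2}{q}}{\mA}=\{\mA^{\dagger}\}$ follows from Corollary~\ref{cor:PLUINorms} together with the observation that $\normcolpq{2}{q}{\cdot}$ satisfies the strict form of Lemma~\ref{lem:luinv-block} when $q<\infty$ (appending a nonzero block strictly increases the $\ell^2$ norm of some column, hence the $\ell^q$ aggregate). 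For $q=\infty$ the second part of Lemma~\ref{lem:ColumnEntrywise} (equivalently \eqref{eq:l1lq_characterization}) gives $\normcolpq{2}{\infty}{\cdot}=\normlplq{1}{2}{\cdot}$, so this case is subsumed by the induced norms $\normlplq{p}{2}{\cdot}$, for which membership of $\mA^{\dagger}$ in $\ginv{\lplq{p}{2}}{\mA}$ and $\pginv{\lplq{p}{2}}{\mA}$ (all $p\in[1,\infty]$) is already established. The only genuinely matrix-dependent claim is that $\ginv{\lplq{p}{2}}{\mA}$ need not be a singleton for $p\leq 2$; here I would exhibit explicit witnesses: $p=2$ is the spectral norm and Example~\ref{ex:MPPnonUniqueOpNorm} applies, while for $p=1$ one has $\normlplq{1}{2}{\mX}=\max_j\norm{\vx_j}_2$ with the constraints $\mA\vx_j=\ve_j$ decoupling across columns, so as soon as $n>m$ and a single column radius dominates, the minimizers form a ball around $\mA^{\dagger}$ rather than a point; a rank-one perturbation construction covers $1<p<2$.

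\emph{Main obstacle.} The left-unitary-invariance reductions are routine; the delicate parts are (i) correctly separating ``$\pginv{S_\infty}{\mA}$ is always a singleton'' from ``$\ginv{S_\infty}{\mA}$ need not be,'' which is exactly what the idempotency of $\mX\mA$ and the characterization of the singular values of a rank-$m$ idempotent provide, and (ii) producing clean non-uniqueness examples for $\ginv{\lplq{p}{2}}{\mA}$ across the whole range $p\leq 2$, not merely at $p=2$.
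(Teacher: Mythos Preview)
Your plan matches the paper's: all three norm families are left unitarily invariant, so membership of $\mA^{\dagger}$ in both $\ginv{\nu}{\mA}$ and $\pginv{\nu}{\mA}$ follows from Theorem~\ref{th:LUINorms} and Corollary~\ref{cor:PLUINorms}, and the uniqueness for Schatten $p<\infty$ and columnwise $q<\infty$ comes from the strict form of Lemma~\ref{lem:luinv-block}. The non-uniqueness witnesses you sketch (Example~\ref{ex:MPPnonUniqueOpNorm} for $p=2$, a dominated-column argument for $p=1$, a perturbation for $1<p<2$) are exactly what the paper does, using the single matrix~\eqref{eq:CounterExampleCorollary2} and a geometric ellipse/$\ell^p$-ball argument.

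The one place you genuinely diverge is the uniqueness of $\pginv{S_p}{\mA}$, and in particular $p=\infty$. The paper works in the SVD coordinates~\eqref{eq:GinvsetMform}, writes $\mX\mA$ as $\left[\begin{smallmatrix}\mI_m & \mat{0}\\ \mS\mSigma_\square & \mat{0}\end{smallmatrix}\right]$ up to unitaries, and then does an explicit variational calculation to show the spectral norm strictly exceeds $1$ whenever $\mS\neq\mat{0}$. Your route---$\mX\mA$ is a rank-$m$ idempotent, hence its $m$ nonzero singular values are all $\geq 1$ (Cauchy interlacing on $\mV^{\H}\mP^{\H}\mP\mV=\mI_m$ with $\mV$ an orthonormal basis of $\range(\mP)$), with equality throughout iff $\mP$ is the orthogonal projection---is cleaner and handles all $p\in[1,\infty]$ at once, whereas the paper treats $p<\infty$ via Lemma~\ref{lem:uinv-block} and $p=\infty$ by a separate hands-on bound.

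One omission: item~2 asserts that \emph{both} $\ginv{\colpq{2}{\infty}}{\mA}$ and $\pginv{\colpq{2}{\infty}}{\mA}$ can fail to be singletons. Your reduction $\normcolpq{2}{\infty}{\cdot}=\normlplq{1}{2}{\cdot}$ handles the former, but you do not produce a counterexample for $\pginv{\colpq{2}{\infty}}{\mA}$, and the corollary says nothing about non-uniqueness of $\pginv{\lplq{p}{2}}{\mA}$ that you could borrow. The paper supplies a separate construction here: in SVD coordinates $\normcolpq{2}{\infty}{\mX\mA}=\normcolpq{2}{\infty}{\left[\begin{smallmatrix}\mV_1^{\H}\\ \mS\mSigma\mV_1^{\H}\end{smallmatrix}\right]}$, and one picks $\mS$ to kill the column of $\mV_1^{\H}$ with the largest $\ell^2$ norm while perturbing the others by something small. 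You should add this (or an equivalent) to close the gap.
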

\begin{remark}
Whether $\pginv{\lplq{p}{2}}{\mA}$, and
$\ginv{\lplq{p}{2}}{\mA}$ for $2<p<\infty$ are singletons remains an open
question. 
\end{remark}

\begin{remark}
  Let us highlight an interesting consequence of
  Corollary~\ref{cor:classicalMPP}: consider the $\normlplq{\infty}{2}{\cdot}$
  norm, whose computation is known to be NP-complete \cite{Lewis:2010xx}.
  Despite this fact, Corollary~\ref{cor:classicalMPP} implies that we can find
  an optimal solution of an optimization problem involving this norm.
\end{remark}

Proof of Corollary~\ref{cor:classicalMPP} is given in Appendix~\ref{sub:proof_of_classicalMPP}.

\subsection{Norms that Almost Never Yield the MPP}
\label{sub:norms_not_mpp}

After concentrating on matrix norms whose minimization \emph{always} leads to the MPP, we now take a moment to look at norms that (almost) \emph{never} lead to the MPP. The norms not covered by Corollary~\ref{cor:classicalMPP}: columnwise mixed norms with $p \neq 2$, rowwise norms for $(p,q) \neq (2,2)$,  and induced norms for $q \neq 2$. Our main result is the following theorem whose proof is given in Appendix \ref{sub:proof_of_theorem_not_always_mpp}.
\begin{theorem}
    \label{thm:not_always_mpp}
    Consider $m<n$. 
    \begin{enumerate}
    \item For any $m \geq 1$, $\mA \in \R^{m \times n}$, $0<p\leq \infty$, $0<q<\infty$, $\ginv{\colpq{p}{q}}{\mA} = \ginv{p}{\mA}$. Moreover, there exists $\mA_{1} \in \R^{m \times n}$ such that for $0<p,q<\infty$:
    \begin{eqnarray*}
    \mA_{1}^{\dag} \in \ginv{\colpq{p}{q}}{\mA_{1}}  & \Longleftrightarrow &  p = 2
    \end{eqnarray*}
    \item  For $m=1$,  and any $\mA \in \R^{1 \times n}$, $0<p\leq \infty$, $0<q<\infty$, $\ginv{\rowpq{p}{q}}{\mA} = \ginv{q}{\mA}$. Hence, the matrix $\mA_{1} \in \R^{1 \times n}$ satisfies for $0<p\leq \infty$, $0<q < \infty$:
    \begin{eqnarray*}
    \mA_{1}^{\dag} \in \ginv{\rowpq{p}{q}}{\mA_{1}}  & \Longleftrightarrow &  q = 2
    \end{eqnarray*}
    For any $m \geq 2$, there exists $\mA_{2} \in \R^{m \times n}$ such that for $0<p,q<\infty$: 
    \begin{eqnarray*}
    \mA_{2}^{\dag} \in \ginv{\rowpq{p}{q}}{\mA_{2}} & \Longleftrightarrow &  p = 2
    \end{eqnarray*} 
    For any $m \geq 3$, there exists $\mA_{3} \in \R^{m \times n}$ such that  for $0<p,q<\infty$:
    \begin{eqnarray*}
    \mA_{3}^{\dag} \in \ginv{\rowpq{p}{q}}{\mA_{3}} & \Longleftrightarrow &  (p,q) \in \set{(2,2),(1, 2)}
    \end{eqnarray*}
    Whether a similar construction exists for $m =2$ remains open.
    Combining the above, for any $m \geq 3$,  $0<p,q<\infty$, the following properties are equivalent
    \begin{itemize}
       \item for any $\mA  \in \R^{m \times n}$ we have $\mA^{\dag} \in \ginv{\rowpq{p}{q}}{\mA}$
       \item $(p,q) = (2,2)$.
    \end{itemize}   
    The matrix $(\mA_{3})^{\dag}$ has positive entries. In fact, any $\mA$ such that $(\mA)^{\dag}$ has positive entries satisfies
    \(
    (\mA)^{\dag} \in \ginv{\rowpq{1}{2}}{\mA}.
    \)

    \item For $m=1$, and any $\mA \in \R^{1 \times n}$, $0<p,q\leq \infty$, $\pginv{\colpq{p}{q}}{\mA} = \ginv{p}{\mA}$.
    Hence, the matrix $\mA_{1} \in \R^{m \times 1}$ satisfies for $0<p<\infty$, $0<q\leq\infty$:
    \begin{eqnarray*}
    \mA_{1}^{\dag} \in \ginv{\colpq{p}{q}}{\mA_{1}}  & \Longleftrightarrow &  p = 2
    \end{eqnarray*}

    For any $m \geq 2$,  there exists $\mA_{4} \in \R^{m \times n}$ such that for $1\leq p < \infty$ and $0<q<\infty$:
    \begin{eqnarray*}
    \mA_{4}^{\dag} \in \pginv{\colpq{p}{q}}{\mA_{4}}  & \Longleftrightarrow &  p = 2
    \end{eqnarray*} 
    Whether a similar construction is possible for $m \geq 2$ and $0<p \leq 1$ remains open.
    \item  For $m=1$, and any $\mA \in \R^{1 \times n}$, $0<p,q \leq \infty$, $\pginv{\rowpq{p}{q}}{\mA} = \ginv{q}{\mA}$. Hence, the matrix $\mA_{1} \in \R^{m \times 1}$ satisfies for $0<p \leq \infty$, $0<q <\infty$:
    \begin{eqnarray*}
    \mA_{1}^{\dag} \in \pginv{\rowpq{p}{q}}{\mA_{1}}  & \Longleftrightarrow &  q = 2
    \end{eqnarray*}

    For any $m \geq 2$, the matrix $\mA_{4}$ satisfies for $1 \leq p\leq \infty$, $0<q < \infty$:
    \begin{eqnarray*}
    \mA_{4}^{\dag} \in \pginv{\rowpq{p}{q}}{\mA_{4}}  & \Longleftrightarrow &  q = 2
    \end{eqnarray*}
    Whether a similar construction is possible for $m \geq 2$ and $0<p \leq 1$ remains open.
  
    For any $m \geq 3$, there exists $\mA_{5} \in \R^{m \times n}$ such that  for $1 \leq p<\infty$ and $q=2$:
    \begin{eqnarray*}
        \mA_{5}^{\dag} \in \pginv{\rowpq{p}{q}}{\mA_{3}} & \Longleftrightarrow &  p=2
    \end{eqnarray*} 
    Whether a similar construction for $m =2$ and/or $0 < p \leq 1$ is possible remains open.

    Combining the above, for any $m \geq 3$,  $1 \leq p < \infty$ $0<q<\infty$, the following properties are equivalent
    \begin{itemize}
       \item for any $\mA  \in \R^{m \times n}$ we have $\mA^{\dag} \in \pginv{\rowpq{p}{q}}{\mA}$
       \item $(p,q) = (2,2)$.
    \end{itemize}   
 \end{enumerate}
\end{theorem}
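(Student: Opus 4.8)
The plan is to split the ``for every $\mA$'' reductions from the ``there exists $\mA$'' witnesses. The reductions are cheap. For columnwise norms on $\mX$ the program decouples over columns---this is Lemma~\ref{lem:ColumnEntrywise}---so $\ginv{\colpq{p}{q}}{\mA}=\ginv{p}{\mA}$ for every full-rank $\mA$ and $0<q<\infty$. When $m=1$ a candidate inverse $\mX\in\R^{n\times 1}$ is a single column, so its rows are scalars and hence $\normrowpq{p}{q}{\mX}=\norm{\mX}_{q}$, $\normcolpq{p}{q}{\mX}=\norm{\mX}_{p}$, while $\mX\mA$ is rank one with every row proportional to $\mA$, which collapses $\pginv{\colpq{p}{q}}{\mA}$ and $\pginv{\rowpq{p}{q}}{\mA}$ to $\ginv{p}{\mA}$ and $\ginv{q}{\mA}$. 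In each collapsed situation one takes for $\mA_{1}$ a diagonal matrix with pairwise distinct positive entries $a_{1},\dots,a_{m}$ augmented by one extra column (for $m=1$, simply a row $\va$ with distinct-modulus entries): computing $\ginv{p}{\mA_{1}}$ then reduces to the scalar problem $\min\norm{\vx}_{p}$ subject to $\langle\va,\vx\rangle=1$, whose minimizer has $x_{i}\propto\sign(a_{i})\abs{a_{i}}^{p^{\H}-1}$; this agrees with the MPP's $x_{i}\propto a_{i}$ exactly when $p^{\H}=2$, i.e.\ $p=2$, the distinctness of the $a_{i}$ excluding the remaining coincidence. (For $0<p<1$, where the program is not convex, one instead checks directly that the MPP column is not a local minimizer.)

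For the witnesses I would work from the SVD description~\eqref{eq:GinvsetMform}: $\ginvset(\mA)$ is the affine set $\mA^{\dagger}+\{\mH:\text{every column of }\mH\text{ lies in }\ker\mA\}$, and $\mA^{\dagger}\mA$ is the orthogonal projection onto $\range(\mA^{\H})$. Hence for a matrix norm $\norm{\cdot}_{\nu}$ differentiable at $\mA^{\dagger}$---the case whenever $\mA$ is in general position, since $\mA^{\dagger}$ then has no zero entry---the stationarity condition that every column of the gradient of $\norm{\cdot}_{\nu}$ at $\mA^{\dagger}$ lie in $\range(\mA^{\H})$ is necessary for $\mA^{\dagger}$ to be a local minimizer over $\ginvset(\mA)$, and, when $\norm{\cdot}_{\nu}$ is convex (e.g.\ $p,q\ge1$), also sufficient for global optimality. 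For the $\pginv$ variant the same test is applied to $\norm{\cdot}_{\nu}\bydef\norm{\,\cdot\,\mA}_{\mu}$, whose gradient at $\mA^{\dagger}$ is $\mathbf{G}\mA^{\H}$ with $\mathbf{G}$ the gradient of $\norm{\cdot}_{\mu}$ at the projection $\mA^{\dagger}\mA$---and its columns need not lie in $\range(\mA^{\H})$ at all. For entrywise/columnwise $\ell^{p}$ the $j$th gradient column is the entrywise map $t\mapsto\sign(t)\abs{t}^{p-1}$ applied to the $j$th column of $\mA^{\dagger}$ (resp.\ of $\mA^{\dagger}\mA$), up to a positive scalar; for $\rowpq{p}{q}$ there is in addition a positive per-row factor $\norm{(\mA^{\dagger})^{i}}_{p}^{q-p}$. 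Two sign-stable endpoints are handled separately: $(p,q)=(2,2)$ is the Frobenius norm, covered by Corollary~\ref{cor:classicalMPP}; and whenever $\mA^{\dagger}$ is entrywise positive,
\[
\normrowpq{1}{2}{\mX}^{2}=\sum_{i}\Big(\sum_{j}\abs{x_{ij}}\Big)^{2}\ \ge\ \sum_{i}\big((\mX\mathbf{1})_{i}\big)^{2}=\norm{\mX\mathbf{1}}_{2}^{2}\ \ge\ \norm{\mA^{\dagger}\mathbf{1}}_{2}^{2}=\normrowpq{1}{2}{\mA^{\dagger}}^{2},
\]
because $\mA(\mX\mathbf{1})=\mathbf{1}$ and $\mA^{\dagger}\mathbf{1}$ is the $\ell^{2}$-minimal solution of $\mA\vz=\mathbf{1}$; thus $(1,2)$ lies in the ``good set'' of such an $\mA$, which is the final assertion of part~2.

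It then remains to exhibit, in each case, a witness making the biconditional tight. For $\mA_{2}$ and $\mA_{4}$ one takes a small block ($m=2$ or $3$) whose MPP (resp.\ whose projection $\mA^{\dagger}\mA$) has a column on which $t\mapsto\sign(t)\abs{t}^{p-1}$ leaves $\range(\mA^{\H})$ for every $p\ne2$, chosen so that the $p=2$ half holds for \emph{all} $q$; larger $m$ is reached by appending an identity block and checking that this fixed block introduces no competing descent direction. For $\mA_{3}$ and $\mA_{5}$ one needs three requirements simultaneously: $\mA^{\dagger}$ entrywise positive (so that, by the displayed inequality, $(1,2)$ really belongs to the good set of $\ginv{\rowpq{p}{q}}{\cdot}$); the vector $\big((\mA^{\dagger}\mathbf{1})_{i}^{q-1}\big)_{i}$ not in $\range(\mA^{\H})$ for $q\ne2$ (which, through the per-row-factor form of the gradient, excludes $(2,q)$ with $q\ne2$); and a gradient column outside $\range(\mA^{\H})$ for every $p\ne2$ (which excludes all remaining pairs). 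The ``combining'' equivalences for $m\ge3$ are then immediate: $(2,2)$ works for all $\mA$ by Corollary~\ref{cor:classicalMPP}, while any $(p,q)\ne(2,2)$ is killed by $\mA_{2}$ when $p\ne2$ and by $\mA_{3}$ when $p=2$, $q\ne2$ (and by $\mA_{4},\mA_{5}$ in the $\pginv{\rowpq{p}{q}}{\cdot}$ statement).

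The main obstacle is this last step for $\mA_{3}$ and $\mA_{5}$: producing a \emph{single} explicit matrix for which both ``escape from $\range(\mA^{\H})$'' phenomena provably hold for \emph{every} excluded exponent pair at once, while $\mA^{\dagger}$ stays positive, and---for $\mA_{5}$---coping with the fact that the displayed positivity trick is \emph{unavailable} (it bounds a norm of $\mX$, not of $\mX\mA$), so that there the $(1,2)$ line too must be eliminated via the gradient criterion. Reconciling these constraints is exactly what forces $m\ge3$ and $1\le p$, and is why the $m=2$ and $0<p\le1$ analogues of the two $\pginv$ statements are left open; throughout, the quasi-norm ranges $0<p,q<1$ are dispatched by the same first-order computation, now certifying merely that $\mA^{\dagger}$ is not a local minimizer.
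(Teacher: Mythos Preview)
Your overall architecture matches the paper's: reduce to a first-order test by writing $\ginvset(\mA)=\mA^{\dagger}+\{\mN\mZ\}$ and checking whether $\mat{0}\in(\mI-\mA^{\dagger}\mA)\,\partial g(\mA^{\dagger})$, with the explicit (sub)gradient formulae for $\colpq{p}{q}$ and $\rowpq{p}{q}$ from Lemma~\ref{le:subgradients0}. The reductions for $m=1$ and the columnwise decoupling are exactly as in the paper. Your argument that any $\mA$ with entrywise positive $\mA^{\dagger}$ satisfies $\mA^{\dagger}\in\ginv{\rowpq{1}{2}}{\mA}$, via $\normrowpq{1}{2}{\mX}\geq\|\mX\mathbf{1}\|_{2}\geq\|\mA^{\dagger}\mathbf{1}\|_{2}$, is a cleaner replacement for the paper's gradient computation at $(p,q)=(1,2)$.

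Where you diverge is in the witnesses. For $\mA_{1}$ the paper takes $\mB=\mA^{\dagger}$ with entries in $\{1,2\}$ (specifically $\mB=\mathbf{1}_{n}\mathbf{1}_{m}^{\T}+\mC$, $\mC$ binary), so that the single identity $\abs{\mB}^{p-1}=(2^{p-1}-1)\mB-(2^{p-1}-2)\mathbf{1}\mathbf{1}^{\T}$ reduces the test to ``$\mathbf{1}_{n}\in\range(\mB)$?''\ for \emph{all} $0<p<\infty$ simultaneously. Your ``diagonal plus one column'' proposal is fine for $m=1$ (where the minimizer formula $x_{i}\propto\sign(a_{i})\abs{a_{i}}^{p^{\ast}-1}$ applies), but for $m\geq 2$ you have not reduced to that scalar problem, and you do not verify that the resulting one-parameter optimization fails the MPP for \emph{every} $p\neq 2$; a block-diagonal embedding of the $m=1$ witness would repair this.

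The substantive gap is that you do not construct $\mA_{2},\ldots,\mA_{5}$. The paper's proofs here are not soft: for $\mA_{2},\mA_{3}$ it again uses $\mB$ with entries in $\{-1,1,2\}$ chosen so that the rows have equal $\ell^{p}$ norm (making $\mD_{\text{row}}^{q-p}$ scalar or rank-one), then solves the resulting scalar equations in $p,q$ exactly; for $\mA_{4}$ it takes a block-diagonal projection with a $2\times2$ block $\vu\vu^{\T}$, $\vu=(\cos\theta,\sin\theta)$, so that the test collapses to $\langle\vv,\vw\rangle=cs^{p-1}-sc^{p-1}=0\Leftrightarrow p=2$ (resp.\ $q=2$ in the rowwise case); and for $\mA_{5}$ it fixes a specific rank-$2$ projection in $\R^{3}$ and carries out a delicate monotonicity analysis of a transcendental equation in $p$ to rule out all $p\neq 2$ on the line $q=2$. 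Your sketch correctly names the targets (``a column on which $t\mapsto\sign(t)\abs{t}^{p-1}$ leaves $\range(\mA^{\H})$ for every $p\neq2$'', etc.)\ but supplies neither the matrices nor the verification that the exclusion holds for \emph{all} excluded exponents at once---which is exactly the work that forces the paper's concrete choices and its case analyses.
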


\begin{remark}
It seems reasonable to expect that for ``most'' $p, q$ and ``most'' matrices, the corresponding set of generalized inverses will not contain the MPP. However, the case of rowwise norms $\normrowpq{1}{2}{\,\cdot\,}$ and matrices with positive entries suggests that one should be carefully about the precise statement.
\end{remark}

\section{Matrices Having the Same Inverse for Many Norms}

\label{sec:same_inverse_for_many_norms}

As we have seen, a large class of matrix norms are minimized by the Moore-Penrose pseudoinverse. We now discuss some classes of matrices whose generalized inverses minimize multiple norms.

\subsection{Matrices with MPP whose non-zero entries are constant along columns}

We first look at matrices for which all nonzero-entries of any column of $\mA^{\dagger}$ have the same magnitude. For such matrices, the MPP actually minimizes many norms beyond those already covered by Corollary~\ref{cor:classicalMPP}.

\begin{theorem}\label{th:TFUC}
  Let $\mA \in \C^{m \times n}$. Suppose that every column of $\mX =
  \mA^{\dagger}$ has entries of the same magnitude (possibly differing among
  columns) over its non-zero entries; that is,  $\abs{x_{ij}} \in
  \set{c_j, 0}$. Then the following statements are true:
  \begin{enumerate} 
     \item For $1 \leq p \leq \infty$, $0< q \leq \infty$, we have
     \[
     \mA^{\dagger} \in \ginv{\colpq{p}{q}}{\mA}.
     \] 
     This set is a singleton for $1<p<\infty$ and $0<q<\infty$. \\
     NB: this includes the nonconvex case $0<q<1$.
     
     \item For $1 \leq p \leq \infty$, $1 \leq q \leq \infty$, assuming further that $c_{j}=c$ is identical for all columns, we have
     \[
     \mA^{\dagger} \in \ginv{\rowpq{p}{q}}{\mA}.
     \] 
     This set is a singleton for $1<p<\infty$, $1 \leq q < \infty$.
  \end{enumerate}
\end{theorem}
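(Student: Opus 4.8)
\textbf{Proof plan for Theorem \ref{th:TFUC}.}

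The plan is to reduce both statements to the SVD characterization~\eqref{eq:GinvsetMform}: any $\mY\in\ginvset(\mA)$ has the form $\mY=\mV\mM\mU^\H$ with $\mM=\big[\begin{smallmatrix}\mSigma_\square^{-1}\\ \mS\end{smallmatrix}\big]$, and $\mA^\dagger$ corresponds to $\mS=\mat{0}$. Writing $\mY = \mA^\dagger + \mW$, where $\mW = \mV\big[\begin{smallmatrix}\mat{0}\\ \mS\end{smallmatrix}\big]\mU^\H$, we have $\mA\mW=\mat{0}$, i.e. every column $\vw_j$ of $\mW$ lies in $\nullspace(\mA)=\range(\mA^\H)^\perp$. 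Since the $j$th column $\vx_j$ of $\mA^\dagger = \mA^\H(\mA\mA^\H)^{-1}$ lies in $\range(\mA^\H)$, the columns $\vx_j$ and $\vw_j$ are orthogonal for each $j$; moreover $\mA\vx_j=\ve_j$ forces $\vx_j\neq \vzero$.

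For part~1, I would argue columnwise. By Lemma~\ref{lem:ColumnEntrywise} (for $q<\infty$) and Lemma~\ref{lem:ColumnEntrywise}-2 together with Corollary~\ref{cor:classicalMPP}-3 (for $q=\infty$, where $\normcolpq{p}{\infty}{\cdot}=\normlplq{1}{p}{\cdot}$ only gives $\mA^\dagger\in\pginv{}{}$, so this case needs the direct argument below), it suffices to show that for each $j$, $\vx_j$ minimizes $\norm{\vx_j+\vw_j}_p$ over $\vw_j\in\nullspace(\mA)$. Here is where the hypothesis $|x_{ij}|\in\{c_j,0\}$ enters: I claim that a vector supported on a set $T$ with constant modulus $c_j$ on $T$ minimizes the $\ell^p$ norm in its affine subspace $\vx_j+\nullspace(\mA)$ precisely because it is (up to phases) a subgradient-matching point — the $\ell^p$ subdifferential condition $\exists\,\bm{\eta}\in\partial\norm{\vx_j}_p$ with $\bm{\eta}\in\range(\mA^\H)$ is met by a scalar multiple of $\vx_j$ itself when $p\in\{1,\infty\}$, and for $1<p<\infty$ by the vector with entries $|x_{ij}|^{p-1}\mathrm{sign}(x_{ij})$, which on $T$ is again a constant times the phases of $\vx_j$ on $T$ and zero off $T$, hence equals $c_j^{p-1}\cdot(\text{phase pattern of }\vx_j)$, a scalar multiple of $\vx_j$ — and a scalar multiple of $\vx_j$ lies in $\range(\mA^\H)\perp\nullspace(\mA)$, which is exactly the first-order optimality (KKT) condition for the constrained problem. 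Strict convexity of $\norm{\cdot}_p$ for $1<p<\infty$ gives uniqueness of each $\vx_j$, hence the singleton claim; for $0<q<1$ the columnwise separation still holds (minimizing $\sum_j\norm{\vx_j}_p^q$ subject to $\mA\vx_j=\ve_j$ decouples), so the nonconvex-$q$ case follows from the per-column optimality.

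For part~2, the extra hypothesis $c_j\equiv c$ makes $\mA^\dagger$ a matrix all of whose nonzero entries have modulus $c$. The idea is that the rowwise norm $\normrowpq{p}{q}{\mY}$ is a symmetric gauge function of the row-$\ell^p$-norms $(\norm{\vy^i}_p)_i$, and row-$\ell^p$-norms are unitarily invariant under right-multiplication within each row, but not enough on its own; instead I would exploit the columnwise result by noting that under the constant-modulus-$c$ assumption the entrywise $\ell^p$ norm of $\mA^\dagger$ is $c\,|\mathrm{supp}(\mA^\dagger)|^{1/p}$, and the key combinatorial fact to establish is that every $\mY\in\ginvset(\mA)$ has row supports dominating those of $\mA^\dagger$ in the appropriate majorization sense — concretely, since $\vy^i = \vx^i + \vw^i$ with $\vx^i\perp\vw^i$ coordinatewise-in-the-row only after the right pairing (this is the delicate point), one gets $\norm{\vy^i}_p\geq\norm{\vx^i}_p$ for each $i$ when $p=2$, and for general $p$ one needs the constant-modulus structure to replace the missing coordinatewise orthogonality by a subgradient argument analogous to part~1 applied now to $\mathrm{vec}(\mY)$ against $\mathrm{vec}(\mA^\dagger)$: the entrywise $\ell^p$ subgradient of $\mathrm{vec}(\mA^\dagger)$ is, by constant modulus, $c^{p-1}$ times the sign pattern of $\mA^\dagger$, a scalar multiple of $\mA^\dagger$ itself, which annihilates the feasible directions $\mW$ with $\mA\mW=0$ since $\langle\mA^\dagger,\mW\rangle=\trace(\mW^\H\mA^\dagger)=\trace((\mA\mA^\H)^{-1}\mA\mW)=0$; then monotonicity of the symmetric gauge function defining $\normrowpq{p}{q}{\cdot}$ in the vector of row norms lifts per-row domination to the mixed norm, and strict monotonicity/strict convexity for $1<p<\infty$, $1\leq q<\infty$ gives the singleton.

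\textbf{Main obstacle.} The delicate step is part~2: columnwise orthogonality $\vx_j\perp\vw_j$ does \emph{not} transfer to rows, so one cannot simply repeat the Pythagorean argument row by row. The resolution is to work with the global entrywise $\ell^p$ subgradient of $\mA^\dagger$ and the identity $\langle\mA^\dagger,\mW\rangle=0$ for all feasible $\mW$, which genuinely requires the \emph{uniform} constant modulus $c$ (so that the $\ell^p$ subgradient is proportional to $\mA^\dagger$ itself and not merely to a column-rescaling of it), then to push this through the outer symmetric gauge function $\norm{\cdot}_q$; I expect the bookkeeping around $q=\infty$ and the sharpness of the singleton claims (exactly $1<p<\infty$, $1\leq q<\infty$) to require the most care.
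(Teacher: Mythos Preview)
Your Part~1 is essentially the paper's argument: write any $\mY\in\ginvset(\mA)$ as $\mA^\dagger+\mE$ with each column $\veps_j\in\nullspace(\mA)\perp\vx_j$, then use that for a constant-modulus vector $\vx_j$ the subdifferential of $\norm{\cdot}_p$ at $\vx_j$ contains a scalar multiple of $\vx_j$, so convexity gives $\norm{\vx_j+\veps_j}_p\geq\norm{\vx_j}_p+c\,\Re\langle\vx_j,\veps_j\rangle=\norm{\vx_j}_p$; strict convexity for $1<p<\infty$ handles uniqueness, and $p=\infty$ follows by letting $p\to\infty$. This matches the paper.

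For Part~2 there is a genuine gap. You correctly derive the global identity $\langle\mA^\dagger,\mE\rangle_{\C^{n\times m}}=\sum_j\langle\vx_j,\veps_j\rangle=0$, and you correctly note that the \emph{entrywise} $\ell^p$ subgradient at a uniformly constant-modulus $\mA^\dagger$ is a scalar multiple of $\mA^\dagger$. But that only delivers $\norm{\mA^\dagger+\mE}_p\geq\norm{\mA^\dagger}_p$ for the entrywise norm; it does \emph{not} give per-row domination $\norm{\vx^i+\veps^i}_p\geq\norm{\vx^i}_p$. Row-wise orthogonality $\langle\vx^i,\veps^i\rangle=0$ fails in general (only the sum over $i$ vanishes, since orthogonality is columnwise), so your step ``monotonicity of the outer gauge lifts per-row domination to the mixed norm'' has nothing to lift. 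The majorization/combinatorial route you sketch first is not substantiated either.

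The paper's approach avoids this detour: it applies the convexity inequality directly to $f(\mZ)=\normrowpq{p}{q}{\mZ}$, not to the entrywise norm. Using the explicit formula for $\partial_{\mZ}\normrowpq{p}{q}{\mZ}$ (the paper's Lemma on subgradients of mixed norms), the claim is that when all nonzero entries of $\mX=\mA^\dagger$ have the \emph{same} modulus $c$---precisely the extra hypothesis of Part~2---the subdifferential $\partial f(\mX)$ contains a scalar multiple of $\mX$, whence
\[
\normrowpq{p}{q}{\mX+\mE}\ \geq\ \normrowpq{p}{q}{\mX}+c'\,\Re\langle\mX,\mE\rangle_{\C^{n\times m}}\ =\ \normrowpq{p}{q}{\mX},
\]
with strict convexity giving the singleton for $1<p<\infty$, $1\leq q<\infty$, and the endpoints $p=\infty$ or $q=\infty$ handled by limits. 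In short, the missing idea in your Part~2 is to compute the subgradient of the \emph{rowwise mixed norm itself} and argue that the uniform-modulus hypothesis makes (an element of) it proportional to $\mX$, rather than working with the entrywise norm and attempting an unjustified lift.
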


\begin{example}
    Primary examples of matrices $\mA$ satisfying the assumptions of Theorem~\ref{th:TFUC} are tight frames $\mA$ with entries of constant magnitude, such as partial Fourier matrices, $\mA = \mR \mF$ (resp. partial Hadamard matrices, $\mA = \mR \mH$), with $\mR$ a restriction of the identity matrix $\mI_{n}$ to some arbitrary subset of $m$ rows and $\mF$ (resp. $\mH$) the Fourier (resp. a Hadamard) matrix of size $n$. Indeed, when $\mA$ is a tight frame, we have $\mA \mA^{\H}\ \propto\ \mI_{m}$, hence $\mA^{\dagger} = \mA^\H(\mA \mA^\H)^{-1}\ \propto\ \mA^\H$. When in addition the entries of $\mA$ have equal magnitude, so must the entries of $\mA^{\dagger}$.
\end{example}

The proof of Theorem~\ref{th:TFUC} uses a characterization of the gradient
of the considered norms.

\begin{proof}[Proof of Theorem~\ref{th:TFUC}]
    Consider $\mE$ a matrix such that the matrix $\mY = \mX + \mE$ still
    belongs to $\ginvset(\mA)$, and denote its columns as $\veps_{i}$. For each
    column we have $\mA (\vx_i + \veps_{i}) = \ve_i = \mA \vx_{i}$, that is, $\veps_{i}$ must be in the nullspace of $\mA$ for each column $i$. Since $\mathcal{N}(\mA) = \Span(\mA^\H)^{\perp} = \Span(\mX)^{\perp}$, $\veps_{i}$ must be orthogonal to any column of $\mX$, and in particular to $\vx_{i}$. As a result, for each column 
    \begin{equation} 
     \inprod{\vx_{i},\veps_{i}}_{\C^{n}} = 0\label{eq:orthocols}.
    \end{equation}

    \begin{enumerate}
    \item To show this statement, it suffices to show that the columns $\vx_{i}$ of $\mX = \mA^{\dagger}$ minimize all $\ell^p$ norms, $1 \leq p \leq \infty$, among the columns of $\mY \in \ginvset(\mA)$. 

    First consider $1 \leq p < \infty$ and $f(\vz) \bydef \norm{\vz}_{p}$ for $\vz \in \C^{n}$. By convexity of $\R^{n} \times \R^{n} \ni (\vx_R, \vx_I)  \mapsto f(\vx_R + j \vx_I)$, we have that for any $\vu \in \tfrac{\partial f}{\partial\vz}(\vx_{i})$ (as defined in Lemma~\ref{le:subgradients0})
    \[
    \norm{\vx_{i} + \veps}_p 
    \geq \norm{\vx_{i}}_p + \inprod{\Re(\vu), \veps_R}_{\R^{n}}+ \inprod{\Im(\vu), \veps_I}_{\R^{n}}
    =  \norm{\vx_{i}}_p + \Re\left(\inprod{\vu, \veps}_{\C^{n}}\right),
    \] 
    where we recall that $\inprod{\vu, \veps}_{\C^{n}} = \sum_i u_i \epsilon_i^*$ involves a complex conjugation of $\veps$. By Lemma~\ref{le:subgradients0}, since the column $\vx_i$ of $\mX$ has all its nonzero entries of the same magnitude there is $c = c(p,\vx_{i})>0$ such that $\vu \bydef c \vx_i \in \tfrac{\partial f}{\partial\vz}(\vx_{i})$. In particular, we choose the element of the subdifferential with zeros for entries corresponding to zeros in $\vx_i$.
    Using~\eqref{eq:orthocols} then gives:

    \begin{equation}
    \begin{aligned}
        \norm{\vx_i + \veps_i}_p
        &\geq \norm{\vx_i}_p + c\, \Re\left(\inprod{\vx_{i},\veps_{i}}_{\C^{n}}\right) 
        = \norm{\vx_i}_p.
    \end{aligned}
    \end{equation}
     Since this holds true for any $1 \leq p < \infty$, we also get $\norm{\vx_{i}+\veps_{i}}_{\infty} \geq \norm{\vx_{i}}_{\infty}$ by considering the limit when $p \to \infty$. For $1<p<\infty$ and $0<q<\infty$, the strict convexity of $f$ and the strict monotonicity of the $\ell^{q}$ (quasi)norm imply that the inequality is strict whenever $\mE \neq \mat{0}$, hence the uniqueness result.

    \item First consider $1 \leq p < \infty$ and $1 \leq q < \infty$, and $f(\mZ) \bydef \normrowpq{p}{q}{\mZ}$ for $\mZ \in \C^{n \times m}$. By convexity of $(\mX_R, \mX_I) \in \R^{n \times m} \times \R^{n \times m}\mapsto f(\mX_{R}+j\mX_{I})$ we have for any $\vU \in \tfrac{\partial f}{\partial\mZ}(\mX)$ (as defined in Lemma~\ref{le:subgradients0})
    \[
        \norm{\mX + \mE}_{\rowpq{p}{q}} 
        \geq \norm{\mX}_{\rowpq{p}{q}} + \Re\left(\inprod{\mU, \mE}_{\C^{n \times m}}\right).
    \]
    By Lemma~\ref{le:subgradients0} since $\mX$ has all its nonzero entries of the same magnitude there is a constant $c = c(p,q,\mX)>0$ such that $\mU \bydef c \cdot \mX \in \tfrac{\partial f}{\partial\mZ}(\mX)$ (we again choose the element of the subdifferential with zeros for entries corresponding to zeros in $\mX$). Using~\eqref{eq:orthocols} we have $\inprod{\mX,\mE}_{\C^{n \times m}} = 0$ and it follows that:
    \begin{equation}
    \begin{aligned}
        \norm{\mX + \mE}_{\rowpq{p}{q}}
        &= \norm{\mX}_{\rowpq{p}{q}} + c \cdot \Re\left( \inprod{\mX, \mE}_{\C^{n \times m}}\right)
        = \norm{\mX}_{\rowpq{p}{q}}.
    \end{aligned}
    \end{equation}
    As above, the inequality $\norm{\mX + \mE}_{\rowpq{p}{q}} \geq \norm{\mX}_{\rowpq{p}{q}}$ is extended to $p = \infty$ and/or $q = \infty$ by considering the limit. For $1 < p < \infty$, $1 \leq q < \infty$, the function $f$ is strictly convex implying that the inequality is strict whenever $\mE \neq \mat{0}$. This establishes the uniqueness result. 
    \end{enumerate}
\end{proof}

\subsection{Matrices with a highly sparse generalized inverse} 

Next we consider matrices for which a single generalized inverse, which is {\em not} the MPP, simultaneously minimizes several norms. It is known that if $\vx$ is sufficiently sparse, then it can be uniquely recovered from $\vy = \mA \vx$ by $\ell^p$ minimization with $0 \leq p \leq 1$. 
Denote $k_p(\mA)$ the largest integer $k$ such that this holds true for any $k$-sparse vector $\vx$:
\begin{equation}
  k_p \bydef \max \ \set{k \ : \ \argmin_{\mA\wh{\vx} = \mA \vx} \norm{\wh{\vx}}_p = \{\vx\}, \ \forall \vx \ \text{such that} \ \norm{\vx}_0 \leq k}.
\end{equation}
Using this definition we have the following theorem:
\begin{theorem}\label{th:SparseMatrix}
Consider $\mA \in \C^{m \times n}$, and assume there exists a generalized inverse $\mX \in \ginvset(\mA)$ such that every of its columns $\vx_i$ is $k_{p_{0}}(\mA)$-sparse, for some $0<p_{0} \leq 1$. Then, for all $0 < q < \infty$, and all $0 \leq p  \leq p_{0}$, we have
\[
\ginv{\colpq{p}{q}}{\mA} = \{\mX\}.
\]
\end{theorem}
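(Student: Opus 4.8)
The plan is to reduce the matrix statement to a column-by-column uniqueness claim and then close it with the standard machinery of $\ell^p$ null space properties. By Lemma~\ref{lem:ColumnEntrywise} we have $\ginv{\colpq{p}{q}}{\mA} = \ginv{p}{\mA}$ for $0<q<\infty$, and, as in the proof of that lemma, minimizing the entrywise $\ell^p$ norm over $\ginvset(\mA)$ decouples across columns: $\mX'\in\ginv{\colpq{p}{q}}{\mA}$ iff each column $\vx_i'$ of $\mX'$ minimizes $\norm{\vz}_p$ over $\set{\vz:\mA\vz=\ve_i}$. The same decoupling holds verbatim for $p=0$, since $\normcolpq{0}{q}{\mX'}^{q}=\sum_i\norm{\vx_i'}_0^{q}$ with independent constraints on distinct columns, so $\ginv{\colpq{0}{q}}{\mA}$ is a singleton equal to $\mX$ iff, for each $i$, $\vx_i$ is the unique minimizer of $\norm{\vz}_0$ over $\set{\vz:\mA\vz=\ve_i}$. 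Hence it suffices to show that, for every column index $i$ and every $0\le p\le p_0$, the column $\vx_i$ of the given $\mX$ is the \emph{unique} minimizer of $\norm{\vz}_p$ subject to $\mA\vz=\ve_i$.

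Next I would read an $\ell^{p_0}$ null space property (NSP) of order $k:=k_{p_0}(\mA)$ straight out of the definition of $k_{p_0}(\mA)$: for every $\vh\in\nullspace(\mA)\setminus\set{\vzero}$ and every $S$ with $\abs{S}\le k$, the restriction $\vh_S$ is $k$-sparse, hence by definition of $k_{p_0}(\mA)$ it is the unique $\ell^{p_0}$-minimizer in its coset; since $\mA\vh=\vzero$ gives $\mA(-\vh_{S^c})=\mA\vh_S$ and $-\vh_{S^c}\neq\vh_S$ (disjoint supports, and not both zero because $\vh\neq\vzero$), uniqueness forces the strict inequality $\norm{\vh_S}_{p_0}^{p_0}<\norm{\vh_{S^c}}_{p_0}^{p_0}$. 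In particular every nonzero $\vh\in\nullspace(\mA)$ has more than $k$ nonzero entries.

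I would then propagate this to every $0\le p\le p_0$. For $0<p\le p_0$ it is enough to verify the NSP inequality when $S$ is the set of indices of the $k$ largest magnitudes of $\vh$ (the worst case, since $\norm{\vh_{S'}}_p^p\le\norm{\vh_S}_p^p$ and $\norm{\vh_{S'^c}}_p^p\ge\norm{\vh_{S^c}}_p^p$ for any $\abs{S'}\le k$): normalizing $\vh$ by its $k$-th largest magnitude $c$, which is positive by the previous step, the entries of $\vh_S$ have magnitude $\ge 1$ and those of $\vh_{S^c}$ magnitude $\le 1$, so comparing $t\mapsto t^p$ with $t\mapsto t^{p_0}$ on $[1,\infty)$ and on $[0,1]$ yields $\norm{\vh_S}_p^p\le\norm{\vh_S}_{p_0}^{p_0}<\norm{\vh_{S^c}}_{p_0}^{p_0}\le\norm{\vh_{S^c}}_p^p$, i.e.\ the $\ell^p$-NSP of order $k$. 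For $p=0$: if some nonzero $\vh\in\nullspace(\mA)$ had $\norm{\vh}_0\le 2k$, splitting its support into two blocks $S_1,S_2$ of size $\le k$ and applying the $\ell^{p_0}$-NSP to each block gives $\norm{\vh_{S_1}}_{p_0}^{p_0}<\norm{\vh_{S_2}}_{p_0}^{p_0}<\norm{\vh_{S_1}}_{p_0}^{p_0}$, a contradiction; hence $\operatorname{spark}(\mA)\ge 2k+1$, which is exactly the $\ell^0$-NSP of order $k$.

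Finally I would invoke the classical ``NSP $\Rightarrow$ unique recovery'' direction for the specific vector $\vx_i$. Writing $T=\operatorname{supp}(\vx_i)$ with $\abs{T}\le k$, and taking any competitor $\vz=\vx_i+\vh$ with $\vh\in\nullspace(\mA)\setminus\set{\vzero}$: for $0<p\le 1$ the bound $\abs{a+b}^p\ge\abs{a}^p-\abs{b}^p$ gives $\norm{\vz}_p^p\ge\norm{\vx_i}_p^p-\norm{\vh_T}_p^p+\norm{\vh_{T^c}}_p^p>\norm{\vx_i}_p^p$ by the $\ell^p$-NSP of order $k\ge\abs{T}$; for $p=0$, counting the at most $\norm{\vh_T}_0$ cancellations possible on $T$ gives $\norm{\vz}_0\ge\norm{\vx_i}_0-\norm{\vh_T}_0+\norm{\vh_{T^c}}_0>\norm{\vx_i}_0$ by the $\ell^0$-NSP. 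Thus $\vx_i$ is the unique minimizer for every $0\le p\le p_0$, and together with the reduction in the first paragraph this gives $\ginv{\colpq{p}{q}}{\mA}=\set{\mX}$. The skeleton here is entirely standard null-space-property folklore; the one genuinely new (though elementary) ingredient is the monotonicity-in-$p$ step, and the main thing to be careful about throughout is keeping all the inequalities strict and handling the degenerate cases ($\abs{S}<k$, $\vh$ with few nonzeros, a zero $k$-th largest magnitude).
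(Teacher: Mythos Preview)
Your proof is correct and conceptually identical to the paper's: both decouple the matrix problem across columns and then rely on the monotonicity $k_p(\mA)\ge k_{p_0}(\mA)$ for $0\le p\le p_0\le 1$, which by definition of $k_p$ makes each $k_{p_0}$-sparse column $\vx_i$ the unique $\ell^p$-minimizer in its affine coset. The only difference is that the paper dispatches this monotonicity in one line by citing \cite{Gribonval:2007hq}, whereas you re-derive it from scratch via the null-space-property machinery (extracting the strict $\ell^{p_0}$-NSP from the definition of $k_{p_0}$, pushing it down to $\ell^p$ by normalizing at the $k$-th largest magnitude, and handling $p=0$ through the spark bound). What you flag as the ``one genuinely new ingredient'' is in fact precisely the content of that cited result; your version has the merit of being self-contained and making the mechanism explicit, while the paper's proof is three lines long.
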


\begin{proof}
It is known \cite{Gribonval:2007hq} that for $0 \leq p \leq p_{0} \leq 1$ and any $\mA$ we have $k_{p}(\mA) \geq k_{p_{0}}(\mA)$. Hence, the column $\vx_{i}$ of $\mX$ is the unique minimum $\ell^{p}$ norm solution to $\mA \vx = \ve_{i}$. When $q < \infty$ this implies $\ginv{\colpq{p}{q}}{\mA} = \{\mX\}$. When $q=\infty$ this simply implies $\ginv{\colpq{p}{q}}{\mA} \ni \mX$.
\end{proof}

\begin{example}
Let $\mA$ be the Dirac-Fourier (resp. the Dirac-Hadamard) dictionary, $\mA = [\mI,\ \mF]$ (resp. $\mA = [\mI,\ \mH]$). It is known that $k_{1}(\mA) \geq (1+\sqrt{m})/2$ (see, \emph{e.g.}, \cite{Gribonval:2007hq}). Moreoever 
\[
\mX \bydef \begin{bmatrix}
      \mI\\
      \mat{0}
      \end{bmatrix} \in \ginvset(\mA)
\]
has $k$-sparse columns with $k=1$. As a result, $\ginv{\colpq{p}{q}}{\mA} = \{\mX\}$ for any $0 \leq p \leq 1$ and $0<q < \infty$. On this specific case, the equality can also be checked for $q=\infty$. Thus, $\ginv{\colpq{p}{q}}{\mA}$ is distinct from the MPP of $\mA$, $\mA^{\dagger} = \mA^{\H}/2$. 
\end{example}



\section{Computation of Norm-Minimizing Generalized Inverses}
\label{sec:computation}
For completeness, we briefly discuss the computation of the generalized inverses associated with various matrix norms. For simplicity, we only discuss \emph{real-valued} matrices.

General-purpose convex optimization packages such as CVX \cite{cvx,gb08} transform the problem into a semidefinite or second-order cone program, and then invoke the corresponding interior point solvers. This makes them quite slow, especially when the program cannot be reduced to a vector form.

For the kind of problems that we aim to solve, it is more appropriate to use methods such as the projected gradient method, or the alternating direction method of multipliers (ADMM) \cite{Parikh:2013}, also known as the Douglas-Rachford splitting method \cite{Eckstein:1992hp}. We focus on ADMM as it nicely handles non-smooth objectives.

\subsection{Norms that Reduce to the Vector Case}\label{sec:vectorcase}
In certain cases, it is possible to reduce the computation of a norm-minimizing generalized inverse to a collection of independent vector problems. This is the case with the columnwise mixed norms (and consequently entrywise norms).

Consider the minimization for $\ginv{\colpq{p}{q}}{\mA}$, with $q<\infty$. As $x \mapsto x^{\frac{1}{q}}$ is monotonically increasing over $\R^+$, it follows that
\begin{equation}
  \label{eq:decoupling}
  \argmin_{\mX \in \ginvset(\mA)} \left( \sum_j \norm{\vx_j}_p^q \right)^\frac{1}{q}
  = \argmin_{\mA \mX = \mI} \sum_j \norm{\vx_j}_p^q.
\end{equation}
On the right-hand side there is no interaction between the columns of $\mX$, because the constraint $\mA \mX = \mI$ can be separated into $m$ independent constraints $\mA \vx_j = \ve_j, \ j \in \set{1, 2, \ldots, m}$. We can therefore perform the optimization separately for every $\vx_j$, 
\begin{equation}
  \label{eq:per_column}
  \hat{\vx}_j = \argmin_{\mA \vx_j = \ve_j}~\norm{\vx_j}_p^q = \argmin_{\mA \vx_j =
  \ve_j}~\norm{\vx_j}_p.
\end{equation}
The above procedure also gives us \emph{a} minimizer for $q = \infty$.

This means that, in order to compute any $\ginv{\colpq{p}{q}}{\mA}$, we can use our favorite algorithm for finding the $\ell^p$-minimal solution of an underdetermined system of linear equations. The most interesting cases are  $p \in \set{1, 2, \infty}$. For $p = 2$, we of course get the MPP, and for the other two cases, we have efficient algorithms at our disposal \cite{Bach:2011uc,Duchi:2008en}.

\subsection{Alternating Direction Method of Multipliers (ADMM)}

The ADMM is  a method to solve minimization problems of the form
\begin{equation}
  \label{eq:splitting}
  \minimize \ f(x) + g(x),
\end{equation}
where in our case $x$ is a matrix $\mX$, and $g$ is a monotonically increasing function of a matrix norm $\norm{\mX}$. An attractive property of the method is that neither $f$ nor $g$ need be smooth.

\paragraph{The ADMM algorithm.} Generically, the ADMM updates are given as follows
\cite{Parikh:2013}:
 
\begin{equation}
\label{eq:admm}
\begin{aligned} 
  x^{k+1} &\bydef \prox_{\lambda f} (z^k - u^k) \\
  z^{k+1} &\bydef \prox_{\lambda g} (x^{k+1} + u^k) \\
  u^{k+1} &\bydef u^k + x^{k+1} - z^{k+1}.
\end{aligned}
\end{equation}
The algorithm relies on the iterative computation of proximal operators,
\[
\prox_{\varphi}(y) = \arg\min_{z} \tfrac{1}{2} \norm{y-x}_{2}^{2}+\varphi(x),
\]
which can be seen as a generalization of projections onto convex sets. Indeed, when $\varphi$ is an indicator function of a convex set, the proximal mapping is a projection.

\paragraph{Linearized ADMM.} 
Another attractive aspect of ADMM is that it warrants an immediate extension from ginv to pginv via the so called \emph{linearized ADMM}, which addresses optimization problems of the form

\begin{equation}
  \label{eq:splittinganalysis}
  \minimize \ f(x) + g\big(A(x)\big),
\end{equation}
with $A$ being a linear operator. The corresponding update rules, which involve its adjoint operator $A^{\star}$, are \cite{Parikh:2013}:

\begin{equation}
\label{eq:linearized_admm}
\begin{aligned} 
  x^{k+1} &\bydef \prox_{\lambda f} \big(x^k - (\mu/\lambda) A^\star\big(A (x^k) - z^k + u^k\big)\big) \\
  z^{k+1} &\bydef \prox_{\lambda g} \big(A (x^{k+1}) + u^k\big) \\
  u^{k+1} &\bydef u^k + A(x^{k+1}) - z^{k+1},
\end{aligned}
\end{equation}
where $\lambda$ and $\mu$ satisfy $0 < \mu \leq \lambda / \normlplq{2}{2}{A}^2$. Thus we may use almost the same update rules to optimize for $\norm{\mX}$ and for $\norm{\mX \mA}$, at a disadvantage of having to tune an additional parameter $\mu$.

As far as convergence goes, it can be shown that under mild assumptions, and with a proper choice of $\lambda$, ADMM converges to the optimal value. The convergence rate is in general sublinear; nevertheless, as Boyd notes in \cite{Boyd:2011tj}, ``Simple examples show that ADMM can be very slow to converge to high accuracy. However, it is often the case that ADMM converges to modest accuracy---sufficient for many applications---within a few tens of iterations.''
 
\subsection{ADMM for Norm-Minimizing Generalized Inverses} 
\label{sub:admm_for_norm_minimizing_generalized_inverses}

In order to use ADMM, we have to transform the constrained optimization program

\begin{equation}
\label{eq:constrained}
\begin{aligned}
& \underset{\mX}{\text{minimize}}
& & \norm{\mX} \\
& \text{subject to}
& &   \mX \in \ginvset(\mA)
\end{aligned}
\end{equation}
into an unconstrained program of the form \eqref{eq:splitting}. This is achieved by using indicator functions, defined as

\begin{equation}
  \ind{\mathcal{S}}(x) \bydef 
  \begin{cases}
  0, \quad x \in \mathcal{S} \\
  +\infty, \quad x \notin \mathcal{S}.
  \end{cases}
\end{equation}

The indicator function rules out any $x$ that does not belong to the argument set $\mathcal{S}$; in our case, $\mathcal{S}$ is the affine space $\ginvset(\mA)$. Using these notations, we can rewrite \eqref{eq:constrained} as an unconstrained program for computing $\ginv{\norm{\,\cdot\,}}{\mA}$:
\begin{equation}
\label{eq:unconstrained_ginv}
\begin{aligned}
& \underset{\mX}{\text{minimize}}
& & \ind{\ginvset(\mA)}(\mX) + h(\norm{\mX}),
\end{aligned}
\end{equation}
where $h$ is a monotonically increasing function on $\R^+$ (typically a power). Using similar reasoning, we can write the unconstrained program for computing $\pginv{\norm{\,\cdot\,}}{\mA}$ as
\begin{equation}
\label{eq:unconstrained_pginv}
\begin{aligned}
& \underset{\mX}{\text{minimize}}
& & \ind{\ginvset(\mA)}(\mX) + h(\norm{\mX \mA}).
\end{aligned}
\end{equation}
Next, we need to compute the proximal mappings for $h(\norm{\, \cdot \,})$ and for the indicator function.


\subsubsection{Proximal operator of the indicator function $\ind{\mathcal{\ginvset(\mA)}}(\mX)$}

The proximal operator of an indicator function of a convex set is simply the
Euclidean projection onto that set,

\[
  \prox_{\ind{\mathcal{S}}}(v) = \proj_{\mathcal{S}}(v) = \argmin_{x \in \mathcal{S}}~\norm{x - v}_2^2.
\]

In our case, $\mathcal{S} = \ginvset(\mA)$, which is an affine subspace

\[
  \ginvset(\mA) = \set{ \mA^\dagger + \mN \mZ \ : \ \mZ \in \R^{(n-m) \times
  (n-m)}},
\]
where we assumed that $\mA$ has full rank, and the columns of $\mN \in \R^{n \times (n-m)}$ form an orthonormal basis for the nullspace of $\mA$. To project $\mV$ orthogonally on $\ginvset(\mA)$, we can translate it and project it on the parallel linear subspace, and then translate the result back into the affine subspace:

\begin{equation}
  \prox_{\ind{\mathcal{\ginvset(\mA)}}}(\mV) = \proj_{\ginvset(\mA)}(\mV) = \mA^\dagger + \mN \mN^\T (\mV - \mA^\dagger) = \mA^\dagger + \mN \mN^\T \mV.
\end{equation}

\subsubsection{Proximal operators of some matrix norms}

Proximal operators associated with matrix norms are typically more involved than projections onto affine spaces. In what follows, we discuss the proximal operators for some mixed and induced norms.

An important ingredient is a useful expression for the proximal operator of any norm in terms of a projection onto a dual norm ball. We have that for any
scalar $\lambda > 0$,
\begin{equation}\label{eq:ProxIsProj}
  \prox_{\lambda \norm{\cdot}}(\mV) = \mV - \lambda \proj_{\set{\mX:\norm{\mX}_\star \leq 1}}(\mV),
\end{equation}
with $\norm{\, \cdot \,}_{\star}$ the dual norm to $\norm{\, \cdot \,}$. Thus computing the proximal operator of a norm amounts to projecting onto a norm ball of the dual norm. This means that we can compute the proximal operator efficiently if we can project efficiently, and vice-versa.

\paragraph{Mixed norms (columnwise and rowwise).} 
Even though we can compute $\ginv{\colpq{p}{q}}{\mA}$ by solving a series of vector problems, it is useful to rely on a common framework for the computations in terms of proximal operators; this will help later when dealing with $\pginv{\cdot}$.

Instead of computing the proximal mapping for $\normcolpq{p}{q}{\, \cdot \,}$, we compute it for $\normcolpq{p}{q}{\, \cdot \,}^q$ as it yields the same generalized inverse. Similarly as in Section~\ref{sec:vectorcase}, we have that
\begin{equation}
\begin{aligned}
  \prox_{\lambda\normcolpq{p}{q}{\, \cdot \,}^q}(\mV) 
  &= \argmin_{\mX} \sum_j \lambda \norm{\vx_j}_p^q + \tfrac{1}{2} \norm{\mV - \mX}_F^2 \\
  &= \argmin_{\mX} \sum_j \left( \lambda \norm{\vx_j}_p^q + \tfrac{1}{2} \norm{\vv_j - \vx_j}_2^2 \right)\\
  & = \left(\prox_{\lambda \norm{\cdot}_{p}^{q}}(\vv_{j})\right)_{j}
\end{aligned}
\end{equation}
where, unlike in Section~\ref{sec:vectorcase}, special care must be taken for $q = \infty$ (see Lemma~\ref{le:DualNorms} below).

Conveniently, the exact same logic holds for rowwise mixed norms: their proximal mappings can be constructed from vector proximal mappings by splitting into a collection of vector problems as follows:

\begin{equation}
\begin{aligned}
  \prox_{\lambda \normrowpq{p}{q}{\, \cdot \,}^q}(\mV) 
  &= \argmin_{\mX} \sum_i \lambda \norm{\vx^i}_p^q + \tfrac{1}{2} \norm{\mV - \mX}_F^2 \\
  &= \argmin_{\mX} \sum_i \left( \norm{\vx^i}_p^q + \tfrac{1}{2} \norm{\vv^i - \vx^i}_2^2 \right)\\
  & = \left(\prox_{\lambda \norm{\cdot}_{p}^{q}}((\vv^{i})^{\T})\right)_{i}^{\T}
\end{aligned}
\end{equation}

Interestingly, even though the computation of the generalized inverses corresponding to rowwise mixed norms does not decouple over rows or columns, we can decouple the computation of the corresponding proximal mappings as long as $q < \infty$. This is possible because the minimization for the latter is unconstrained.

Now the task is to find efficient algorithms to compute the proximal mappings for powers of vector $\ell^p$-norms, $\norm{\, \cdot \, }_p^q$. We list some known results for the most interesting (and the simplest) case of $q=1$, and $p \in \set{1, 2, \infty}$.

\begin{enumerate}
\item When $p = 1$ we have the so-called soft thresholding operator,
\[
  \prox_{\lambda \norm{\, \cdot \, }_1}(\vv) = \sign(\vv) \odot (\abs{\vv} -
  \lambda \vec{1})_+.
\] 
where $|\va|$, $\va \odot \vb$, $(\va)_{+}$ denote entrywise absolute value, multiplication, positive part, and $\vec{1}$ has all entries equal to one.
\item For $p=2$ we have that 
\[
  \prox_{\lambda \norm{\, \cdot \, }_2}(\vv) = \max\set{1 - \lambda / \norm{\vv}_2,\ 0} \vv.
\] 
The case of $p=2$ is interesting for rowwise mixed norms; for the columnwise norms, we simply recover the MPP.

\item Finally, for $p = \infty$, it is convenient to exploit the relationship with the projection operator \eqref{eq:ProxIsProj}. The dual norm to $\ell^\infty$-norm is the $\ell^1$-norm. We can project on its unit norm-ball as follows:

\[
  \proj_{\set{\vz : \norm{\vz}_1 \leq 1}}(\vv) = \sign(\vv) \odot (\abs{\vv} - \lambda \vec{1})_+,
\]
where
\[
  \lambda = 
  \begin{cases}
    0 & \norm{\vv}_1 \leq 1 \\
    \text{solution of}~\sum_i \max(\abs{x_i} - \lambda, 0) = 1 & \text{otherwise}.
  \end{cases}
\]
This can be computed in linear time \cite{Liu:2009if} and the proximal mapping is then given by \eqref{eq:ProxIsProj}.
\end{enumerate}
In all cases, to obtain the corresponding matrix proximal mapping, we simply concatenate the columns (or rows) obtained by vector proximal mappings.

\paragraph{Norms that do not (Easily) Reduce to the Vector Case}

The analysis of the proximal mappings in the previous section fails when $q = \infty$. On the other hand, computing the poor man's $\ell^1$-minimization generalized inverse $\ginv{\lplq{1}{1}}{\mA}$ using ADMM requires us to find the proximal operator for $\normlplq{1}{1}{\, \cdot \,} = \normcolpq{1}{\infty}{\, \cdot \,}$. Note that we could compute $\ginv{\colpq{1}{\infty}}{\mA}$ directly by decoupling, but we cannot do it for the corresponding $\pginv{}{}$ nor for the rowwise norms. We can again derive the corresponding proximal mapping using \eqref{eq:ProxIsProj}. The following lemma will prove useful:

\begin{lemma}\label{le:DualNorms}
  The dual norm of $\normlplq{1}{q}{\cdot} = \normcolpq{q}{\infty}{\cdot}$ is 
  \[
  \normlplq{1}{q}{\cdot}^{\H} = \normcolpq{q^\H}{1}{\cdot},
  \]
  and the dual norm of $\normlplq{p}{\infty}{\cdot} = \normrowpq{p^{\H}}{\infty}{\cdot}$ is 
  \[
  \normlplq{p}{\infty}{\cdot}^{\H} =  \normrowpq{p}{1}{\cdot},
  \]
  where $1/p + 1/p^\H = 1/q + 1/q^\H = 1$.
\end{lemma}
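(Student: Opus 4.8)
The plan is to compute both dual norms directly from the definition. Equip $\C^{m\times n}$ (resp. $\C^{n\times m}$) with the Frobenius inner product $\inprod{\mX,\mY} = \trace(\mX^{\H}\mY)$, and for a norm $\norm{\cdot}$ set $\norm{\mY}_{\star} \bydef \sup\set{\Re\inprod{\mX,\mY} : \norm{\mX} \le 1}$. The structural observation that makes everything routine is that the two norms in the lemma are \emph{$\max$-type} norms: by~\eqref{eq:l1lq_characterization}, $\normlplq{1}{q}{\mM} = \normcolpq{q}{\infty}{\mM} = \max_{j}\norm{\vm_{j}}_{q}$, so its unit ball is the Cartesian product over columns $j$ of the $\ell^{q}$ unit balls in $\C^{m}$; and by~\eqref{eq:lqlinf_characterization}, $\normlplq{p}{\infty}{\mM} = \normrowpq{p^{\H}}{\infty}{\mM} = \max_{i}\norm{\vm^{i}}_{p^{\H}}$, so its unit ball is the product over rows $i$ of the $\ell^{p^{\H}}$ unit balls. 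The dual of a product-ball norm is a sum-of-dual-norms norm, which reduces the whole computation to vector Hölder duality.

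First I would handle $\normlplq{1}{q}{\cdot} = \normcolpq{q}{\infty}{\cdot}$. Writing $\Re\inprod{\mX,\mY} = \sum_{j}\Re\inprod{\vx_{j},\vy_{j}}$ and using that the constraint $\max_{j}\norm{\vx_{j}}_{q}\le 1$ lets each column $\vx_{j}$ be chosen independently, the supremum splits column by column:
\[
\norm{\mY}_{\star} = \sum_{j} \ \sup_{\norm{\vx_{j}}_{q}\le 1} \Re\inprod{\vx_{j},\vy_{j}} = \sum_{j}\norm{\vy_{j}}_{q^{\H}} = \normcolpq{q^{\H}}{1}{\mY},
\]
where the middle equality is the standard fact $\sup_{\norm{\vv}_{q}\le 1}\Re\inprod{\vv,\vw} = \norm{\vw}_{q^{\H}}$ (Hölder's inequality and its equality case, valid for $1\le q\le\infty$ over $\C$, with $1/q+1/q^{\H}=1$). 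This is exactly $\normlplq{1}{q}{\cdot}^{\H} = \normcolpq{q^{\H}}{1}{\cdot}$.

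Then I would repeat the argument with rows in place of columns. Starting from $\normlplq{p}{\infty}{\cdot} = \normrowpq{p^{\H}}{\infty}{\cdot}$, whose unit ball is the product over rows of $\ell^{p^{\H}}$-balls, and writing $\Re\inprod{\mX,\mY} = \sum_{i}\Re\inprod{\vx^{i},\vy^{i}}$, the supremum decouples over rows to give $\norm{\mY}_{\star} = \sum_{i}\norm{\vy^{i}}_{(p^{\H})^{\H}} = \sum_{i}\norm{\vy^{i}}_{p} = \normrowpq{p}{1}{\mY}$, using $(p^{\H})^{\H}=p$; this is $\normlplq{p}{\infty}{\cdot}^{\H} = \normrowpq{p}{1}{\cdot}$.

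There is no genuine obstacle; the only points deserving care are (i) justifying rigorously that the supremum decouples over the blocks — it does because the feasible set is a Cartesian product and the objective is additive over blocks, so the joint supremum equals the sum of per-block suprema, each block being realizable simultaneously — and (ii) keeping the real-part and complex-conjugation bookkeeping in $\inprod{\cdot,\cdot}$ consistent so that the vector Hölder duality is invoked with the correct conjugate exponents. The endpoint cases ($q\in\set{1,\infty}$, resp. $p^{\H}\in\set{1,\infty}$) are covered by the usual conventions for Hölder conjugates and the $q=\infty$ modification in the mixed-norm definitions.
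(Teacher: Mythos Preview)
Your proof is correct and follows the same underlying idea as the paper: identify $\normlplq{1}{q}{\cdot}$ and $\normlplq{p}{\infty}{\cdot}$ with the mixed norms $\normcolpq{q}{\infty}{\cdot}$ and $\normrowpq{p^{\H}}{\infty}{\cdot}$ via~\eqref{eq:l1lq_characterization} and~\eqref{eq:lqlinf_characterization}, then invoke the duality $(\colpq{q}{\infty})^{\H} = \colpq{q^{\H}}{1}$ for mixed norms. The only difference is that the paper outsources this last step to an external reference (\cite[Lemma~3]{Sra:2012ba}), whereas you supply the direct Cartesian-product/H\"older argument yourself; your version is therefore more self-contained but otherwise equivalent.
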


\begin{proof}
The first result is a direct consequence of the characterization~\eqref{eq:l1lq_characterization} of $\lplq{1}{q}$ norms in terms of columnwise mixed norms $\colpq{q}{\infty}$, and of \cite[Lemma~3]{Sra:2012ba}. The second result follows from the characterization~\eqref{eq:lqlinf_characterization} of $\lplq{p}{\infty}$ norms in terms of $\rowpq{p^{\H}}{\infty}$ and, again, of \cite[Lemma~3]{Sra:2012ba}.
\end{proof}

Combining~\eqref{eq:ProxIsProj} with Lemma~\ref{le:DualNorms} shows that for $\normlplq{1}{q}{\,\cdot\,}$, where $q \in \set{1,2,\infty}$, computing the proximal operator means projecting onto the ball of the $\normcolpq{q^\H}{1}{\,\cdot\,}$ norm. The good news is that these projections can be computed efficiently. We have already seen the cases when $q^* \in \set{1, 2}$ in the previous section (corresponding to $p \in \set{2, \infty})$.

For $q^* = \infty$ we can use the algorithm of Quattoni \emph{et al.} \cite{Quattoni:2009if} that computes the projection on the $\normcolpq{\infty}{1}{\, \cdot \,}$ (or equivalently $\normrowpq{\infty}{1}{\, \cdot \,}$) norm ball in time $O(mn \log (mn))$. Even for a general $q$, the projection can be computed efficiently, but the algorithm becomes more involved \cite{Liu:2010vp,Wang:2013tz}.\footnote{Our notation differs from the notation used in \cite{Liu:2010vp,Wang:2013tz,Quattoni:2009if}; the roles of $p$ and $q$ are reversed.} In summary, we can efficiently compute the proximal mappings for induced norms $\normlplq{1}{q}{\cdot}$, and equivalently for induced norms $\normlplq{p}{\infty}{\, \cdot \,}$, because these read as proximal mappings for certain mixed norms.

\subsection{Generic ADMM for matrix norm minimization} 

Given a norm $\norm{\cdot}$, we can now summarize the ADMM update rules for computing $\ginv{\norm{\cdot}}{\mA}$ and $\pginv{\norm{\cdot}}{\mA}$, assuming that $\prox_{\lambda \norm{\cdot}_{}^{q}}(\cdot)$ can be computed efficiently for some $0<q<\infty$.

\subsubsection{ADMM for the computation of $\ginv{}{\cdot}$}
To compute $\ginv{}{\cdot}$, we simply solve \eqref{eq:unconstrained_ginv} with $h(t) = \lambda t^{q}$ by running the iterates
\begin{equation}
\label{eq:admm_for_ginv}
\begin{aligned}
    \mX^{k+1} \ &\bydef \ \mA^\dagger + \mN \mN^\T (\mX^{k} - \mU^k) \\
    \mZ^{k+1} \ &\bydef \ \prox_{\lambda \norm{\cdot}^{q}}(\mX^{k+1} + \mU^k) \\
    \mU^{k+1} \ &\bydef \ \mU^k + \mX^{k+1} - \mZ^{k+1}.
\end{aligned}  
\end{equation}
While there are a number of references that study the choice of the
regularization parameter $\lambda$ for particular $f$ and $g$, this choice
still remains somewhat of a black art. Discussion of this topic falls out of
the scope of the present contribution. In our implementations\footnote{Available online at \url{https://github.com/doksa/altginv}.} we used the values of $\lambda$ for which the algorithm was
empirically verified to converge.

\subsubsection{Linearized ADMM for the computation of $\pginv{}{\cdot}$}

Even for entrywise and columnwise mixed norms, things get more complicated when instead of $\ginv{\colpq{p}{q}}{\mA}$ we want to compute $\pginv{\colpq{p}{q}}{\mA}$. This is because the objective $\normcolpq{p}{q}{\mX \mA}$ now mixes the columns of $\mX$ so that they cannot be untangled. A similar issue arises when trying to compute $\pginv{\rowpq{p}{q}}{\mA}$. This issue is elegantly addressed by the linearized ADMM, and by using the proximal mappings described in the previous section.

The linearized ADMM allows us to easily adapt the updates \eqref{eq:admm_for_ginv} for norms on the matrix $\mX \mA$, without computing the new proximal operator. To compute $\pginv{}{\cdot}$, we express \eqref{eq:unconstrained_pginv} with $h(t) = \lambda t^{q}$. This has the form
\begin{equation}
  \minimize \ f(\mX) \ + \ g(\mX \mA).
\end{equation}
where $g(\mV) = h(\norm{\mV})$. It is easy to verify that the adjoint of $A: \mX \in \R^{n \times m} \to \mX\mA \in \R^{n \times n}$ is $A^{\star}: \mV \in \R^{n \times n} \to \mV\mA^{\T} \in \R^{n \times m}$, and that $\normlplq{2}{2}{A} = \normlplq{2}{2}{\mA}$ so that the updates for $\pginv{\norm{\cdot}}{\mA}$ are given as
\begin{equation}
\begin{aligned}
    \mX^{k+1} \ &\bydef \ \mA^\dagger + \mN \mN^\T\left[ \mX^k - (\mu/\lambda) (\mX^k \mA - \mZ^k + \mU^k)\mA^\T \right] \\
    \mZ^{k+1} \ &\bydef \ \prox_{\lambda \norm{\cdot}^{q}}(\mX^{k+1} \mA + \mU^k) \\
    \mU^{k+1} \ &\bydef \ \mU^k + \mX^{k+1}\mA - \mZ^{k+1},
\end{aligned}  
\end{equation}
where $0 < \mu \leq \lambda / \normlplq{2}{2}{\mA}^2$.


\section{Conclusion} 
\label{sec:conclusion}

We presented a collection of new results on generalized matrix inverses which minimize various matrix norms. Our study is motivated by the fact that the Moore-Penrose pseudoinverse minimizes e.g. the Frobenius norm, by the intuition that minimizing the entrywise $\ell_1$ norm of the inverse matrix should give us sparse pseudoinverses, and by the fact that poor man's $\ell^p$ minimization---a linear procedure which minimizes the worst-case $\ell^p$ norm blowup---is achieved by a generalized inverse which minimizes the induced $\ell^1 \to \ell^1$ norm of the associated projection matrix. 

Most of the presented findings address the relation in which various norms and matrices stand with respect to the MPP. In this regard, a number of findings make our work appear Sisyphean since for various norms and matrices we merely reestablish the optimality of the MPP. We could summarize this in a maxim ``When in doubt, use the MPP'', which most practitioners will not find surprising.

On the other hand, we identify classes of matrix norms for which the above statement does not hold, and whose minimization leads to matrices with very different properties, potentially useful in applications. Perhaps the most interesting such generalized inverse---the sparse pseudoinverse---is studied in Part II of this two-paper series.

Future work related to the results presented in this Part I involves extensions of our results to rank deficient matrices, further study of $\pginv{}{\cdot}$ operators, and filling several ``holes'' in the results as we could not answer all the posed questions for all combinations of norms and matrices (see Table~\ref{tab:norm-list}), in particular for induced norms.



\section{Acknowledgments}

The authors would like to thank Mihailo Kolund\v{z}ija, Miki Elad, Jakob Lemvig, and Martin Vetterli for the discussions and input in preparing this manuscript, and Laurent Jacques for pointing out related work. This work was supported in part by the European Research Council, PLEASE project (ERC-StG-2011-277906).


\section*{Appendices}
\appendix

\section{Proofs of Formal Statements}

\subsection{Proof of Corollary~\ref{cor:classicalMPP}} 
\label{sub:proof_of_classicalMPP}

We begin by showing that the MPP is {\em a} minimizer of the considered norms.
The result for $\norm{\cdot}_{S_{p}}$ with $1 \leq p \leq \infty$, for the
induced norm $\norm{\cdot}_{\lplq{2}{2}}$ and columnwise mixed norm
$\norm{\cdot}_{\colpq{2}{2}}$ follow from their unitarily invariance and
Theorem~\ref{th:Zietak}. In contrast, the norms $\norm{\cdot \mA}_{S_{p}}$
generally fail to be unitarily invariant. Similarly, for $p \neq 2$ and $q
\neq 2$, induced norms and columnwise mixed norms are {\em not} unitarily
invariant, hence the results do not directly follow from
Theorem~\ref{th:Zietak}. Instead, the reader can easily check that all
considered norms are left unitarily invariant, hence the MPP is {\em a}
minimizer by Theorem~\ref{th:LUINorms} and Corollary~\ref{cor:PLUINorms}.

{\em Uniqueness cases for Schatten norms and columnwise mixed norms.} Since
Schatten norms with $1 \leq p<\infty$ are fully unitarily invariant and
associated to a strictly monotonic symmetric gauge function, the uniqueness
result of Theorem~\ref{th:Zietak} applies. To establish uniqueness with the
norms $\norm{\cdot \mA}_{S_{p}}$, $1 \leq p<\infty$ we exploit the following
useful Lemma (see, e.g., \cite[Lemma 7]{Zietak:1997ut} and references
therein for a proof).
\begin{lemma}
  \label{lem:uinv-block}
  Let $\mX, \mY \in \C^{m \times n}$ be given as
  \begin{equation}
    \mX =
    \begin{bmatrix}
     \mK_1 & \mat{0} \\
      \mat{0} & \mat{0}
   \end{bmatrix},
    \quad \quad 
    \mY = 
    \begin{bmatrix}
      \mK_1 & \mK_2 \\
      \mK_3 & \mK_4
      \end{bmatrix}
  \end{equation}
  where the block $\mK_{1}$ is of size $r$. Then we have that $\sigma_j(\mX) \leq \sigma_j(\mY)$ for $1 \leq j \leq r$, and for any unitarily invariant norm  $\norm{\cdot}_{\phi}$ associated to a symmetric gauge $\phi$, we have $\norm{\mX}_{\phi} \leq
  \norm{\mY}_{\phi}$. When $\phi$ is strictly monotonic,
   $\norm{\mX}_{\phi} = \norm{\mY}_{\phi}$ if and only if $\mK_2,
  \mK_3$ and $\mK_4$ are zero blocks.
\end{lemma}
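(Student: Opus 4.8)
The plan is to deduce everything from two classical facts: (i) the vector of singular values of $\mX$ (listed in nonincreasing order) is dominated entrywise by that of $\mY$, and (ii) a symmetric gauge function is monotone in the absolute values of its arguments, strictly so when it is strictly monotonic. The structural observation that makes (i) work is that $\mX$ is a \emph{restriction} of $\mY$: if $\mP \in \C^{m\times m}$ and $\mQ \in \C^{n\times n}$ denote the orthogonal projections onto the first $r$ coordinates of $\C^m$ and $\C^n$, then $\mX = \mP\,\mY\,\mQ$; equivalently, $\mK_1$ is the submatrix of $\mY$ formed by its first $r$ rows and first $r$ columns, padded with zeros.

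For step (i) I would use the best-rank-approximation (Schmidt--Mirsky) characterization $\sigma_j(\mathbf N) = \min\{\norm{\mathbf N - \mathbf R} : \rank \mathbf R \le j-1\}$ in operator norm. Since $\mP,\mQ$ are contractions and $\rank(\mP\mathbf R\mQ) \le \rank \mathbf R$, plugging $\mathbf S = \mP \mathbf R \mQ$ as a competitor gives $\sigma_j(\mX) = \sigma_j(\mP\mY\mQ) \le \norm{\mP}\,\sigma_j(\mY)\,\norm{\mQ} \le \sigma_j(\mY)$ for every $j$; in particular for $1 \le j \le r$, while for $j > r$ one has $\sigma_j(\mX) = 0$ because $\rank \mX \le r$. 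Thus $\sigma(\mX) \le \sigma(\mY)$ componentwise. (This is just the statement that the singular values of a submatrix interlace those of the ambient matrix, and could alternatively be obtained from Cauchy interlacing applied to the Gram matrix $\mY^\H\mY$.)

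For step (ii) I would record that any symmetric gauge $\phi$ satisfies $0 \le \va \le \vb \Rightarrow \phi(\va) \le \phi(\vb)$: reducing one coordinate at a time, this follows from the identity $(a_1,\dots,t a_k,\dots,a_N) = \tfrac{1+t}{2}(a_1,\dots,a_k,\dots,a_N) + \tfrac{1-t}{2}(a_1,\dots,-a_k,\dots,a_N)$ for $t \in [0,1]$, combined with the triangle inequality, absolute homogeneity, and sign invariance. Applying this with $\va = \sigma(\mX)$ and $\vb = \sigma(\mY)$ yields $\norm{\mX}_\phi = \phi(\sigma(\mX)) \le \phi(\sigma(\mY)) = \norm{\mY}_\phi$, which is the desired norm inequality.

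For the equality case, suppose $\phi$ is strictly monotonic and $\norm{\mX}_\phi = \norm{\mY}_\phi$. Then $\sigma(\mX)$ cannot differ from $\sigma(\mY)$ in any coordinate, so $\sigma(\mX) = \sigma(\mY)$; summing squares gives $\norm{\mX}_F^2 = \norm{\mY}_F^2$. Since $\norm{\mY}_F^2 = \norm{\mK_1}_F^2 + \norm{\mK_2}_F^2 + \norm{\mK_3}_F^2 + \norm{\mK_4}_F^2$ and $\norm{\mX}_F^2 = \norm{\mK_1}_F^2$, the three off-block contributions vanish, i.e.\ $\mK_2 = \mK_3 = \mK_4 = \mat 0$; the converse is immediate. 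I do not expect a genuine obstacle here --- the lemma is classical --- so the only points needing care are stating the singular-value domination at the right level of generality (arbitrary rectangular blocks, with the convention $\sigma_j = 0$ past the rank) and correctly distinguishing plain monotonicity of $\phi$, which yields the inequality, from strict monotonicity, which is what forces equality of the entire singular-value vector and hence the block-vanishing conclusion.
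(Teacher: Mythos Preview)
Your proof is correct and self-contained. Note, however, that the paper does not actually supply its own proof of this lemma: it simply states the result and cites \cite[Lemma~7]{Zietak:1997ut} (``see \ldots\ and references therein for a proof''). So there is no in-paper argument to compare against; your write-up effectively \emph{fills in} what the paper defers to the literature. The approach you take---singular-value interlacing for a pinched submatrix $\mX=\mP\mY\mQ$ via the Schmidt--Mirsky best low-rank approximation, then monotonicity of symmetric gauge functions, and finally the Frobenius-norm trick for the equality case---is exactly the standard route one finds in Zi\c{e}tak and in the references underlying it, so your argument is in line with what the paper implicitly relies on.

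One cosmetic remark: in step~(i) the chain $\sigma_j(\mP\mY\mQ)\le\norm{\mP}\,\sigma_j(\mY)\,\norm{\mQ}$ is a slight abuse of notation; the clean statement is that if $\mR$ achieves $\norm{\mY-\mR}=\sigma_j(\mY)$ with $\rank\mR\le j-1$, then $\mS\bydef\mP\mR\mQ$ is an admissible competitor for $\mX$ and $\norm{\mX-\mS}=\norm{\mP(\mY-\mR)\mQ}\le\norm{\mY-\mR}$. This is clearly what you intend, but writing it out avoids any ambiguity.
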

Considering $\mX \in \ginvset(\mA)$ and its representation as given
in~\eqref{eq:GinvsetMform}, using the unitary invariance of the Schatten norm,
we have
\begin{equation}
\label{eq:unique_schatten_pginv}
\norm{\mX\mA}_{S_{p}} = \norm{\mM\mSigma}_{S_{p}}
=
\norm{    
\begin{bmatrix}
    \mI_{m} & \mat{0} \\
    \mS \mSigma_{\square} \ & \mat{0}
\end{bmatrix}
}_{S_{p}} >
\norm{    
\begin{bmatrix}
      \mI_{m} & \mat{0} \\
      \mat{0} \ & \mat{0}
\end{bmatrix}
}_{S_{p}} 
=
\norm{\mA^{\dagger}\mA}_{S_{p}}
\end{equation}
as soon as $\mS \mSigma_{\square} \neq \mat{0}$, that is to say whenever $\mS
\neq \mat{0}$. For both types of columnwise mixed norms with $1 \leq q < \infty$, the
strictness of the inequality in Lemma~\ref{lem:luinv-block} when $\mK_{2} \neq
\mat{0}$ is easy to check, and we can apply the uniqueness result of
Corollary~\ref{cor:PLUINorms}. 

{\em Uniqueness for $\pginv{S_\infty}{\mA}$.} To prove uniqueness for $\pginv{S_\infty}{\mA}$, consider again equation
\eqref{eq:unique_schatten_pginv} with $p=\infty$. Denote $\mZ \bydef \mS
\mSigma_\square \neq \mat{0}$, and let $\vz^*$ be its first nonzero row. 
We bound the largest singular
value $\sigma_{1}$ of the matrix $\mY$ defined as follows
\[
    \mY \bydef
    \begin{bmatrix}
       \mI_m & \mat{0} \\
       \vz^* & 0
    \end{bmatrix}.
\]
By the variational characterization of eigenvalues of a Hermitian matrix, we have
\begin{equation}
    \sigma_1^2 = \max_{\norm{\vx}_2 = 1} \vx^* \mY \mY^\H \vx
    = \max_{\norm{\vx}_2 = 1}
    \vx^*
    \begin{bmatrix}
       \mI_m & \vz \\
       \vz^* & \norm{\vz}_2^2
    \end{bmatrix}
    \vx.
\end{equation}
Partitioning $\vx$ as $\vx^{*} = (\wt{\vx}^{*}, \  x_n^{*})$ with $x_{n} \in \mathbb{C}$ we write
the maximization as 
\begin{equation}
    \max_{\norm{\wt{\vx}}_2^2 + |x_n|^2 = 1} \norm{\wt{\vx}}_2^2 + 2 \Re (x_n^{*} \vz^* \wt{\vx}) + |x_n|^2 \norm{\vz}_2^2.
\end{equation}
Since $\vz \neq 0$ by assumption, it must have at least one non-zero entry.
Let $i$ be the index of a non-zero entry, and restrict $\wt{\vx}$ to the form
$\wt{\vx} = \alpha \ve_i$, $\alpha \in \mathbb{C}$. We have that 
\begin{equation}
    \max_{\norm{\wt{\vx}}_2^2 + |x_n|^2 = 1} \norm{\wh{\vx}}_2^2 + 2 \Re(x_n^{*} \vz^* \wt{\vx}) + |x_n|^2 \norm{\vz}_2^2
    \geq
    \max_{|\alpha| \leq 1} |\alpha|^2 + 2 \sqrt{1-|\alpha|^2} \cdot |z_i | \cdot |\alpha|  + (1-|\alpha|^2) |z_i|^2,
\end{equation}
where we used the fact that maximization over a smaller set can only diminish
the optimum, as well as $\norm{\vz}_2^2 \geq |z_i|^2$. Straightforward calculus
shows that the maximum of the last expression is $1 + |z_i|^2$, which is
strictly larger than $1$. Since $\mY \mY^\H$ is a principal minor of
\[
\begin{bmatrix}    
    \mI_{m} & \mat{0} \\
    \mS \mSigma_{\square} \ & \mat{0}
\end{bmatrix}
\begin{bmatrix}    
    \mI_{m} & \mat{0} \\
    \mS \mSigma_{\square} \ & \mat{0}
\end{bmatrix}^\H
\]
it is easy to see that
\[
\norm{
\begin{bmatrix}    
    \mI_{m} & \mat{0} \\
    \mS \mSigma_{\square} \ & \mat{0}
\end{bmatrix}
}_{S_\infty}
 \geq
 \sqrt{\max_{\norm{\vx}_2 = 1} \vx^* \mY \mY^\H \vx}
 = 
\sigma_1 \geq \sqrt{1 + |z_i|^2} \stackrel{\text{strictly}}{>} 1 = \norm{\mA^\dagger \mA}_{S_\infty}.
\]

{\em Cases of possible lack of uniqueness.} The construction of $\mA^{\ddagger} \neq \mA^{\dagger}$ in Example~\ref{ex:MPPnonUniqueOpNorm} provides a matrix $\mA$ for which $\ginv{S_{\infty}}{\mA} \supset \{\mA^{\dagger},\mA^{\ddagger}\}$ is not reduced to the MPP. 

A counterexample for $\ginv{\colpq{2}{\infty}}{\cdot}$ is as follows: consider $0< \eta <  1$ and the following matrix, 
\begin{equation}
\label{eq:CounterExampleCorollary2}
  \mA =
  \begin{bmatrix}
    \eta^{-1} & 0 & 0 \\
    0 & 1 & 0
  \end{bmatrix}.
\end{equation}
Its MPP is simply $\mA^{\dagger} = \begin{bmatrix} \eta & 0 & 0 \\ 0 & 1 & 0
\end{bmatrix}^\T$, and we have that $\normcolpq{2}{\infty}{\mA^{\dagger}} =
1$. One class of generalized inverses is given as
\begin{equation}
  \mA^{\ddagger}(\alpha) = 
  \begin{bmatrix}
    \eta & 0 \\
    0 & 1 \\
    \alpha & 0
  \end{bmatrix}.
\end{equation} 
Clearly for all $\alpha$ with $\abs{\alpha} \leq \sqrt{1-\eta^2}$, $\normcolpq{2}{\infty}{\mA^{\ddagger}(\alpha)} = \normcolpq{2}{\infty}{\mA^{\dagger}} = 1$, hence $\ginv{\colpq{2}{\infty}}{\cdot}$ is not a singleton.
To construct a counterexample for $\pginv{\colpq{2}{\infty}}{\cdot}$, consider a
full rank matrix $\mA$ with the SVD $\mA = \mU \mSigma \mV^\H$. Then all the generalized inverses $\mA^\ddagger$ are given as
\eqref{eq:GinvsetMform}, so that
\begin{equation}
    \normcolpq{2}{\infty}{\mA^{\ddagger} \mA} = \normcolpq{2}{\infty}{ 
    \begin{bmatrix}
        \mV_1^\H \\
        \mS \mSigma \mV_1^\H
    \end{bmatrix}},
\end{equation}
where we applied the left unitary invariance and we partitioned $\mV = [\mV_1 \ \mV_2]$. Clearly, setting $\mS = \mat{0}$ gives the MPP and
optimizes the norm. To construct a counterexample, note that generically
$\mV_1^\H$ will have columns with different 2-norms. Let $\vv$ be its column
with the largest 2-norm; we choose a matrix $\mP$ so that $\mP \mSigma \vv = \vec{0}$ while $\mP \mSigma \mV_{1}^{\H} \neq \mat{0}$.
Then there exists $\varepsilon_0$ so that for any $\varepsilon \leq \varepsilon_0$,
\begin{equation}
    \normcolpq{2}{\infty}{
    \begin{bmatrix}
        \mV_1^\H \\
        \varepsilon \mP \mSigma \mV_1^\H
    \end{bmatrix}}
    =
    \normcolpq{2}{\infty}{
    \begin{bmatrix}
        \mV_1^\H \\
        \mat{0}
    \end{bmatrix}}
    = \normcolpq{2}{\infty}{\mA^\dagger \mA}.
\end{equation}

We can reuse counterexample~\eqref{eq:CounterExampleCorollary2} to show the lack of uniqueness for the induced
norms (note that we already have two counterexamples: for $p = 1$ as this
gives back the columnwise norm, and for $p=2$ as this gives back the spectral
norm which is the Schatten infinity norm). Let us look at
$\normlplq{p}{2}{\mA^\ddagger(\alpha)} = \max_{\vu \in \mathbb{R}^{2}: \norm{\vu}_p = 1} \norm{\mA^{\ddagger}(\alpha) \vu}_2$. We can
write this as (after squaring the 2-norm)
\begin{equation}
    \label{eq:counterexample_induced_ellipse}
    \max_{u_1^p + u_2^p = 1} (\eta^2 + \alpha^2) u_1^2 + u_2^2.
\end{equation}
The optimization problem~\eqref{eq:counterexample_induced_ellipse} is depicted geometrically in Figure \ref{fig:geometric_counterexample}: consider the ellipse with equation
\begin{equation}
    (\eta^2 + \alpha^2)u_1^2 + u_2^2 = R^2,
\end{equation}
We are searching for the largest $R$ so that there exists a point on this ellipse with unit $\ell^p$ norm. In other words, we are squeezing the ellipse until it touches the $\ell^p$ norm ball. 
\begin{figure}[t]
   \centering
   \includegraphics{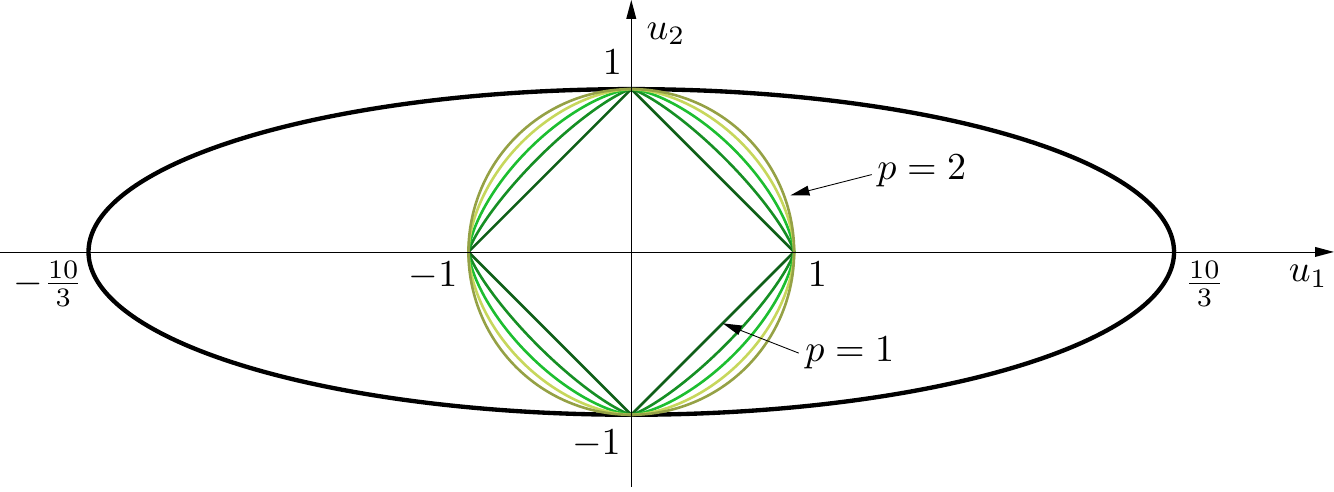}
   \caption{Geometry of the optimization problem~\eqref{eq:counterexample_induced_ellipse}. Unit norm balls $\abs{u_1}^p + \abs{u_2}^p = 1$ are shown for $p \in \set{1, 1.25, 1.5, 1.75, 2}$.}
   \label{fig:geometric_counterexample}
\end{figure}
The semi-axes of the ellipse are $R / \sqrt{\eta^2 + \alpha^2}$ and $R$, so that when $\eta$ and $\alpha$ are both small, the ellipses are elongated along the horizontal axis and the intersection between the squeezed ellipse and the $\ell^{p}$ ball will be close to the vertical axis. In fact, for $p \leq 2$ , one can check%
\footnote{This is no longer the case for $p > 2$ for the $\ell^{p}$ ball is ``too smooth'' at the point $[0,1]$: around this point on the ball we have $1-u_{2} = c |u_{1}|^{p} (1+o(u_{1})) = o(u_{1}^{2})$.}
that there exists $0< \beta < 1$ such that, whenever $\eta^2 + \alpha^2 \leq \beta^{2}$, the squeezed ellipse touches the $\ell^{p}$ ball only at the points $[0, \pm 1]^\T$ (as seen on Figure \ref{fig:geometric_counterexample}). That is, the maximum is achieved for $R = 1$. The value of the cost function at this maximum is
\begin{equation}
    (\eta^2 + \alpha^2) 0^2 + (\pm 1)^2  = 1
\end{equation}
Therefore, choosing $\eta < \beta$, the value of $\normlplq{p}{2}{\mA^\ddagger(\alpha)}$ is constant for all $\alpha^2 \leq \beta^{2}-\eta^2$, showing that there are many generalized inverses yielding the same $\lplq{p}{2}$ norm as the MPP, which we know achieves the optimum.

\subsection{Proof of Theorem~\ref{thm:not_always_mpp}} 
\label{sub:proof_of_theorem_not_always_mpp}

We will use a helper lemma about gradients and subdifferentials of matrix norms:

\begin{lemma}\label{le:subgradients0}
For a real-valued function of a complex matrix $\mZ = (z_{k\ell}) \in \C^{m \times n}$, $f(\mZ)$, denote
\[
    \nabla_\mZ f(\mZ) \bydef 
        \left[ \frac{\partial}{\partial x_{k\ell}} f(\mX+j\mY)
             + j \frac{\partial}{\partial y_{k\ell}}  f(\mX+j\mY) \right]_{k\ell}
\]
with $j$ the imaginary unit, $\mZ = \mX+j\mY$, $\mX = (x_{k\ell}) \in \R^{m \times n}$, $\mY = (y_{k\ell}) \in \R^{m \times n}$. Denote $|\cdot|^{r}$ the entrywise $r$-th power of the (complex) magnitude of a vector or matrix, $\odot$ the entrywise multiplication, $\sign(\cdot)$ the entrywise sign with $\sign(z) \bydef z/\abs{z}$ for nonzero $z \in \C$, and $\sign(0) \bydef \set{u \in \C, \abs{u} \leq 1}$. 
The chain rule yields:
\begin{itemize}
\item for $f(z) = \abs{z}$ the complex magnitude, we have for any nonzero scalar $z \in \C$
\[ 
\nabla_z \abs{z} = \frac{z}{\abs{z}} = \sign(z).
\]
At $z = 0$, the subdifferential of $f$ is $\partial_{z} f(0) = \sign(0)$.
\item for $f(z) = \abs{z}^{p}$ with $0<p<\infty$, we have for any nonzero scalar $z \in \C$
\[ 
\nabla_z \abs{z}^{p} = p \abs{z}^{p-1} \nabla_{z} \abs{z} = p \abs{z}^{p-1} \sign(z).
\]
At $z = 0$, this also holds for $p>1$, yielding $\nabla_{z} f(0) = 0$. See above for $p=1$.
\item for $f(\vz) = \norm{\vz}_p = \left(\sum_{k=1}^{m} \abs{z_k}^p\right)^{\tfrac{1}{p}}$ with $0<p<\infty$, we have for any vector with nonzero entries $\vz \in \C^{m}$
\[
\nabla_\vz \norm{\vz}_{p} = \tfrac{1}{p} (\norm{\vz}_p^p)^{\tfrac{1}{p}-1} \left[ \nabla_{z_k} \abs{z_k}^p \right]_k 
                = \norm{\vz}_p^{1-p} \left(\abs{\vz}^{p-1}\odot \sign(\vz)\right).
\]
When $\vz$ has some (but not all) zero entries: this also holds for $p>1$, and the entries of $\nabla_{\vz} f(\vz)$ corresponding to zero entries of $\vz$ are zero; for $p=1$, the above expression yields the subdifferential of $f$ at $\vz$, $\partial_{\vz} f(\vz) = \sign(\vz)$.
 \\
At $\vz=0$, the subdifferential of $f$ is $\partial_{\vz} f(\vec{0}) = \mathcal{B}_{p^{*}} = \set{\vu \in \C^{m}, \norm{\vu}_{p^{*}} \leq 1}$ with $\tfrac{1}{p^{*}}+\tfrac{1}{p}=1$.

\item for $f(\mZ) = \norm{\mZ}_{\colpq{p}{q}} = \left(\sum_{\ell=1}^{n} \norm{\vz_{\ell}}_{p}^{q}\right)^{\frac{1}{q}}$, $0< p,q<\infty$, we have for any matrix with nonzero entries $\mZ \in \C^{m \times n}$:
\begin{eqnarray*}
\nabla_{\mZ} \norm{\mZ}_{\colpq{p}{q}}
&=& \tfrac{1}{q} \left(\norm{\mZ}_{\colpq{p}{q}}^{q}\right)^{\tfrac{1}{q}-1} \left[ \nabla_{\vz_{\ell}} \norm{\vz_\ell}_{p}^{q}\right]_{\ell}
= \tfrac{1}{q} \norm{\mZ}_{\colpq{p}{q}}^{1-\tfrac{1}{q}} \left[ q \norm{\vz_{\ell}}_{p}^{q-1} \nabla_{\vz_{\ell}} \norm{\vz_{\ell}}_{p}\right]_{\ell}\\
&=&  \norm{\mZ}_{\colpq{p}{q}}^{1-\tfrac{1}{q}} \left[ \norm{\vz_{\ell}}_{p}^{q-1} \norm{\vz_{\ell}}_{p}^{1-p} (\abs{\vz_{\ell}}^{p-1} \odot \sign(\vz_{\ell}))\right]_{\ell}\\
&=&  \norm{\mZ}_{\colpq{p}{q}}^{1-\tfrac{1}{q}} \left( \abs{\mZ}^{p-1} \odot \sign(\mZ)\right) \mD_{\texttt{col}}^{q-p}
\end{eqnarray*}
with $\mD_{\texttt{col}} \bydef \diag(\norm{\vz_{\ell}}_{p})$.

When $\mZ$ has nonzero columns but some zero entries: this also holds for $p>1$, and the entries of $\nabla_{\mZ} \norm{\mZ}_{\colpq{p}{q}}$ corresponding to zero entries of $\mZ$ are zero; for $p=1$ the above expression yields the subdifferential of $f$ at $\mZ$, $\partial_{\mZ} f(\mZ) = \norm{\mZ}_{\colpq{1}{q}}^{1-\tfrac{1}{q}} \sign(\mZ) \mD_{\texttt{col}}^{q-1}$.

When $\mZ$ has some (but not all) zero columns, we get
\begin{itemize}
    \item For $p > 1$ and $q > 1$ the above extends, with zeros in the gradient entries (resp. columns) corresponding to zero entries (resp. columns) in $\mZ$.
    \item For $p > 1$ and $q = 1$, $f(\mZ) = \normcolpq{p}{1}{\mZ} = \sum_{\ell=1}^n \norm{\vz_\ell}_p$, and we get a subdifferential 

    \[
        \partial_{\mZ} \normcolpq{p}{1}{\mZ}
        = 
        \set{ \mV = [\vv_1, \ldots, \vv_n] \ \bigg| \ \vv_i = \nabla \norm{\vv_i}_p~\text{if}~\vz_i \neq \vec{0}, \ \vv_i \in \partial \norm{\vz_i}_p~\text{otherwise}}
    \]
    \item For $p = 1$ and $q \geq 1$ we again get the subdifferential $\partial_{\mZ} \normcolpq{1}{q}{\mZ} = \normcolpq{p}{q}{\mZ}^{1-\tfrac{1}{q}} \sign(\mZ) \mD_{\texttt{col}}^{q-p}$ as above. In particular, when $p = q = 1$, we have that $\partial_{\mZ} \normcolpq{1}{1}{\mZ} = \sign(\mZ)$.
\end{itemize}

At $\mZ = 0$ the subdifferential is again the unit ball of the dual norm (this follows from the definition of a subdifferential and the definition of a dual norm).

\item similarly for $f(\mZ) = \norm{\mZ}_{\rowpq{p}{q}} = \left(\sum_{k} \norm{\vz^{k}}_{p}^{q}\right)^{\frac{1}{q}}$,  $0< p,q<\infty$, we have for any matrix with nonzero entries $\mZ \in \C^{m \times n}$:
\begin{eqnarray*}
\nabla_{\mZ} \norm{\mZ}_{\rowpq{p}{q}}
&=&  \norm{\mZ}_{\rowpq{p}{q}}^{1-\tfrac{1}{q}} \mD_{\texttt{row}}^{q-p} \left( \abs{\mZ}^{p-1} \odot \sign(\mZ)\right) 
\end{eqnarray*}
with $\mD_{\texttt{row}} \bydef \diag(\norm{\vz^{k}}_{p})$.

The same extensions to $\mZ$ with some (but not all) zero entries (resp. zero rows) hold.
In particular for $p=1$ and $\mZ$ with nonzero rows, $\partial_{\mZ} f(\mZ) = \norm{\mZ}_{\rowpq{1}{q}}^{1-\tfrac{1}{q}} \mD_{\texttt{row}}^{q-1} \sign(\mZ)$.
\end{itemize}
The same results hold for the same functions and norms of {\em real-valued} vectors and matrices.
\end{lemma}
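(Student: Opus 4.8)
The plan is to reduce every formula to the single scalar ingredient $z\mapsto\abs{z}$ and then build up by repeated use of the chain rule for real-valued functions of (real or complex) matrix arguments, dispatching the non-smooth cases with three standard facts of convex analysis: the subdifferential of a norm at the origin is the closed unit ball of its dual norm; the subdifferential of a separable sum is the Cartesian product of the subdifferentials of the summands; and at a point of differentiability of a convex function the subdifferential reduces to the singleton gradient.

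First I would treat the scalar building blocks. For $f(z)=\abs{z}$, writing $z=x+jy$ makes $f$ the Euclidean norm on $\R^{2}$; away from the origin the two partials are $x/\abs{z}$ and $y/\abs{z}$, so $\nabla_{z}\abs{z}=(x+jy)/\abs{z}=\sign(z)$, while at $z=0$ the subdifferential of the Euclidean norm is its closed unit ball, which under $\R^{2}\simeq\C$ is $\set{u\in\C:\abs{u}\leq1}$. For $f(z)=\abs{z}^{p}$ with $z\neq0$, the chain rule through $t\mapsto t^{p}$ gives $p\abs{z}^{p-1}\sign(z)$; when $p>1$ one has $\abs{z}^{p}=o(\abs{z})$ near $0$, so $f$ is differentiable there with gradient $0$, matching the formula, and $p=1$ is the previous case.

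Then I would pass to vectors and matrices by composition. For $f(\vz)=\norm{\vz}_{p}$, writing it as $(\norm{\vz}_{p}^{p})^{1/p}$ with the inner sum separable and positive whenever $\vz\neq\vec{0}$, the chain rule through $t\mapsto t^{1/p}$ yields the stated $\norm{\vz}_{p}^{1-p}(\abs{\vz}^{p-1}\odot\sign(\vz))$; for $\vz$ with some but not all zero entries and $p>1$ the same expression remains the gradient, each $\abs{z_{k}}^{p}$ being $C^{1}$ with vanishing derivative at the zero coordinates, for $p=1$ separability gives the subdifferential $\sign(\vz)$, and at $\vz=\vec{0}$ the subdifferential is the dual ball $\mathcal{B}_{p^{*}}$. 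The columnwise mixed norm follows the same scheme with one further outer layer: differentiating $\big(\sum_{\ell}\norm{\vz_{\ell}}_{p}^{q}\big)^{1/q}$ through $t\mapsto t^{1/q}$, using $\nabla_{\vz_{\ell}}\norm{\vz_{\ell}}_{p}^{q}=q\norm{\vz_{\ell}}_{p}^{q-1}\nabla_{\vz_{\ell}}\norm{\vz_{\ell}}_{p}$ from the previous step, and collecting the scalars $\norm{\vz_{\ell}}_{p}$ into $\mD_{\texttt{col}}$; the cases $q=1$ (separable across columns), $p=1$ (each column norm separable), zero columns, and $\mZ=\mat{0}$ are again dispatched by the three convex-analysis facts. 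The rowwise norm equals the columnwise norm after interchanging rows and columns, so its gradient follows verbatim with $\mD_{\texttt{row}}$ multiplying on the left, and the real-valued statements are the special case of vanishing imaginary parts.

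The hard part---such as it is---will be the bookkeeping at matrices with some but not all vanishing entries, columns, or rows: there one must verify that the convex function is genuinely differentiable (so that plugging zeros into the formula really does produce the gradient) precisely when the relevant exponent exceeds $1$, and otherwise pass to the subdifferential through separability rather than through a naive closed form. I would organise the write-up so that each norm's formula is derived from the preceding one, performing the chain-rule computation only once per level.
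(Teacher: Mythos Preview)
Your proposal is correct and matches the paper's approach exactly: the paper does not provide a separate proof for this lemma, as the chain-rule derivations are displayed inline within the statement itself (the lemma literally opens with ``The chain rule yields:'' and then exhibits each computation step by step), with the non-smooth cases handled by the same standard convex-analysis facts you list. Your plan to build up from the scalar magnitude via successive compositions, and to dispatch zero entries, columns, and the origin through separability and the dual-ball characterisation, is precisely what the paper does.
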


\begin{proof}[Proof of Theorem~\ref{thm:not_always_mpp}]
For simplicity we use $g(\mX)$ to denote the considered norm (respectively $\normcolpq{p}{q}{\mX}$, $\normcolpq{p}{q}{\mX\mA}$, $\normrowpq{p}{q}{\mX}$ or $\normrowpq{p}{q}{\mX\mA}$). Our goal is to build a matrix $\mA$ such that: for some $\mX \in \ginvset(\mA)$ we have $g(\mA^{\dagger}) > g(\mX)$.

For any $\mA$ with full column rank (a property that will be satisfied by the matrices we will construct), any $\mX \in \ginvset(\mA)$ can be written as $\mA^\dagger + \mN \mZ$, where $\mN \in \R^{n\times (n-m)}$ is a basis for the nullspace of $\mA$, and $\mZ \in \R^{(n-m)\times m}$. Using this representation and denoting $f(\mZ) \bydef g(\mA^{\dagger}+\mN \mZ)$, it is enough to show that the {\em subdifferential} of $f$ at $\mZ = \mat{0}$ does not contain $\mat{0}$, i.e., that
\[
\mat{0}_{(n-m)\times m} \notin \partial f(\mat{0}) = 
\set{\mF \in \R^{(n-m) \times m}:\ \forall \mZ,\ f(\mZ)-f(\mat{0}) \geq \langle \mF,\mZ\rangle_{F}}.
\]
which is in turn equivalent 
to $\mat{0}_{(n-m) \times m} \notin \mN^\T  \partial g(\mA^\dagger)$. Put differently, there is no $n \times m$ matrix $\mG \in \partial g(\mA^\dagger)$ such that its columns are all orthogonal to the null space  of $\mA$, $\ker(\mA) = \range(\mN)$, i.e., for any $\mG = [\vg_{1},\ldots,\vg_{m}] \in \partial g(\mA^\dagger)$, there is a column index $j \in \set{1, \ldots, m}$ such that $\vg_{j} \notin \range(\mN)^{\perp} = \range(\mA^\T) = \range(\mA^\dagger)$. 
At this point we have shown that, given a full rank matrix $\mA$, there exists $\mX \in \ginvset(\mA)$ such that $g(\mA^{\dagger}) > g(\mX)$ if, and only if,
  \begin{equation}
    \label{eq:gradient_condition}
    \mat{0}_{n \times m} \notin (\mI_{n} - \mA^\dagger \mA) \partial g(\mA^\dagger) .
  \end{equation}
Our new goal is thus to exhibit such a matrix $\mA$ for all the considered norms.
\begin{enumerate}
\item {\bf Case of $\ginv{}{\cdot}$ for columnwise norms.} Since $\ginv{\colpq{p}{q}}{\cdot} = \ginv{\colpq{p}{p}}{\cdot} = \ginv{p}{\cdot}$ for $q<\infty$, it is sufficient to consider $q=p$, i.e. $g(\mX) = \norm{\mX}_{p} = \left(\sum_{j=1}^{m} \norm{\vx}_{p}^{p}\right)^{1/p}$. 
By Lemma~\ref{le:subgradients0}, we need to consider 
\[
h^{\mathsf{ginv}}_{\colpq{p}{p}}(\mA) 
=  (\mI_{n}-\mA^{\dag}\mA) \left( \abs{\mA^{\dag}}^{p-1} \odot \sign(\mA^{\dag})\right) 
\]
and exhibit some full rank $\mA$ such that $h^{\mathsf{ginv}}_{\colpq{p}{p}}(\mA)  \ni \mat{0}$ if, and only if, $p = 2$. Since $\mA^{\dag}\mA$ is the orthogonal projection onto the span of $\mA^{\dag}$, it is sufficient to find  a full rank $\mB \in \R^{n \times m}$ with strictly positive entries so that $(\mI_{n}-\mP_{\mB}) \abs{\mB}^{p-1} = \mat{0}$ if, and only if, $p = 2$, with $\mP_{\mB} = \mB\mB^{\dag}$ the orthogonal projection onto the span of $\mB$ (the matrix $\mB$ plays the role of $\mA^{\dag}$, and $\mB^{\dag}$ the role of $\mA$). 

Let $\mC \in \R^{n \times m}$ be a matrix with $0/1$ entries and $\mB \bydef \mat{1}_{n} \mat{1}_{m}^{\T} + \mC$. All entries of $\mB$ are ones except those associated to nonzero entries in $\mC$, which are twos. Hence
\begin{equation}\label{eq:Bconstruction}
\abs{\mB}^{p-1} 
= \mat{1}_{n}\mat{1}_{m}^{\T}+(2^{p-1}-1) \mC
= \mat{1}_{n}\mat{1}_{m}^{\T}+(2^{p-1}-1) (\mB-\mat{1}_{n}\mat{1}_{m}^{\T})
= (2^{p-1}-1) \mB -(2^{p-1}-2)\mat{1}_{n}\mat{1}_{m}^{\T}
\end{equation}
and
\[
(\mI_{n}-\mP_{\mB}) \abs{\mB}^{p-1} = -(2^{p-1}-2) (\mI_{n}-\mP) \mat{1}_{n} \mat{1}_{m}^{\T}.
\]
Provided that $\mat{1}_{n}$ does not belong to the span of $\mB$, $(\mI_{n}-\mP) \mat{1}_{n} \mat{1}_{m}^{\T}$ is nonzero, and the above expression is zero if, and only if, $2^{p-1} = 2$, which is equivalent to $p=2$.

To conclude, we just need to show we can choose a binary $\mC$ so that: (i) $\mB$ is full rank; and (ii) its span does not contain the vector $\mat{1}_{n}$. We let the reader check that for any $m \geq 1$ this can be achieved with
\[
\mC_= \begin{bmatrix}\mI_{m}\\ \mat{0}_{(n-m) \times m}\end{bmatrix}.
\]
This shows the desired property with $\mA_{1} \bydef \mB^{\dag}$.

\item {\bf Case of $\ginv{}{\cdot}$ for rowwise norms.}

For $m=1$ with rowwise norms, we have for any $0<p,q\leq \infty$ and $\mX \in \R^{n \times 1}$: 
$\normrowpq{p}{q}{\mX} = \norm{\mX}_{q}$. Hence for any $0<p,q \leq \infty$ and any $\mA \in \R^{1 \times n}$, 
$\ginv{\rowpq{p}{q}}{\mA} = \ginv{q}{\mA}$. This yields the desired property with $\mA_{1}$ obtained above.

For $m \geq 2$, by Lemma~\ref{le:subgradients0}, what we seek is  $\mA$ so that 
\[
h^{\mathsf{ginv}}_{\rowpq{p}{q}}(\mA) 
=  (\mI_{n}-\mA^{\dag}\mA) \diag\left(\norm{(\mA^{\dag})^{i}}_{p}^{q-p}\right)_{i=1}^{n}\left( \abs{\mA^{\dag}}^{p-1} \odot \sign(\mA^{\dag})\right) \ni \mat{0}
\]
if, and only if, $(p,q)$ have particular values. As previously, it is sufficient to find a full rank  $\mB \in \R^{n \times m}$ with no zero entry so that, with $\mD_{\text{row}} = \diag(\norm{\vb^{i}}_{p})$,
\[
(\mI_{n}-\mP_{\mB}) \mD_{\text{row}}^{q-p} (\abs{\mB}^{p-1} \odot \sign(\mB)) = \mat{0}
\]
only for these values of $(p,q)$.

We consider $\mB \bydef \vec{1}_n \vec{1}_m^\T + \mC'$ with 
\[
    \mC' \bydef 
    \left[\begin{array}{c}
        \mI_{m}\\ 
        \hline
        -2,1,\mat{0}_{m-2}^{\T}\\
        \hline
        \mat{0}_{(n-1-m) \times (m-1)}; \mat{1}_{(n-1-m) \times 1}
    \end{array}\right]
\]
so that $b_{(m+1)1} = -1$. In each line of $\abs{\mB}$, one entry is a $2$ and the remaining $m-1$ entries are ones, hence for  $0<p,q<\infty$ we have $\mD_{\text{row}}^{q-p} = C_{p,q} \mI_n$ and
\[
(\mI_n - \mP_\mB) \mD^{q-p}_\text{row} \left[ \abs{\mB}^{p-1} \odot \sign(\mB) \right] = C_{p,q} (\mI_n - \mP_\mB) \left[ \abs{\mB}^{p-1} \odot \sign(\mB) \right].
\]
The first column of $\left[ \abs{\mB}^{p-1} \odot \sign(\mB) \right]$ is $\vy \bydef \left[ 2^{p-1}, \vec{1}_{m-1},-1, \vec{1}_{n-1-m} \right]^\T$. To conclude we show that this lives in the span of $\mB$ if, and only if, $p = 2$. 

Assume that $\vy = \mB \vz$. Specializing to rows $2$ to $m+1$, 
we obtain $[\vec{1}_{m-1}^{\T},-1]^{\T} = \mB'\vz$ with $\mB'$ the restriction of $\mB$ to the considered $m$ rows. We let the reader check that $\mB'$ is invertible and that this implies $\vz = \ve_{1}$. 
Specializing now to the first row, we have $2^{p-1} = y_{1} = (\mB \ve_{1})_{1} = 2$ that is to say $p=2$. The converse is immediate.
This shows the desired property with $\mA_{2} \bydef \mB^{\dag}$.

To further characterize the role of $q$ when $p=2$ we now consider $m \geq 3$ and construct a full rank $\mB$ from $\mC$ where $\mC$ is binary with the same number of ones in all rows except the first one, so that $\mD_{\text{row}}$ has all diagonal entries equal except the first one and $\mD_{\text{row}}^{q-p} = c (\mI + \lambda \ve_{1}\ve_{1}^{\T})$ where $c>0$, and $\lambda \neq 0$ as soon as $q \neq p$. With $\vb^{1}$ the first row of $\mB$ (a row vector),
\begin{eqnarray*}
    \mD_{\text{row}}^{q-p}\abs{\mB}^{p-1} 
    &=&
    c \abs{\mB}^{p-1} + c\lambda \ve_{1} \abs{\vb^{1}}^{p-1}\\
    &\stackrel{\eqref{eq:Bconstruction}}{=}& 
    c(2^{p-1}-1) \mB - c (2^{p-1}-2) \vec{1}_{n}\vec{1}_{m}^{\T} + c\lambda \ve_{1} \abs{\vb^{1}}^{p-1}\\
    (\mI_{n}-\mP_{\mB}) \mD_{\text{row}}^{q-p}\abs{\mB}^{p-1} 
    &=&
    c(\mI_{n}-\mP_{\mB})\left(\lambda \ve_{1}\abs{\vb^{1}}^{p-1} - (2^{p-1}-2)\vec{1}_{n}\vec{1}_{m}^{\T}\right)\\
    &=& c\left( \lambda \vu \abs{\vb^1}^{p-1} - (2^{p-1}-2) \vv \vec{1}_{m}^{\T}\right)
\end{eqnarray*}
with $\vu \bydef (\mI_{n}-\mP_{\mB})\ve_{1}$ and $\vv \bydef (\mI_{n}-\mP_{\mB})\vec{1}_{n}$.

For $m \geq 3$, consider the specific choice 
\[
    \mC = 
    \left[\begin{array}{c}
   \vc^{1}\\
        \hline
        \mI_{m}\\ 
        \hline
        \mat{0}_{(n-1-m) \times (m-1)}, \ \mat{1}_{n-1-m}
    \end{array}\right]
\]
where the last block is empty if $n=m+1$, and the first row is 
$\vc^{1} =  [1,1,\mat{0}_{m-2}^{\T}]$.

For any vector $\vz$ and scalars $\alpha,\beta$ such that $\mB\vz = \alpha \vec{1}_{n} + \beta\ve_{1}$, one can show that 
$\vz = \beta \vec{1}_{m}$ and $\alpha = (m+1)\beta$.
In particular: 
\begin{enumerate}[label=(\roman*)]
    \item The matrix $\mB$ is full rank as the equality  $\mB\vz = \vec{0}_{n} = \alpha\vec{1}_{n}+\beta\ve_{1}$ with $\alpha=\beta=0$ implies $\vz = \vec{0}_{m}$. 
    \item Similarly $\mB \vz = \alpha \ve_1 + \beta\vec{1}_{n}$ with $\beta=0$ implies $\vz = \vec{0}$, hence $\ve_1 \notin\range(\mB)$ and  $\vu \neq \vec{0}$;
    \item By the same argument, $\vec{1}_{n} \notin \range(\mB)$, and $\vv \neq \vec{0}$;
    \item For $m\geq 3$, $(m+1) \vec{1}_n + \vec{e}_1 = \mB \vec{1}_m$ \emph{does} belong to the span of $\mB$. 
    Up to a scaling it is the only linear combination of $\vec{1}_n$ and $\vec{e}_1$ that lives there; if there were another one, then by linearity the whole 2D subspace would be in the span of $\mB$ including $\vec{1}_n$ and $\ve_1$ for which we have shown that this is not the case;
    \item This implies that $\vu$ and $\vv$ are colinear with $\vu = -(m+1)\vv$
    as $\alpha \vv + \beta \vu = \vec{0}$ if, and only if, $\alpha \vec{1}_{n}+\beta \ve_{1} \in \range(\mB)$, which is equivalent to $\alpha = (m+1)\beta$.
\end{enumerate}
As a result
\[
    (\mI_{n}-\mP_{\mB}) \mD_{\text{row}}^{q-p}\abs{\mB}^{p-1} 
    = c\cdot\vu\left(\lambda \abs{\vb}^{p-1}  + 
    \tfrac{2^{p-1}-2}{m+1} \vec{1}_{m}^{\T}\right).
\]
This is zero if and only if 
\begin{equation}\label{eq:IntermediateGinvRow}
\lambda \abs{\vb}^{p-1}  = 
\tfrac{2^{p-1}-2}{m+1} \vec{1}_{m}^{\T}.
\end{equation}


As $\abs{\vb^{1}}^{p-1} = [2^{p-1},2^{p-1},\vec{1}_{m-2}^{\T}]$, ~\eqref{eq:IntermediateGinvRow} is equivalent to
\begin{eqnarray}
    \lambda &=& -(m+1)^{-1}(2^{p-1}-2)2^{-(p - 1)}\label{eq:IntermediateGinvRow1}\\
    \lambda  &=& -(m+1)^{-1}(2^{p-1}-2).\label{eq:IntermediateGinvRow2}
\end{eqnarray}

\begin{itemize}
\item 
For $p \notin \set{1,2}$ and any $q$, the right-hand side of these two equalities have incompatible values, so the only way for~\eqref{eq:IntermediateGinvRow1}-\eqref{eq:IntermediateGinvRow2} to hold is to have $p \in \set{1,2}$. 

\item 
For $p=2$ the right-hand side of these equalities is zero, hence they are satisfied only if $\lambda = 0$. Since $\mD_{\text{row}}^{p} = \mD_{\text{row}}^{2} = \diag(m+6,m+3,\ldots,m+3) = (m+3) \diag(\tfrac{m+6}{m+3},1,\ldots,1)$, we have $\lambda = \left(\tfrac{m+6}{m+3}\right)^{q/p-1}-1$ and the condition $\lambda = 0$ can only hold if $q=p$. This means that the only way for~\eqref{eq:IntermediateGinvRow} to hold when $p=2$ is to have $(p,q)=(2,2)$.

\item 
For $p=1$ and any $q$, the right-hand sides are compatible and yield the constraint $\lambda+1=\tfrac{m+2}{m+1}$. For this value of $p$ we have $\mD_{\text{row}} = \diag(m+2,m+1,\ldots,m+1)$, $c = (m+1)^{q-1}$, and $\lambda = \left(\tfrac{m+2}{m+1}\right)^{q-1}-1$. Combining we get the condition
\begin{equation}
    \label{eq:cond_rowwise_notmpp}
    \frac{m+2}{m+1} = \left( \frac{m+2}{m+1} \right)^{q-1}
\end{equation}
which can only hold for $q = 2$. 
\end{itemize}
This shows the desired property with $\mA_{3} \bydef \mB^{\dag}$.

For $(p,q) = (1,2)$, the construction of $\mA_{3}$ does not give us a counterexample. In fact, for this choice of $p$ and $q$, whenever $\mB$ has all entries positive, we have $\mD_{\text{row}}^{q-p} = \mD_{\text{row}} = \diag(\mB \vec{1}_{m})$ and
\begin{eqnarray*}
    (\mI_n - \mP_\mB) \mD^{q-p}_\text{row} \left[ \abs{\mB}^{p-1} \odot \sign(\mB) \right]
    & = & (\mI_n - \mP_\mB) \diag\left(\mB \vec{1}_{m}\right) \vec{1}_n\vec{1}_m^\T \\
    & = & (\mI_n - \mP_\mB) \mB \vec{1}_{m}\vec{1}_m^\T = \mat{0}_{n \times m}
\end{eqnarray*}
where the last equality follows because $
\mB \vec{1}_m$ is always in the range of $\mB$. However the counter-example built with $\mA_{2}$ is valid for $(p,q)=(1,2)$ hence $(p,q) \neq (2,2)$ implies the existence of a full-rank $\mA$ such that $\mA^{\dag} \notin \ginv{\rowpq{p}{q}}{\mA}$.

\item {\bf Case of $\pginv{}{\cdot}$ for columnwise norms} 

For columnwise norms, by Lemma~\ref{le:subgradients0}, we seek $\mA$ so that
\[
h^{\mathsf{pginv}}_{\colpq{p}{q}}(\mA) 
=  (\mI_{n}-\mA^{\dag}\mA) \left( \abs{\mA^{\dag}\mA}^{p-1} \odot \sign(\mA^{\dag}\mA)\right) \diag\left(\norm{(\mA^{\dag}\mA)_{j}}_{p}^{q-p}\right)_{j=1}^{n}\mA^{\T} \ni \mat{0}
\]
only for specific values of $(p,q)$, where for $p=1$ the notation $\abs{\mM}^{p-1}\odot \sign(\mM)$ should be replaced by $\sign(\mM)$ 
Again, some manipulations indicate that it is sufficient to find a full rank $\mB \in \R^{n \times m}$ so that 
$\mP_{\mB}$ has no zero column and we have, with $\mD = \diag(\norm{(\mP_{\mB})^{i}}_{p})$
\begin{equation}\label{eq:strategyPGinvCol}
(\mI_{n}-\mP_{\mB}) (\abs{\mP_{\mB}}^{p-1} \odot \sign(\mP_{\mB})) \mD^{q-p}
\mP_{B}
\ni \mat{0}
\end{equation}
only for these values of $(p,q)$. In other words, it is sufficient to find an orthogonal projection matrix $\mP$ of rank $m$ with no zero column so that 
with $\mD = \diag(\norm{(\mP)^{i}}_{p})$:
\[
(\mI_{n}-\mP) (\abs{\mP}^{p-1} \odot \sign(\mP)) \mD^{q-p} \mP \ni \mat{0}
\]
only for controlled values of $(p,q)$.

For rowwise norms, by Lemma~\ref{le:subgradients0}, we need to exhibit $\mA$ so that we control for which $(p,q)$ we have
\[
h^{\mathsf{pginv}}_{\rowpq{p}{q}}(\mA) 
=  (\mI_{n}-\mA^{\dag}\mA) \diag\left(\norm{(\mA^{\dag}\mA)^{i}}_{p}^{q-p}\right)_{j=1}^{n} \left( \abs{\mA^{\dag}\mA}^{p-1} \odot \sign(\mA^{\dag}\mA)\right) \mA^{\T}
\ni \mat{0}.
\]
With the same reasoning as above it is sufficient to find an orthogonal projection matrix $\mP \in \R^{n \times n}$ of rank $m$ with no zero column so that  we control the values of $(p,q)$ for which 
\begin{equation}\label{eq:strategyPGinvRow}
(\mI_{n}-\mP) \mD^{q-p} (\abs{\mP}^{p-1} \odot \sign(\mP)) 
\mP
\ni \mat{0}
\end{equation}
where $\mD =  \diag(\norm{(\mP)_{j}}_{p}) = \diag(\norm{(\mP)^{i}}_{p})$ by the symmetry of $\mP$.

We now proceed to the desired constructions for various dimensions $m,n$.
\begin{itemize}
\item For $m=1$ we have for any $0<p,q \leq \infty$ and $\mX \in \R^{n \times 1}$: $\normcolpq{p}{q}{\mX\mA} =  \norm{\mX}_{p}\norm{\mA}_{q}$ and $\normrowpq{p}{q}{\mX\mA} = \norm{\mX}_{q}\norm{\mA}_{p}$. Hence for any $0<p,q\leq\infty$ and any $\mA \in \R^{1 \times n}$, $\pginv{\colpq{p}{q}}{\mA} = \ginv{p}{\mA}$ and $\pginv{\rowpq{p}{q}}{\mA} = \ginv{q}{\mA}$. This allows reusing the matrix $\mA_{1}$.

\item For $m \geq 2$ and columnwise norms, we construct $\mP$ as follows:
choose $0< \theta < \pi/2$, $\theta \neq \pi/4$ and set $c \bydef \cos \theta > 0$, $s \bydef \sin\theta>0$, $\vu \bydef (c,s)$; build a block diagonal projection matrix of rank $m$, 
\[
\mP \bydef \blockdiag(\vu\vu^{\T},\mP')
\]
where $\mP'$ is an arbitrary $(n-2) \times (n-2)$ projection matrix of rank $m-1$ with nonzero columns.
Denoting
 $\vv \bydef (-s,c)$ and $\vw \bydef (c^{p-1},s^{p-1})$, we have $\mI_{2}-\vu\vu^{\T}= \vv\vv^{\T}$ and
\begin{eqnarray*}
\mI_{n}-\mP &=& \blockdiag(\vv\vv^{\T},\mI_{n-m}-\mP')\\
\mD &=& \blockdiag(\|\vu\|_{p}\diag(\vu),\mD')\\
\abs{\mP}^{p-1} \odot\sign(\mP) &=& \left\{
\begin{array}{ll}
\blockdiag(\vw\vw^{\T},\mW), & \text{for}\ 1< p < \infty\\
\sign(\mP) = \set{\begin{bmatrix}
\vw\vw^{\T} & \mS_1\\
\mS_{2} & \mW
\end{bmatrix}, 
\normcolpq{\infty}{\infty}{\mS_{i}} \leq 1
} & \text{for}\ p=1,
\end{array}
\right.
\end{eqnarray*}
where $\mD'$ is diagonal and $\mW$ is some matrix. For $p = 1$ we abused the notation and instead of specifying the whole subdifferential as a set, we kept the decomposition into parts and assigned the set to the only ambiguous term. For $p>1$ it follows that
\begin{eqnarray*}
(\mI_{n}-\mP) (\abs{\mP}^{p-1}\odot\sign(\mP)) \mD^{q-p} \mP
&=&
\|\vu\|_{p}^{q-p}\ 
\blockdiag(\vv\vv^{\T}\vw\vw^{\T}\diag(\abs{\vu}^{q-p})\vu\vu^{\T},\mW')\\
&=&
\|\vu\|_{p}^{q-p}\cdot\langle\vv,\vw\rangle \cdot (\vw^{\T}\diag(\abs{\vu}^{q-p})\vu)\ \\
& & \cdot\ \blockdiag(\vv\vu^{\T},\mW'').
\end{eqnarray*}
for some matrices $\mW'$, $\mW''$.
For $p=1$ we similarly get
\begin{eqnarray*}
(\mI_{n}-\mP) \sign(\mP) \mD^{q-p} \mP
&=&
\|\vu\|_{p}^{q-p}
\cdot \langle\vv,\vw\rangle 
\cdot \vw^{\T}\diag(\abs{\vu}^{q-p})\vu
\set{\begin{bmatrix}
\vv\vu^{\T} & \mW_{1}\\
\mW_{2} & \mW_{3}
\end{bmatrix},
\mW_{i} \in \mathcal{W}_{i}
}
\end{eqnarray*}
for some matrix sets $\mathcal{W}_{i}$.

As $\vv\vu^{\T} \neq \mat{0}$, $\vw^{\T}\diag(\abs{\vu}^{q-p})\vu = c^{q} + s^{q} > 0$, and $\langle\vv,\vw\rangle = -sc^{p-1}+cs^{p-1}$ this is zero if, and only if, $c s^{p-1} = s c^{p-1}$, i.e. $(s/c)^{p-2} = 1$. As $s/c=\tan\theta \neq 1$, this is equivalent to $p = 2$. 

\end{itemize}

\item {\bf Case of $\pginv{}{\cdot}$ for rowwise norms} 

\begin{itemize}

\item 

For $m \geq 2$, we proceed similarly as for the columnwise norms:
\begin{eqnarray*}
(\mI_{n}-\mP) \mD^{q-p} (\abs{\mP}^{p-1}\odot\sign(\mP))\mP
&=&
\|\vu\|_{p}^{q-p}\ 
\blockdiag(\vv\vv^{\T}\diag(\abs{\vu}^{q-p})\vw\vw^{\T}\vu\vu^{\T},\mW)\\
&=&
\|\vu\|_{p}^{q-p}\cdot\vv^{\T}\diag(\abs{\vu}^{q-p})\vw \cdot\ \langle\vw,\vu\rangle\ \\
& & \cdot\ \blockdiag(\vv\vu^{\T},\mW').
\end{eqnarray*}
with the appropriate adaptation for $p=1$. As $\langle \vw,\vu\rangle = c^{p}+s^{p}>0$ and $\vv^{\T}\diag(\abs{\vu}^{q-p})\vw = -s c^{q-1}+c s^{q-1}$, the same reasoning shows this is zero if, and only if, $q = 2$.

\item For $m=2$, $n=3$, $q=2$, consider the unit norm vector $\vv \bydef [2,1, 1]^{\T}/\sqrt{6}$ and define $\mP \bydef \mI_{3}-\vv\vv^{\T}$, which is a rank-$(m=2)$ projection matrix. We have for any $0<p<\infty$
\begin{eqnarray*}
\mP  =  \tfrac{1}{6} 
\begin{bmatrix}
2 & -2 & -2\\
-2 & 5 & -1\\
-2 & -1 & 5
\end{bmatrix} 
&=& \tfrac{1}{3}
\begin{bmatrix}
1 & -1 & -1\\
-1 & 5/2 & -1/2\\
-1 & -1/2 & 5/2
\end{bmatrix}
\\
\abs{\mP}^{p-1} \odot \sign(\mP) & \propto & 
\begin{bmatrix}
1 & -1 & -1\\
-1 & (5/2)^{p-1} & -(1/2)^{p-1}\\
-1 & -(1/2)^{p-1} & (5/2)^{p-1}
\end{bmatrix} \bydef \mQ_{p}
\end{eqnarray*}
Moreover denoting $c_{p} \bydef 1+(5/2)^{p}+(1/2)^{p}$ we have $\mD_{\text{row}}^{q-p} \propto (\mI_{3}+\lambda \ve_{1}\ve_{1}^{\T})$ with $\lambda = (3/c_{p})^{q/p-1}-1$, hence 
\begin{eqnarray}
(\mI_{3}-\mP)  \mD^{q-p} (\abs{\mP}^{p-1} \odot \sign(\mP))\mP
& \propto & \vv\vv^{\T} \left(\mI_{3}+\lambda \ve_{1}\ve_{1}^{\T}\right) \mQ_{p}\mQ_{2}
\label{eq:PGinvRowM2}
\end{eqnarray}
As $\vv \neq \vec{0}_{3}$, the right-hand side of~\eqref{eq:PGinvRowM2} vanishes if and only if
\begin{eqnarray}
\vec{0}_{3} &=& \mQ_{2} \mQ_{p} (\mI_{3}+\lambda \ve_{1}\ve_{1}^{\T}) \vv 
=\mQ_{2}\mQ_{p}\vv + \lambda v_{1} \mQ_{2}\mQ_{p}\ve_{1}
\propto \mQ_{2}\mQ_{p} \begin{bmatrix}2\\1\\1\end{bmatrix} + 6\lambda \begin{bmatrix}1\\-1\\-1\end{bmatrix}.\label{eq:PGinvRowM2N3} 
\end{eqnarray}
We let the reader check that 
\[
\mQ_{2}\mQ_{p} \begin{bmatrix}2\\1\\1\end{bmatrix}
= 2(2-(5/2)^{p-1}+(1/2)^{p-1})\begin{bmatrix}1\\-1\\-1\end{bmatrix}
\]
hence~\eqref{eq:PGinvRowM2N3} is equivalent to $\lambda = -((1/2)^{p-1}-(5/2)^{p-1}+2)/3$. We derived before that $\lambda = (3/c_{p})^{q/p-1}-1$ hence the right-hand side of~\eqref{eq:PGinvRowM2} vanishes if and only if:
\begin{equation}
    \label{eq:pginv_row_withlam}
    \left[ \frac{3}{1 + \left(\frac{5}{2}\right)^{p} + \left( \frac{1}{2}  \right)^p } \right]^{\frac{2-p}{p}} = \frac{1}{3} \left[ -\left( \frac{1}{2}\right)^{p-1}   + \left( \frac{5}{2} \right)^{p-1}  +1 \right].
\end{equation}
Setting $g(p) \bydef \frac{1}{3} \left[1 + \left(\frac{5}{2}\right)^{p} + \left( \frac{1}{2}  \right)^p \right]$ for $p \in \R$ this can be written for $0 < p < \infty$ as
\[
    g(p)^{\frac{p-2}{p}} = g(p - 1) - \frac{1}{3}\left( \frac{1}{2} \right)^{p-2}.
\]

\begin{itemize}

\item

For $p = 2$ the equation is satisfied and the right-hand side of \eqref{eq:PGinvRowM2} indeed vanishes. 

\item

For $p > 2$, note that $g(p) < \frac{1}{2} \left( \frac{5}{2} \right)^p$ since it is equivalent to
\[
    \tfrac{1}{6} \left( \tfrac{5}{2} \right)^p > \tfrac{1}{3} + \tfrac{1}{3} \left( \tfrac{1}{2} \right)^p
\]
where the left-hand side is strictly increasing, and the right-hand side strictly decreasing and smaller than the left-hand side for $p = 2$. Therefore, for $p>2$, 
\[
    g(p)^{\tfrac{p-2}{p}} < \left[\tfrac{1}{2} \left( \tfrac{5}{2} \right)^p\right]^{\tfrac{p-2}{p}} = \tfrac{2^{2/p}}{5} \left( \tfrac{5}{2} \right)^{p-1}.
\]
On the other hand, 
\[
    \left[\tfrac{1}{2} \left( \tfrac{5}{2} \right)^p\right]^{\tfrac{p-2}{p}}  <  g(p-1)-\tfrac{1}{3} \left(\tfrac{1}{2}\right)^{p-2}.
\]
To see that this is true, we rewrite it as
\[
    \tfrac{3\cdot 2^{2/p} - 5}{15} \left( \tfrac{5}{2} \right)^{p-1}  < \tfrac{1}{3} - \tfrac{1}{3} \left( \tfrac{1}{2} \right)^{p-1}
\]
which holds since for $p = 2$ the two sides coincide, the right-hand side is strictly increasing, and the left-hand side strictly decreasing (the latter can be verified directly by studying the sign of the derivative).%
\footnote{The derivative of the left-hand side is $\left\{\left(\frac{2}{5}\right)^{1-p} \left[4^{1/p} \left(3 p^2 \log \left(\frac{5}{2}\right)-2 \log (8)\right)-5 p^2 \log \left(\frac{5}{2}\right)\right]\right\}/(15 p^2)$.
When $p > \log(4) / \log(5/3) \approx 2.71$, then $4^{1/p} \cdot 3 < 5$ and the expression is negative. For $ 2 < p < 3$,
$4^{1/p} \left(3 p^2 \log \left(\frac{5}{2}\right)-2 \log (8)\right) -5 p^2 \log \left(\frac{5}{2}\right) < 2\left(3 p^2 \log \left(\frac{5}{2}\right)-2 \log (8)\right) -5 p^2 \log \left(\frac{5}{2}\right) = p^2 \log \left(\frac{5}{2}\right) - 4 \log(8) < 9 \log \left(\frac{5}{2}\right) - 4 \log(8) \approx -0.07$.
}

In summary, for $p > 2$ we have
\[
    g(p)^{\tfrac{p-2}{p}} < \tfrac{2^{2/p}}{5} \left( \tfrac{5}{2} \right)^{p-1} < g(p - 1) - \left( \tfrac{1}{2} \right)^{p-2}
\]
and the equation \eqref{eq:pginv_row_withlam} cannot be consistent.

\item

For $0 < p < 2$, the following holds:
\[
    g(p)^{(p-2)/p} > 2^{p-2}.
\]
To see this, note that it is equivalent to (remember that $p - 2 < 0$ so raising both sides to the power $1/(p - 2)$ reverses the inequality)
\[
    1 + \left( \tfrac{5}{2} \right)^p + \left( \tfrac{1}{2} \right)^p <  3\cdot 2^p,
\]
or
\[
    \left( \tfrac{1}{2} \right)^p + \left( \tfrac{5}{4} \right)^p + \left( \tfrac{1}{4} \right)^p < 3,
\]
which holds since $p \mapsto \left( \tfrac{1}{2} \right)^p + \left( \tfrac{5}{4} \right)^p + \left( \tfrac{1}{4} \right)^p$ is convex on $\R$, equal to 3 for $p = 0$, and equal to $\frac{15}{8} < 3$ for $p = 2$. On the other hand, we have that
\[
    2^{p-2} > g(p-1) - \tfrac{1}{3}\left(\tfrac{1}{2}\right)^{p-2}
\]
for $0 < p < 2$ and that the equation holds for $p = 0$. This follows since the two sides coincide for $p = 2$ and since the function
\[
    a(p) \bydef 2^{p-2} - \left[ g(p-1) - \tfrac{1}{3} \left(\tfrac{1}{2}\right)^{p-2}\right]
\]
is decreasing on $[0,\ 2]$. Indeed
\[
    a'(p) = \tfrac{1}{15} 2^{-p-2} \underbrace{\left[4^p \cdot 15 \log(2) - 5^p \cdot 8\log \left( \tfrac{5}{2} \right) - 40 \log(2) \right]}_{\bydef b(p)},
\]
and
\[
    b'(p) = 15\cdot 4^p \log (2) \log (4)-8\cdot 5^p \log \left(\tfrac{5}{2}\right) \log (5).
\]
It can be verified that $b(p)$ has a single critical point on $[0, 2]$ at $p_1 \bydef \log \left( \tfrac{15 \log (2) \log (4)}{8 \log \left(\tfrac{5}{2}\right) \log (5)}\right) / \log \left( \tfrac{5}{4} \right)$ which is a maximum, and that $b(p_1) \approx -22.72$. Hence, $a'(p) < 0$ on $(0, 2)$.
\end{itemize}
In conclusion, for \eqref{eq:PGinvRowM2N3} to vanish when $q=2$, it must hold that $p=2$.

\item For any $3 \leq m < n$, one can find $\mP_{1} \in \R^{(n-3)\times(n-3)}$, a projection matrix of rank $m-2 \geq 1$ with no zero column, and build $\mP' = \blockdiag(\mP,\mP_{1})$ with the rank-2 matrix $\mP \in \R^{3 \times 3}$ we have just built. The same reasoning as before leads to a construction so that for $1 \leq p < \infty$, $q=2$, $\mA_{5}^{\dag} \in \pginv{\rowpq{p}{q}}{\mA_{5}} \Longleftrightarrow p=2$.
\end{itemize}     

\end{enumerate}
\end{proof}

\bibliographystyle{IEEEtranSA}

\bibliography{pseudo,pseudo_supp}

\end{document}